\keywords{Comparator automata, aggregate functions, discounted-sum, limit-average, quantitative inclusion, \cct{PSPACE-complete}, \texorpdfstring{$\omega$}{omega}-regular, integer discount-factor}
\renewcommand*{\cc}[1]{\mathsf{#1}}
\newcommand*{\N}{\mathbb{N}}
\newcommand*{\Q}{\mathbb{Q}}
\renewcommand*{\L}{\mathcal{L}}
\newcommand*{\I}{\mathcal{I}}
\newcommand*{\A}{\mathcal{A}}
\newcommand*{\G}{\mathcal{G}}
\newcommand*{\Obs}{\mathit{O}}
\newcommand*{\Statess}{\mathit{S}}
\newcommand*{\Final}{\mathcal{F}}
\newcommand*{\StartState}{{\mathit{Init}}}
\newcommand*{\AcceptingStates}{\Final}
\newcommand*{\LS}[1]{\mathsf{LimSup(#1)}}
\newcommand*{\LSAut}{\A_{\succ \mathit{LS}}}
\newcommand*{\DSum}[2]{\mathit{DS}({#1}, {#2})}
\newcommand*{\DSsucceqFord}[1]{#1}
\newcommand*{\DSsucceqFordf}[1]{\succ_{DS(d)}}
\newcommand*{\MaxC}{\mathit{maxC}}
\newcommand*{\MaxX}{\mathit{maxX}}
\newcommand*{\DSumDiff}[1]{\mathit{DS}^{-}(B, A, d, #1)}
\newcommand*{\Res}{\mathit{Res}}
\newcommand*{\floor}[1]{\lfloor #1 \rfloor}
\newcommand*{\ceil}[1]{\lceil #1 \rceil}
\newcommand*{\CSum}{\mathit{CSum}}
\renewcommand*{\lim}[3]{\text{lim}_{#1 \rightarrow #2} #3}
\renewcommand*{\liminf}[3]{\text{lim inf}_{#1 \rightarrow #2} #3}
\renewcommand*{\limsup}[3]{\text{lim sup}_{#1 \rightarrow #2} #3}
\newcommand*{\LA}[1]{\mathsf{LimAvg}(#1)}
\newcommand*{\PLA}[1]{\mathsf{PrefixAvg}(#1)}
\newcommand*{\LASup}[1]{\mathsf{LimSupAvg}(#1)}
\newcommand*{\LAInf}[1]{\mathsf{LimInfAvg}(#1)}
\newcommand*{\Av}[1]{\mathsf{Avg}(#1)}
\newcommand*{\PASuc}{\mathsf{PA}}
\newcommand*{\Sum}[1]{\mathsf{Sum}(#1)}
\newcommand*{\ExistFin}[3]{\exists^{f} #1,  \Sum{#2[0,#1-1]} \geq \Sum{#3[0,#1-1]}} 
\newcommand*{\ExistInf}[3]{\exists^{\infty} #1,  \Sum{#2[0,#1-1]} > \Sum{#3[0,#1-1]}} 
\newcommand*{\less}{\mathit{DomProof}}
\newcommand*{\first}{\mathit{Dom}}
\newcommand*{\DSInclusion}{\mathsf{InclusionReg}}
\newcommand*{\AugmentWtAndLabel}{\mathsf{AugmentWtAndLabel}}
\newcommand*{\MakeProduct}{\mathsf{MakeProduct}}
\newcommand*{\Intersect}{\mathsf{Intersect}}
\newcommand*{\Project}{\mathsf{FirstProject}}
\newcommand*{\SepRuns}{\mathsf{UniqueId}}
\newcommand*{\CompareRuns}{\mathsf{Compare}}
\newcommand*{\EnsureRuns}{\mathsf{DimEnsure}}
\newcommand*{\cct}[1]{\textsf{#1}}
\renewcommand{\Re}{\mathbb{R}}
\begin{document}
\title[Comparator automata]{Comparator automata in quantitative verification}
\titlecomment{An earlier version of this paper has appeared at FoSSaCS 2018~\cite{BCVFoSSaCS18}}

\author[S.~Bansal]{Suguman Bansal\lmcsorcid{https://orcid.org/0000-0002-0405-073X}}[a]	
\address{University of Pennsylvania, Philadelphia, USA}	
\email{suguman@seas.upenn.edu}  

\author[S.~Chaudhuri]{Swarat Chaudhuri\lmcsorcid{https://orcid.org/0000-0002-6859-1391}}[b]	
\address{University of Texas, Austin, USA}	
\email{swarat@cs.utexas.edu}  

\author[M.Y.~Vardi]{Moshe Y. Vardi\lmcsorcid{https://orcid.org/0000-0002-0661-5773}}[c]	
\address{Rice University, Houston, USA}	
\email{vardi@cs.rice.edu}  

\begin{abstract}
  The notion of comparison between system runs is fundamental in formal verification.  This concept is implicitly present in the verification of qualitative systems, and is more pronounced in the verification of quantitative systems.  In this work, we identify a novel mode of comparison in quantitative systems: the online comparison of the aggregate values of two sequences of quantitative weights. This notion is embodied by \emph{comparator automata} (\emph{comparators}, in short), a new class of automata that read two infinite sequences of weights synchronously and relate their aggregate values.

We show that {aggregate functions} that can be represented with B\"uchi automaton result in
comparators that are finite-state and accept by the B\"uchi condition as well. Such \emph{$\omega$-regular comparators} further lead to generic algorithms for a number of well-studied problems, including the quantitative inclusion and winning strategies in quantitative graph games with incomplete information, as well as related non-decision problems, such as obtaining a finite representation of all counterexamples in the quantitative inclusion problem.

We study comparators for two aggregate functions: discounted-sum and limit-average.  We prove that the discounted-sum comparator is $\omega$-regular iff the discount-factor is an integer.
Not every aggregate function, however, has an $\omega$-regular comparator. Specifically, we show that the language of sequence-pairs for which limit-average aggregates exist is neither $\omega$-regular nor $\omega$-context-free.  Given this result, we introduce the notion of \emph{prefix-average} as a relaxation of limit-average aggregation, and show that it admits $\omega$-context-free comparators i.e.\ comparator automata expressed by B\"uchi pushdown automata.

\end{abstract}

\maketitle

\section{Introduction}%
\label{Sec:Intro}
Many classic questions in formal methods can be seen as involving \emph{comparisons} between different system runs or inputs.  Consider the problem of verifying if a system $S$ satisfies a linear-time temporal property $P$. Traditionally, this problem is phrased language-theoretically: $S$ and $P$ are interpreted as sets of (infinite) words, and $S$ is determined to satisfy $P$ if $S \subseteq P$. The problem, however, can also be framed in terms of a \emph{comparison} between words in $S$ and $P$.
Suppose a word $w$ is assigned a weight of 1 if it belongs to the language of the system or property, and 0 otherwise. Then determining if $S \subseteq P$ amounts to checking whether the weight of every word in $S$ is less than or equal to its weight in $P$~\cite{baier2008principles}.

The need for such a formulation is clearer in quantitative systems, in which every run of a word is associated with a sequence of (rational-valued) weights. The weight of a run is given by \emph{aggregate function} $f: \mathbb{Q}^{\omega}\rightarrow \mathbb{R}$, which
returns the real-valued \emph{aggregate value} of the run's weight sequence.
The weight of a word is given by the supremum or infimum of the weight of all its runs.
Common examples of aggregate functions include discounted-sum and limit-average.

In a well-studied class of problems involving quantitative systems, the objective is to check if the aggregate value of words of a system exceed a constant threshold value~\cite{de2004model,de2004linear,degorre2010energy}.
This is a natural generalization of emptiness problems in  qualitative systems.
Known solutions to the problem involve arithmetic reasoning via linear programming and graph algorithms such as negative-weight cycle detection,  computation of maximum weight of cycles etc~\cite{andersson2006improved,karp1978characterization}.

A more general notion of comparison relates  aggregate values of two weight sequences.  Such a notion arises in the \emph{quantitative inclusion problem} for weighted automata~\cite{AlmagorBokerKupferman},
where the goal is to determine whether the weight of words in one weighted automaton is less than that in another.
Here it is necessary to compare the aggregate value along runs between the two automata.
Approaches based on  arithmetic reasoning do not, however, generalize to solving such problems.
In fact, the known solution to discounted-sum inclusion with integer discount-factor combines linear programming with a \emph{specialized} subset-construction-based determinization step, rendering an \cct{EXPTIME} algorithm~\cite{andersson2006improved,boker2014exact}.
Yet, this approach does not match the \cct{PSPACE} lower bound for discounted-sum inclusion.

In this paper, we present an automata-theoretic formulation of this form of comparison between weighted sequences. Specifically, we introduce \emph{comparator automata} (\emph{comparators}, in short), a class of automata that read pairs of infinite weight sequences synchronously, and compare their aggregate values in an online manner. While comparisons between weight sequences happen implicitly in prior approaches to quantitative systems, comparator automata make these comparisons explicit. We show that this has many benefits, including generic algorithms for a large class of quantitative reasoning problems, as well as a direct solution to the problem of discounted-sum inclusion that also closes its complexity gap.

A \emph{comparator for aggregate function $f$ for relation $R$} is an automaton that accepts a pair $(A,B)$ of sequences  of bounded natural numbers iff $f(A)$ $R$ $f(B)$, where $R$ is an inequality relation ($>$, $<$, $\geq$, $\leq$, $\neq$) or the equality relation $=$.
A comparator could be finite-state or  (pushdown) infinite-state.
This paper studies such comparators.

A comparator is \emph{$\omega$-regular} if it is finite-state and accepts by the B\"uchi condition. We relate $\omega$-regular comparators to \emph{$\omega$-regular aggregate functions}~\cite{chaudhuri2013regular}, and show that $\omega$-regular aggregate-functions entail $\omega$-regular comparators.
However, the other direction is still open: Does an $\omega$-regular comparator for an aggregate function and a relation imply that the aggregate function is also $\omega$-regular? Furthermore,
we show that $\omega$-regular comparators lead to generic algorithms for a number of well-studied problems including the quantitative inclusion problem, and in solving quantitative games with incomplete information.
Our algorithm yields \cct{PSPACE}-completeness of quantitative inclusion when the $\omega$-regular comparator is provided.
The same algorithm extends to obtaining finite-state representations of counterexample words in inclusion.

Next, we show that the discounted-sum aggregation function admits an $\omega$-regular comparator for all relations $R $ iff the discount-factor $d>1$ is an integer. We use this result to prove that discounted-sum aggregate function for discount-factor $d>1$ is $\omega$-regular iff $d$ is an integer.
Furthermore, we use properties of $\omega$-regular comparators to conclude that the discounted-sum inclusion is \cct{PSPACE}-complete, hence resolving the complexity gap (under a unary representation of numbers).

Finally, we investigate the limit-average comparator. Since limit-average is only defined for sequences in which the average of prefixes converge, limit-average comparison is not well-defined.
We show that even a B\"uchi pushdown automaton cannot separate sequences for which
 limit-average exists  from those for which it does not.
Hence, we introduce the novel notion of \emph{prefix-average comparison} as a relaxation of limit-average comparison. We show that the prefix-average comparator admits a comparator that is  $\omega$-context-free, i.e., given by a B\"uchi pushdown automaton,
and we discuss the utility of this characterization.

This paper is organized as follows: Preliminaries are given in \textsection~\ref{Sec:Prelims}.
Comparator automata are formally defined in \textsection~\ref{Sec:Comparator}.
The connections between $\omega$-regular aggregate functions and $\omega$-regular comparators is discussed in Section~\ref{Sec:RegularAggFunctions}.
Generic algorithms for $\omega$-regular comparators are discussed in \textsection~\ref{Sec:Inclusion}-\ref{Sec:GraphGames}.
\textsection~\ref{Sec:DiscountedSum} discusses discounted-sum aggregate function and its comparators with non-integer rational discount-factors (\textsection~\ref{Sec:RationalDF}) and integer discount-factors (\textsection~\ref{Sec:IntegerDF}).
The construction and properties of prefix-average comparator are given in \textsection~\ref{Sec:LA}.
We conclude with future directions in \textsection~\ref{Sec:Conclusion}.

\subsection{Related work}%
\label{Sec:RelatedWork}
The notion of comparison has been widely studied in quantitative settings. Here we mention only a few of them. Such aggregate-function based notions appear in weighted automata~\cite{AlmagorBokerKupferman,droste2009handbook}, quantitative games including mean-payoff and energy games~\cite{degorre2010energy}, discounted-payoff games~\cite{andersen2013fast,andersson2006improved}, in systems regulating cost, memory consumption, power consumption, verification of quantitative temporal properties~\cite{de2004model,de2004linear},  and others. Common solution approaches include graph algorithms such as weight of cycles or presence of cycle~\cite{karp1978characterization}, linear-programming-based approaches,  fixed-point-based approaches~\cite{chatterjee2010energy}, and the like.
The choice of approach for a problem typically depends on the underlying aggregate function. In contrast, in this work we present an automata-theoretic approach that unifies solution approaches to problems on  different aggregate functions. We identify a class of aggregate functions, ones that have an $\omega$-regular comparator, and present generic algorithms for some of these problems.

While work on finite-representations of counterexamples and witnesses in the qualitative setting is known~\cite{baier2008principles}, we are not aware of such work in the quantitative verification domain.
This work can be interpreted as automata-theoretic  arithmetic, which has been explored in regular real analysis~\cite{chaudhuri2013regular}.

\section{Preliminaries}%
\label{Sec:Prelims}
Let $\Sigma$ be a finite set of alphabet. The set of finite and infinite words over $\Sigma$ is denoted by $\Sigma^*$ and $\Sigma^\omega$, respectively. An \emph{aggregate function} $f: \mathbb{Q}^\omega \rightarrow \Re$ takes the aggregate of an infinite-length weight sequence.

\begin{defi}[B\"uchi automaton~\cite{thomas2002automata} ]

A (finite-state) \emph{B\"uchi automaton} is a tuple
  $\A = (\Statess$, $\Sigma$, $\delta$, $\StartState$, $\Final)$, where
  $ \Statess $ is a finite set of \emph{states}, $ \Sigma $ is a finite \emph{input alphabet},   $ \delta \subseteq (\Statess \times \Sigma \times \Statess) $ is the   \emph{transition relation}, $ \StartState \subseteq \Statess $ is the set of \emph{initial states}, and $ \Final \subseteq \Statess $ is the set of \emph{accepting states}.
\end{defi}
A B\"uchi automaton is \emph{deterministic} if for all states $ s $ and
inputs $a$, $ |\{s'|(s, a, s') \in \delta \textrm{ for some $s'$} \}|
\leq 1 $ and $|\StartState|=1$. Otherwise, it is \emph{nondeterministic}.
A B\"uchi automaton is \emph{complete} if for all states $ s $ and
inputs $a$, $ |\{s'|(s, a, s') \in \delta \textrm{ for some $s'$} \}| \geq 1$.
For a word $ w = w_0w_1\dots \in \Sigma^{\omega} $, a \emph{run} $ \rho$ of $ w $ is a sequence of states $s_0s_1\dots$ s.t.
$ s_0 \in \StartState$, and $ \tau_i =(s_i, w_i, s_{i+1}) \in \delta $ for all $i$.
Let $ \mathit{inf}(\rho) $ denote the set of states that occur infinitely
often in run ${\rho}$.
A run $\rho$ is an \emph{accepting run} if $ \mathit{inf}(\rho)\cap \Final \neq \emptyset $. A word $w$ is an \emph{accepting word} if it has an accepting run.
B\"uchi automata are  closed under set-theoretic union,
intersection, and complementation~\cite{thomas2002automata}. Languages accepted by these automata are called \emph{$ \omega $-regular   languages}.

 \subsection*{Reals over \texorpdfstring{$\omega$}{omega}-words\texorpdfstring{~\cite{chaudhuri2013regular}}{}}

  Given an integer base $\beta\geq 2$, its \emph{digit set} is $\mathsf{Digit}(\beta) =\{0,\dots, \beta-1\}$. Let $x \in \Re$, then there exist unique words $\mathsf{Int}(x, \beta) = z_0z_1\dots \in \mathsf{Digit}(\beta)^*\cdot 0^\omega $ and
  $\mathsf{Frac}(x, \beta)=f_0f_1\dots \in \mathsf{Digit}(\beta)^\omega \setminus \mathsf{Digit}(\beta)^*\cdot (\beta-1)^\omega$ such that $|x| = \Sigma_{i=0}^\infty \beta^i \cdot z_i + \Sigma_{i=0}^\infty \frac{f_i}{\beta^i}$.
 Thus, $z_i$ and $f_i$ are respectively the $i$-th
 least significant digit in the base $\beta$ representation of the
 integer part of $x$, and the $i$-th most significant digit in
 the base $\beta$ representation of the fractional part of $x$.
 Then, a real-number $x \in \Re$ in base $\beta$ is represented by $\mathsf{rep}(x, \beta) = \mathsf{sign} \cdot  (\mathsf{Int}(x, \beta),\mathsf{Frac}(x, \beta) )$, where $\mathsf{sign} = +$ if $x \geq 0$, $\mathsf{sign} = -$ if $x<0$, and $(\mathsf{Int}(x, \beta),\mathsf{Frac}(x, \beta) )$ is the interleaved word of $\mathsf{Int}(x, \beta)$ and $\mathsf{Frac}(x, \beta)$.
 Clearly, $x = \mathsf{sign}\cdot |x| = \mathsf{sign}\cdot (\Sigma_{i=0}^\infty \beta^i \cdot z_i + \Sigma_{i=0}^\infty \frac{f_i}{\beta^i})$.
 For all integer $\beta\geq 2$, we denote the alphabet of representation of real-numbers in base $\beta$ by $\mathsf{AlphaRep}(\beta) = \{+,-\}\cup \mathsf{Digit}(\beta)\times\mathsf{Digit}(\beta)$.

 We adopt the definitions of function automata and regular functions~\cite{chaudhuri2013regular} w.r.t.\ aggregate functions as follows:
 \begin{defi}[Aggregate function automaton, $\omega$-Regular aggregate function]%
 	\label{def:aggregateregular}
 	Let $\Sigma$ be a finite set, and $\beta\geq 2$ be an integer-valued base.
 	A B\"uchi automaton $\A$ over alphabet $\Sigma\times\mathsf{AlphaRep}(\beta)$ is an \emph{aggregate function automaton of type $\Sigma^\omega\rightarrow\Re$} if for all $A \in \Sigma^\omega$, there exists exactly one $x \in \Re$ such that $(A,\mathsf{rep}(x,\beta)) \in \L(\A)$.

 \end{defi}
 Equivalently, the language could be represented by a Parity automaton instead of a B\"uchi automaton.

 $\Sigma$ and $\mathsf{AlphaRep}(\beta)$ are the \emph{input} and \emph{output} alphabets, respectively. An aggregate function is an arbitrary function $f:\Sigma^\omega\rightarrow \Re$. An aggregate function $f:\Sigma^\omega\rightarrow \Re$ is said to be \emph{$\omega$-regular} under integer base $\beta \geq 2$ if there exists an aggregate function automaton $\A$ over alphabet $\Sigma\times\mathsf{AlphaRep}(\beta)$ such that for all sequences $A \in \Sigma^\omega$ and  $x\in \Re$, $f(A) = x$ iff $(A,\mathsf{rep}(x,\beta)) \in L(\A)$.

\begin{defi}[Weighted automaton~\cite{chatterjee2010quantitative,mohri2009weighted}]
A \emph{weighted automaton} over infinite words is a tuple $\A= (\mathcal{M},  \gamma, f)$, where $\mathcal{M} = (\Statess, \Sigma, \delta, \StartState, \Statess) $ is a B\"uchi automaton with all states as accepting, $\gamma: \delta \rightarrow \Q$ is a \emph{weight function}, and $f: \Q \rightarrow \Re$ is the \emph{aggregate function}.
\end{defi}
\emph{Words} and \emph{runs} in weighted automata are defined as they are in B\"uchi automata.
The \emph{weight-sequence} of run $\rho = s_0 s_1 \dots$ of word $w = w_0w_1\dots$ is given by $wt_{\rho} = n_0 n_1 n_2\dots$ where $n_i = \gamma(s_i, w_i, s_{i+1})$ for all $i$.
The \emph{weight of a run}  $\rho$, denoted by $f(\rho)$, is given by $f(wt_{\rho})$.
Here the \emph{weight of a word} $w \in \Sigma^{\omega}$ in weighted automata is defined as $wt_{\A}(w) = sup \{f(\rho) | \rho$ is a run of $w$ in $\A \}$.
In general, weight of a word can also be defined as the infimum of the weight of all its runs. Note, an automaton need not accept every word, even though all its states are accepting, since it need not be complete.
By convention, if a word $w \notin \L(\mathcal{M})$ its weight $wt_\A(w) = -\infty$.


\begin{defi}[Quantitative inclusion]
Let $P$ and $Q$ be  weighted $\omega$-automata with the \emph{same} aggregate function $f$.
The \emph{strict quantitative inclusion problem}, denoted by  $P \subset_f Q$,  asks whether  for all words $w \in \Sigma^{\omega}$, $wt_{P}(w) < wt_{Q}(w)$.
The \emph{non-strict quantitative inclusion problem},  denoted by  $P \subseteq_f Q$,  asks whether  for all words $w \in \Sigma^{\omega}$, $wt_{P}(w) \leq wt_{Q}(w)$.

\end{defi}
 Quantitative inclusion, strict and non-strict, is \cct{PSPACE}-complete for limsup and liminf~\cite{chatterjee2010quantitative}. Non-strict quantitative inclusion is undecidable for limit-average~\cite{degorre2010energy}, while decidability of the strict variant is still open.
 For discounted-sum with integer discount-factor it is in \cct{EXPTIME}~\cite{boker2014exact,chatterjee2010quantitative}, and decidability is unknown for rational discount-factors.

\begin{defi}[Incomplete-information quantitative games]
An \emph{incomplete-information quantitative game} is a tuple $\G = (S,s_{\mathcal{I}} ,\Obs, \Sigma, \delta, \gamma, f)$, where $S$, $\Obs$, $\Sigma$ are sets of \emph{states}, \emph{observations}, and \emph{actions}, respectively, $s_{\mathcal{I}} \in S$ is the \emph{initial state}, $\delta \subseteq S \times \Sigma \times S$ is the \emph{transition relation},  $\gamma : S \rightarrow \N\times\N$ is the \emph{weight function}, and $f: \N^{\omega}\rightarrow \mathbb{R} $ is the \emph{aggregate function}.
\end{defi}
The transition relation $\delta$ is \emph{complete}, i.e., for all states $p$ and actions $a $, there exists a state $q $ s.t. $(p, a, q)\in \delta$.
A \emph{play} $\rho$ is a sequence $s_0a_0s_1a_1\dots$, where $\tau_i=(s_i, a_i, s_{i+1}) \in \delta$.
The \emph{observation of state} $s $, by abuse of notation, is denoted by $\Obs(s) \in \Obs$.
The \emph{observed play} $o_{\rho}$ of $\rho$ is the sequence
$o_0a_0o_1aa_1\dots$, where $o_i = \Obs(s_i)$.
Player $P_0$ has incomplete information about the game $\G$; it only perceives the observation play $o_{\rho}$. Player $P_1$ receives full information and witnesses play $\rho$.
Plays begin in the initial state $s_0 = s_\mathcal{I}$. For $i\geq 0$, Player $P_0$ selects action $a_i$. Next, player $P_1$ selects the state $s_{i+1}$, such that $(s_i, a_i, s_{i+1}) \in \delta$.
The \emph{weight of state} $s$ is the pair of payoffs $\gamma(s) =  (\gamma(s)_0, \gamma(s)_1)$.
The \emph{weight sequence} $wt_i$ of player $P_i$ along $\rho$ is given by $\gamma(s_0)_i \gamma(s_1)_i  \dots$,
and its payoff from $\rho$ is given by $f(wt_i)$ for aggregate function $f$, denoted by $f(\rho_i)$, for simplicity.
A play on which a player receives a greater payoff than the other player is said to be a \emph{winning play} for the player.
 A strategy for player $P_0$ is given by a function $\alpha : \Obs^* \rightarrow \Sigma$ since it only sees observations.  Player $P_0$ agrees with strategy $\alpha$ if for all $i$, $a_i = \alpha(o_0\dots o_i)$. A strategy $\alpha$ is said to be a  \emph{winning strategy} for player $P_0$ if all plays agreeing with $\alpha$ are winning plays for $P_0$.

  \begin{defi}[B\"uchi pushdown automaton~\cite{cohen1977theory}]
     A \emph{B\"uchi pushdown automaton (B\"uchi PDA)} is a tuple
    $\A = (\Statess, \Sigma, \Gamma, \delta, \StartState, Z_0, \Final)$, where $\Statess$, $\Sigma$, $\Gamma$, and $\Final$ are finite sets of \emph{states}, \emph{input alphabet},  \emph{pushdown alphabet} and \emph{accepting states}, respectively.
    $ \delta \subseteq (\Statess \times \Gamma\times (\Sigma\cup \{\epsilon\}) \times \Statess \times \Gamma) $ is the
    \emph{transition relation}, $ \StartState \subseteq \Statess $ is a
    set of \emph{initial states}, $Z_0 \in \Gamma$ is the \emph{start symbol}.
  \end{defi}
 A \emph{run} $\rho$ on a word $ w = w_0 w_1\dots\in\Sigma^{\omega} $ of a B\"uchi PDA $\A$ is a sequence of configurations $(s_0,\gamma_0),(s_1,\gamma_1)\dots$ satisfying (1) $ s_0 \in \StartState$, $\gamma_0 = Z_0$, and (2) ($s_i, \gamma_i, w_i, s_{i+1}, \gamma_{i+1}) \in\delta$ for   all $i$.
 B\"uchi PDA consists of a \emph{stack}, elements of which are the tokens $\Gamma$, and initial element $Z_0$.
Transitions \emph{push}  or \emph{pop} token(s) to/from the top of the stack.
  Let $ \mathit{inf}(\rho) $ be the set of states that occur infinitely
  often in state sequence  $s_0s_1\dots$ of run $\rho$.
 A run $\rho$ is  an \emph{accepting run} in B\"uchi PDA if $ \mathit{inf}(\rho)\cap \Final \neq \emptyset $.
  A word $w$ is an \emph{accepting word} if it has an accepting run.
Languages accepted by B\"uchi PDA are called \emph{$ \omega $-context-free languages} ({$\omega$-CFL}).

\subsection*{Notation and Terminology}
For an infinite sequence $A = (a_0, a_1, \dots)$, $A[i]$ denotes its $i$-th element, and $A[n,m]$ denotes the finite word $A[n]A[n+1]\dots A[m]$.
An infinite weight-sequence $A$ is said to be \emph{bounded} if there exists a value $b \in \mathbb{Q}$ such that $|A[i]| < b$ for all $i \geq 0$.
Abusing notation, we write $ w \in \A $ and $\rho \in \A$ if $w$ and $\rho$ are an accepting word and an accepting run of $\A$ respectively. The
Symbol $\cdot$ is used to denote both multiplication of real numbers and concatenation of sequences. The meaning will be clear in context.


\section{Comparator automata}%
\label{Sec:Comparator}
\emph{Comparator automata} (often abbreviated as \emph{comparators}) are a class of automata that can read pairs of weight sequences synchronously and  establish an equality or inequality relationship between these sequences.
Formally, we define:
\begin{defi}[Comparator automata]
	Let $\Sigma$ be a {finite set} of rational numbers, and $f:\Q^{\omega}\rightarrow \Re$ denote an aggregate function.
	A \emph{comparator automaton for aggregate function $f$} with inequality or equality relation $R \in \{\leq, <, \geq, >, =, \neq\}$ is an automaton over the alphabet $\Sigma\times\Sigma$  that accepts a pair $(A,B)$ of (infinite) weight sequences  iff $f(A)$ $R$ $f(B)$.
\end{defi}

From now on, unless mentioned otherwise, we assume that all weight sequences are bounded, natural number sequences.
The boundedness assumption is justified since  the set of weights forming the alphabet of a comparator is bounded.
For all aggregate functions considered in this paper, the result of comparison of weight sequences is preserved by a uniform linear transformation that converts rational-valued weights into natural numbers; justifying the natural number assumption.

When the comparator for an aggregate function and a relation is a B\"uchi automaton, we call it an \emph{$\omega$-regular comparator}. Likewise, when the comparator is a B\"uchi pushdown automaton, we call it an \emph{$\omega$-context-free comparator}.

\begin{thm}%
	\label{lemm:allregular}
	Let $\Sigma$ be a finite alphabet of natural numbers. Let $f : \N^{\omega}\rightarrow \Re$ be an aggregate function. If the comparator automata for aggregate function $f$ for any one inequality relation $R\in\{\leq, < , \geq, >\}$ is $\omega$-regular, then the comparator for $f$ is $\omega$-regular for all relations in $ \{\leq, <,\geq, >, = , \neq\}$.
\end{thm}
\begin{proof}[Proof sketch]
	Suppose the comparator automata $\A$ for relation $\leq$ is $\omega$-regular for aggregate function $f$. Then the pair of weight sequences $(A, B)\in \A$ iff $f(A) \leq f(B)$ holds. Since $f(A) \leq f(B)$ iff $f(B) \geq f(A)$ the automaton obtained by altering transition $s\xrightarrow{(a,b)} t$ in $\A$ to transition $s \xrightarrow{(b,a)} t$ is the $\omega$-regular comparator automata  for relation $\geq$.
	The $\omega$-regular comparator for $=$ can be obtained by taking the intersection of the comparator for $\leq$ and $\geq$ since B\"uchi automata are closed under intersection.
	Finally, since B\"uchi automata are also closed under complementation, we get that the comparator automata for the other three relations, namely $<$,$>$,$\neq$, is also $\omega$-regular.
\end{proof}
Later, we see that discounted-sum comparator is $\omega$-regular (\textsection~\ref{Sec:DiscountedSum}) and prefix-average comparator with $\geq$ (or $\leq$) is $\omega$-context-free (\textsection~\ref{Sec:LA}).

\subsection*{Limsup comparator}

\renewcommand{\max}{\mu}
We explain comparators through an example.  The \emph{limit supremum} (limsup, in short) of a bounded, integer sequence $A$, denoted by $ \LS{A}$, is the largest integer that appears infinitely often in $A$.
The \emph{limsup comparator for relation $\geq$} is a B\"uchi automaton that accepts the pair $(A,B)$ of sequences  iff  $\LS{A}\geq \LS{B} $.

The working of the limsup comparator for relation $\geq$ is based on non-deterministically guessing the limsup of sequences $A$ and $B$, and then verifying that $\LS{A}\geq \LS{B}$.
Formal construction of the limsup comparator is given here.
Suppose all sequences are natural number sequences, bounded by $\mu$.
The limsup comparator is the B\"uchi automaton $ \LSAut = (\Statess, \Sigma, \delta, \StartState, \Final) $ where,
\begin{itemize}

	\item $ \Statess = \{s\}  \cup \{s_0, s_1 \dots, s_{ \max}\} \cup \{f_0, f_1 \dots, f_{ \max}\}$

	\item $ \Sigma = \{(a,b) : 0 \leq a, b \leq \max \} $ where $ a $ and $ b $ are integers.

	\item $\delta \subseteq \Statess\times\Sigma\times \Statess$ is defined as follows:
	\begin{enumerate}
		\item Transitions from start state $ s $:
		$ (s ,(a,b), p) $ for all $(a,b)\in \Sigma$, and for all $p \in \{s\} \cup \{f_0, f_1, \dots, f_{\max}\}$.

		\item  Transitions between $f_k$ and $s_k$ for each $k$:
		\begin{enumerate}[label = \roman*]
			\item $(f_k, \alpha , f_k)$ for $\alpha \in \{k\} \times \{0,1, \dots k\}$.
			\item $(f_k, \alpha , s_k)$ for $\alpha \in \{0,1,\dots k-1\} \times \{0,1, \dots k\}$.
			\item $(s_k, \alpha , s_k)$ for $\alpha \in \{0,1,\dots k-1\} \times \{0,1, \dots k\}$.
			\item $(s_k, \alpha , f_k)$ for $\alpha \in \{k\} \times \{0,1, \dots k\}$.
		\end{enumerate}

	\end{enumerate}

	\item $ \StartState = \{s\} $

	\item $ \Final =  \{f_0, f_1 \dots, f_{ \max}\}$

\end{itemize}

\noindent
Fig.~\ref{Fig:LSk} illustrates the basic building block of the limsup comparator for relation $\geq$.  For $k \in \{0\dots, \mu\}$, Fig~\ref{Fig:LSk} represents the segment of the limsup comparator consisting of the initial state $s$, accepting state $f_k$ and state $s_k$. We denote this by B\"uchi automaton $\A_k$. We show that automaton $\A_k$ accepts pair $ (A,B)$ of number sequences  iff  $\LS{A} = k$, and $\LS{B}\leq k$, for integer $k$.
 (Lemma~\ref{Lemma:LSFinalState}).

\begin{figure}
	\centering

	\begin{tikzpicture}[shorten >=1pt,node distance=2.5cm,on grid,auto]
	\node[state,initial] (q_0)   {\footnotesize{$s$}};
	\node[state, accepting] (q_1)  [right of = q_0] {\footnotesize{$f_k$}};
	\node[state] (q_2) [right of=q_1] {\footnotesize{$s_k$}};
	\path[->]
	(q_0)   edge [loop above] node {\footnotesize{$(*,*)$}} ()
	edge node  {\footnotesize{$ (k, \leq k) $}} (q_1)
	(q_1)  edge [loop above] node {\footnotesize{$ (k,\leq k) $}} ()
	edge [bend left = 10]  node {\footnotesize{$(\leq k-1, \leq k)$}} (q_2)
	(q_2)  edge [loop above] node {\footnotesize{$ (\leq k-1,\leq k) $}} ()
	edge [bend left = 10]  node {\footnotesize{$(k, \leq k)$}} (q_1);
	\end{tikzpicture}

	\caption{{State $f_k$ is an accepting state.  Automaton $\A_k$ accepts $(A,B)$ iff $\LS{A}=k$, $\LS{B}\leq k$. $*$ denotes $\{0,1\dots \mu\}$, $\leq m$ denotes $\{0,1\dots, m\}$}}%
	\label{Fig:LSk}

\end{figure}
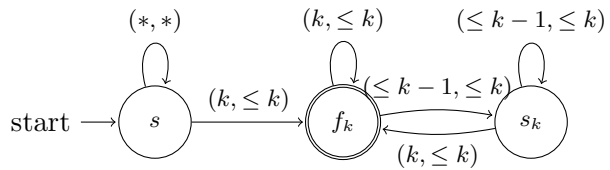

\begin{lem}%
	\label{Lemma:LSFinalState}
	Let $A$ and $B$ be non-negative integer sequences bounded by $\max$.
	B\"uchi automaton $\A_k$ (Fig.~\ref{Fig:LSk}) accepts $(A,B)$ iff $\LS{A} = k$, and $\LS{A}\geq\LS{B}$.
\end{lem}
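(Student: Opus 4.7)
The plan is to prove the iff by direct structural analysis of the three-state automaton, handling each direction separately. I will lean on the observation that state $s$ has only a self-loop and a single outgoing transition to $f_k$, and that once the run leaves $s$ it is trapped in the subset $\{f_k, s_k\}$, both of whose outgoing transitions restrict the symbol read.

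For the forward direction, assume $(A,B)$ is accepted. Any accepting run $\rho$ must leave $s$ at some finite position $N$ (otherwise $\rho$ never visits $f_k$). From inspection of transitions out of $f_k$ and $s_k$, every pair $(A[i],B[i])$ with $i \geq N$ must satisfy $A[i] \leq k$ and $B[i] \leq k$, so $\LS{A} \leq k$ and $\LS{B} \leq k$. Acceptance forces $f_k$ to be visited infinitely often, and every incoming edge of $f_k$ requires $A[i] = k$, so $A[i] = k$ for infinitely many $i$, yielding $\LS{A} \geq k$. Combining the bounds gives $\LS{A} = k \geq \LS{B}$.

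For the reverse direction, assume $\LS{A} = k$ and $\LS{A} \geq \LS{B}$, so $\LS{B} \leq k$ as well. Pick $N$ large enough that $A[i] \leq k$ and $B[i] \leq k$ for all $i \geq N$, using the standard characterization of limsup. Since $\LS{A} = k$, there exists $N' \geq N$ with $A[N'] = k$. I construct a run that takes the $s$-self-loop on positions $0,\dots,N'-1$ (possible because that self-loop reads $(*,*)$), transitions from $s$ to $f_k$ at position $N'$ (legal since $A[N'] = k$ and $B[N'] \leq k$), and thereafter remains in $\{f_k, s_k\}$ by going to $f_k$ whenever $A[i] = k$ and to $s_k$ whenever $A[i] \leq k-1$; both options are available because $B[i] \leq k$ for all $i \geq N'$. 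Because $A[i] = k$ holds for infinitely many $i$, the run visits $f_k$ infinitely often and is accepting.

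The main subtlety, and the step I would verify most carefully, is the reverse direction: I must argue that the tail bounds $A[i] \leq k$ and $B[i] \leq k$ for $i \geq N'$ are simultaneously guaranteed (which is where $\LS{B} \leq k$ is actually used, beyond merely giving the conclusion), and that the nondeterministic choice of precisely when to leave $s$ can always be postponed to a position where $A[N'] = k$. Everything else is essentially a bookkeeping check of the five transitions in Figure~\ref{Fig:LSk}.
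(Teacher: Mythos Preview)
Your proposal is correct and follows essentially the same approach as the paper's proof: both directions are argued by direct inspection of the automaton's transitions, using that every incoming edge of $f_k$ reads $(k,\leq k)$ and that all edges within $\{f_k,s_k\}$ read $(\leq k,\leq k)$. If anything, your reverse direction is slightly more careful than the paper's, since you explicitly choose the departure index $N'\geq N$ with $A[N']=k$ so that the $s\to f_k$ transition is enabled, whereas the paper leaves this implicit.
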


\begin{proof}
Let $(A,B)$ have an accepting run in $\A_k$. We show that $\LS{A} =k \geq \LS{B}$.
The accepting run visits state $f_k$ infinitely often.
Note that all incoming transitions to accepting state $f_k$ occur on alphabet $(k,\leq k)$ while  all transitions between states $f_k$ and $s_k$ occur on alphabet $(\leq k-1, \leq k)$, where $\leq k$ denotes the set $\{0,1,\dots k\}$.
So, the integer $k$ must appear infinitely often in $A$ and all elements occurring infinitely often in $A$ and $B$ are less than or equal to $k$. Therefore, if $(A,B)$ is accepted by $\A_k$ then $\LS{A} =k$, and $\LS{B} \leq k$, and $\LS{A}\geq \LS{B}$.

Conversely, let $\LS{A} = k > \LS{B}$. We prove that $(A,B)$ is accepted by $\A_k$.
	For an integer sequence $A$ when $ \LS{A} = k $ integers greater than $k$ can occur only a finite number of times in $A$. Let $l_A$ denote the index of the last occurrence of an integer greater than $k$ in $A$.
	Similarly, since $\LS{B} \leq k$, let $l_B$ be index of the last occurrence of an integer greater than $k$ in $B$.
	Therefore, for sequences $A$ and $B$ integers greater than $k$ will not occur  beyond index $l = \mathit{max}(l_A, l_B)$.
	B\"uchi automaton $\A_k$ (Fig.~\ref{Fig:LSk}) non-deterministically determines $l$.
	On reading the $l$-th element of input word $(A,B)$, the run of $(A,B)$ exits the start state $s$ and  shifts to accepting state $f_k$. Note that all runs beginning at state $f_k$ occur on alphabet $(a,b)$ where $a,b\leq k$. Therefore, $(A,B)$ can continue its infinite run even after transitioning to $f_k$. To ensure that this is an accepting run, the run must visit accepting state $f_k$ infinitely often. But this must be the case, as transition on alphabet $(k,k')$ for $k'\leq k$ must be taken infinitely often as $k$ is the limsup of $A$ and limsup of $B$ is less than or equal to  $k$. Transitions on this alphabet always return to the accepting state $f_k$.
	Hence, for all integer sequences $A$,$B$ bounded by $\max$, if $\LS{A}=k$, and $\LS{A}\geq \LS{B}$, the automaton accepts $(A,B)$.
\end{proof}
\begin{thm}%
	\label{thrm:limsupcomparator}
    There exists an $\omega$-regular comparator for the limsup aggregation function.
\end{thm}
\begin{proof}
	The construction given above  contains a B\"uchi automata $\A_k$ for all $k \in \{0, \dots, \mu\}$. Therefore, from Lemma~\ref{Lemma:LSFinalState}, we conclude that the construction corresponds to the limsup comparator with inequality $\geq$. Therefore, limsup comparator with relation $\geq$ is $\omega$-regular.

	From  Lemma~\ref{lemm:allregular} we know that limsup comparator is $\omega$-regular for all relations.
%
\end{proof}
Due to closure properties of B\"uchi automata, this implies that limsup comparator for all inequalities and equality relation is also $\omega$-regular.
The \emph{limit infimum} (liminf, in short) of an integer sequence is the smallest integer that appears infinitely often in it; its comparator has a similar construction to the limsup comparator.
One can further prove that the limsup and liminf aggregate functions are also $\omega$-regular aggregate functions.

\subsection{\texorpdfstring{$\omega$}{Omega}-Regular aggregate functions}%
\label{Sec:RegularAggFunctions}

This section draws out the relationship between $\omega$-regular aggregate functions and $\omega$-regular comparators. We begin with the following Lemma in order to  show that $\omega$-regular aggregate functions entail $\omega$-regular comparators for the aggregate function.
\begin{lem}%
\label{lemma:reviewerAdd}
Let $\mu>0$ be the upper-bound on weight sequences, and $\beta\geq 2$ be the integer base.
Then
there exists a B\"uchi automaton $\A_\beta$ such that for all $a,b\in \Re$, $\A_\beta$ accepts $(\mathsf{rep}(a,\beta), \mathsf{rep}(b, \beta))$ iff $a>b$.
\end{lem}

\begin{proof}

Let $a,b\in\Re$, and $\beta>2$ be an integer base. Let $\mathsf{rep}(a,\beta) = \mathsf{sign}_a\cdot(\mathsf{Int}(a, \beta), \mathsf{Frac}(a, \beta))$ and $ \mathsf{rep}(b,\beta)  = \mathsf{sign}_b\cdot(\mathsf{Int}(b, \beta), \mathsf{Frac}(b, \beta))$. Then, the following statements can be proven using simple evaluation from definitions:
\begin{itemize}
\item When $\mathsf{sign}_a =+$ and $\mathsf{sign}_b = -$. Then $a > b$.
\item When $\mathsf{sign}_a = \mathsf{sign}_b = +$
	\begin{itemize}
	\item  If $\mathsf{Int}(a, \beta) \neq \mathsf{Int}(b, \beta)$: Since $\mathsf{Int}(a, \beta)$ and $\mathsf{Int}(b, \beta)$ eventually only see digit $0$ i.e.\ they are necessarily identical eventually, there exists an index $i$ such that it is the last position where $\mathsf{Int}(a, \beta)$ and $\mathsf{Int}(b, \beta)$ differ.
	If $\mathsf{Int}(a, \beta)[i]>\mathsf{Int}(b, \beta)[i]$, then $a > b$.
	If $\mathsf{Int}(a, \beta)[i]<\mathsf{Int}(b, \beta)[i]$, then $a < b$.

	\item If $\mathsf{Int}(a, \beta) = \mathsf{Int}(b, \beta)$ but  $\mathsf{Frac}(a, \beta) \neq \mathsf{Frac}(b, \beta)$:
	Let $i$ be the first index where $\mathsf{Frac}(a, \beta)$ and $\mathsf{Frac}(b, \beta)$ differ.
	If $\mathsf{Frac}(a, \beta)[i] > \mathsf{Frac}(b, \beta)[i]$ then $a > b$.
	If $\mathsf{Frac}(a, \beta)[i] < \mathsf{Frac}(b, \beta)[i]$ then $a < b$.

	\item Finally, if $\mathsf{Int}(a, \beta) = \mathsf{Int}(b, \beta)$ and $\mathsf{Frac}(a, \beta) = \mathsf{Frac}(b, \beta)$: Then $a = b$.
	\end{itemize}
\item When $\mathsf{sign}_a = \mathsf{sign}_b = -$
		\begin{itemize}
		\item  If $\mathsf{Int}(a, \beta) \neq \mathsf{Int}(b, \beta)$: Since $\mathsf{Int}(a, \beta)$ and $\mathsf{Int}(b, \beta)$ eventually only see digit $0$ i.e.\ they are necessarily identical eventually. Therefore, there exists an index $i$ such that it is the last position where $\mathsf{Int}(a, \beta)$ and $\mathsf{Int}(b, \beta)$ differ.
		If $\mathsf{Int}(a, \beta)[i]>\mathsf{Int}(b, \beta)[i]$, then $a < b$.
		If $\mathsf{Int}(a, \beta)[i]<\mathsf{Int}(b, \beta)[i]$, then $a > b$.

		\item If $\mathsf{Int}(a, \beta) = \mathsf{Int}(b, \beta)$ but  $\mathsf{Frac}(a, \beta) \neq \mathsf{Frac}(b, \beta)$:
		Let $i$ be the first index where $\mathsf{Frac}(a, \beta)$ and $\mathsf{Frac}(b, \beta)$ differ.
		If $\mathsf{Frac}(a, \beta)[i] > \mathsf{Frac}(b, \beta)[i]$ then $a < b$.
		If $\mathsf{Frac}(a, \beta)[i] < \mathsf{Frac}(b, \beta)[i]$ then $a > b$.

		\item Finally, if $\mathsf{Int}(a, \beta) = \mathsf{Int}(b, \beta)$ and $\mathsf{Frac}(a, \beta) = \mathsf{Frac}(b, \beta)$: Then $a = b$.
	\end{itemize}
\item When $\mathsf{sign}_a =-$ and $\mathsf{sign}_b = +$. Then $a < b$.
\end{itemize}
Since the conditions given above are exhaustive and mutually exclusive, we conclude that for all $a,b\in\Re$ and integer base $\beta\geq 2$, letting $\mathsf{rep}(a,\beta) = \mathsf{sign}_a\cdot(\mathsf{Int}(a, \beta), \mathsf{Frac}(a, \beta))$ and $ \mathsf{rep}(b,\beta)  = \mathsf{sign}_b\cdot(\mathsf{Int}(b, \beta), \mathsf{Frac}(b, \beta))$,  $a>b$ iff one of the following conditions occurs:
\begin{enumerate}
	\item $\mathsf{sign}_a =+$ and $\mathsf{sign}_b = -$.
	\item $\mathsf{sign}_a = \mathsf{sign}_b = +$, $\mathsf{Int}(a, \beta) \neq \mathsf{Int}(b, \beta)$, and $\mathsf{Int}(a, \beta)[i]>\mathsf{Int}(b, \beta)[i]$ when $i$ is the last index where $\mathsf{Int}(a, \beta)$ and $\mathsf{Int}(b, \beta)$ differ.
	\item $\mathsf{sign}_a = \mathsf{sign}_b = +$, $\mathsf{Int}(a, \beta) = \mathsf{Int}(b, \beta)$, $\mathsf{Frac}(a, \beta) \neq \mathsf{Frac}(b, \beta)$, and $\mathsf{Int}(a, \beta)[i]>\mathsf{Int}(b, \beta)[i]$ when $i$ is the first index where $\mathsf{Frac}(a, \beta)$ and $\mathsf{Frac}(b, \beta)$ differ.
	\item $\mathsf{sign}_a = \mathsf{sign}_b = -$, $\mathsf{Int}(a, \beta) \neq \mathsf{Int}(b, \beta)$, and $\mathsf{Int}(a, \beta)[i]<\mathsf{Int}(b, \beta)[i]$ when $i$ is the last index where $\mathsf{Int}(a, \beta)$ and $\mathsf{Int}(b, \beta)$ differ.
	\item $\mathsf{sign}_a = \mathsf{sign}_b = -$, $\mathsf{Int}(a, \beta) = \mathsf{Int}(b, \beta)$, $\mathsf{Frac}(a, \beta) \neq \mathsf{Frac}(b, \beta)$, and $\mathsf{Int}(a, \beta)[i]<\mathsf{Int}(b, \beta)[i]$ when $i$ is the first index where $\mathsf{Frac}(a, \beta)$ and $\mathsf{Frac}(b, \beta)$ differ.
\end{enumerate}
Note that each of these five conditions can be easily expressed by a B\"uchi automaton over alphabet $\mathsf{AlphaRep(\beta)}$ for an integer $\beta\geq 2$. For an integer $\beta\geq 2$, the union of all these B\"uchi automata will result in a B\"uchi automaton $\A_\beta$ such that for all $a,b\in \Re$ and $A = \mathsf{rep}(a, \beta)$ and $B = \mathsf{rep}(b,\beta)$, $a>b$ iff interleaved word $(A,B) \in \L(\A_\beta)$.
\end{proof}
We finally  show that $\omega$-regular aggregate functions entail $\omega$-regular comparators for the aggregate function.
\begin{thm}%
\label{thrm:functionthencomparator}
Let $\mu>0$ be the upper-bound on weight sequences, and $\beta\geq 2$ be the integer base. Let $f: \{0,1,\dots,\mu\}^\omega\rightarrow \Re$ be an aggregate function. If aggregate function $f$ is $\omega$-regular under base $\beta$, then its comparator for all inequality and equality relations is also $\omega$-regular.
\end{thm}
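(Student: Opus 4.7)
The plan is to build the comparator by composing two copies of the given aggregate function automaton with a fixed B\"uchi automaton that compares representations of real numbers, and then applying closure properties of $\omega$-regular languages. Since B\"uchi automata are closed under union, intersection, and complementation, it suffices to construct an $\omega$-regular comparator for just one of the six relations, say $\geq$; the others follow immediately.

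The first step is to exhibit an auxiliary B\"uchi automaton $\mathcal{C}^{\mathsf{rep}}_{\geq}$ over the alphabet $\mathsf{AlphaRep}(\beta) \times \mathsf{AlphaRep}(\beta)$ that accepts a pair of representations $(\mathsf{rep}(x,\beta), \mathsf{rep}(y,\beta))$ iff $x \geq y$. This is a standard construction from regular real analysis~\cite{chaudhuri2013regular}: the automaton branches on the four sign cases, and in each case reduces to comparing magnitudes. The integer parts are read least-significant-first, but each ends in $0^\omega$, so the automaton can nondeterministically guess the position of the last nonzero integer digit on each tape, compare integer-part magnitudes at that position, and then compare fractional parts most-significant-first, which is a safety check. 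Finiteness of the state space follows because only the base $\beta$, the already-decided magnitude comparison, and the interleaving phase (integer vs.\ fractional digit) need to be remembered.

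The second step assembles the comparator. Given $\A_f$ over $\Sigma \times \mathsf{AlphaRep}(\beta)$, I would form two copies $\A^A_f$ and $\A^B_f$, running over the first-and-third and second-and-fourth components respectively of a B\"uchi automaton whose alphabet is $\Sigma \times \Sigma \times \mathsf{AlphaRep}(\beta) \times \mathsf{AlphaRep}(\beta)$. Intersecting these two copies with $\mathcal{C}^{\mathsf{rep}}_{\geq}$ applied to the third-and-fourth components yields a B\"uchi automaton whose language consists of all tuples $(A, B, r_A, r_B)$ such that $r_A = \mathsf{rep}(f(A),\beta)$, $r_B = \mathsf{rep}(f(B),\beta)$, and $f(A) \geq f(B)$. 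Projecting away the $\mathsf{AlphaRep}(\beta) \times \mathsf{AlphaRep}(\beta)$ components preserves $\omega$-regularity and yields a nondeterministic B\"uchi automaton over $\Sigma \times \Sigma$. Correctness of the projection hinges on the two defining conditions of an aggregate function automaton from Definition~\ref{def:aggregateregular}: for every input $A$ there exists exactly one representation $r_A$ with $(A, r_A) \in \L(\A_f)$, and likewise for $B$. Hence the projected automaton accepts $(A,B)$ iff $f(A) \geq f(B)$, which is precisely the comparator for $\geq$.

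The main obstacle is the construction and correctness of $\mathcal{C}^{\mathsf{rep}}_{\geq}$. The subtlety is not algorithmic but notational: the representation interleaves integer and fractional digits with a leading sign, and the condition $\mathsf{Frac}(x,\beta) \notin \mathsf{Digit}(\beta)^* \cdot (\beta-1)^\omega$ rules out one of the two possible representations for $\beta$-adic rationals. The automaton must respect this by not accepting any pair whose putative fractional tape ends in $(\beta-1)^\omega$, which is easily enforced by a B\"uchi condition that forbids forever seeing $\beta-1$ from some point onward on the fractional component of either tape. Once $\mathcal{C}^{\mathsf{rep}}_{\geq}$ is in hand, the remainder of the argument is purely a routine application of product, intersection, and projection for B\"uchi automata.
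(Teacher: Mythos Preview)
Your proposal is correct and follows essentially the same route as the paper: build a B\"uchi automaton $\A_\beta$ (your $\mathcal{C}^{\mathsf{rep}}_{\geq}$) that compares real-number representations in base $\beta$, combine it with two copies of the function automaton $\A_f$ via product/intersection, project out the $\mathsf{AlphaRep}(\beta)$ components, and appeal to closure of $\omega$-regular languages to cover the remaining relations. The only cosmetic differences are that the paper works with $>$ rather than $\geq$ and spells out the sign/integer/fractional case analysis for $\A_\beta$ more explicitly; your added remark about enforcing $\mathsf{Frac}(x,\beta)\notin\mathsf{Digit}(\beta)^*\cdot(\beta-1)^\omega$ inside $\mathcal{C}^{\mathsf{rep}}_{\geq}$ is harmless but not strictly needed, since intersection with $\A_f$ already restricts to canonical representations.
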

\begin{proof}
We show that if an aggregate function is $\omega$-regular under base $\beta$, then its comparator for relation $>$ is $\omega$-regular. By closure properties of $\omega$-regular comparators, this implies that comparators of the aggregate function are $\omega$-regular for all inequality and equality relations.

Let $f:\Sigma^\omega\rightarrow \Re$ be an $\omega$-regular aggregate function with aggregate function automata $\A_f$. We will construct an $\omega$-regular comparator for $f$ with relation $>$. From Lemma~\ref{lemma:reviewerAdd} we know that $(X,Y)$ is present in the comparator iff $(X, M), (Y,N)\in \A_f$ for $M,N \in \mathsf{AlphaRep}(\beta)^\omega$ and $(M,N) \in \A_\beta$, for $\A_\beta$ as described above. Since $\A_f$ and $\A_\beta$ are both B\"uchi automata, the comparator for function $f$ with relation $>$ is also a B\"uchi auotmaton. Therefore, the comparator for aggregate function $f$ with relation $> $ is $\omega$-regular.
\end{proof}
The converse direction of whether $\omega$-regular comparator for an aggregate function $f$ for all inequality or equality relations will entail $\omega$-regular functions under an integer base $\beta\geq 0$  is trickier. For all aggregate functions considered in this paper, we see that whenever the comparator is $\omega$-regular, the aggregate function is $\omega$-regular as well.
However, the proofs for this have been done on a case-by-cass basis, and we do not have an algorithmic procedure to derive a function (B\"uchi) automaton from its $\omega$-regular comparator. We also do not have an example of an aggregate function for which the comparator is $\omega$-regular but the function is not. Therefore, we arrive at the following  conjecture:
\begin{conj}%
\label{Conjecture:comparatortofunction}
Let $\mu>0$ be the upper-bound on weight sequences, and $\beta\geq 2$ be the integer base. Let $f: \{0,1,\dots,\mu\}^\omega\rightarrow \Re$ be an aggregate function. If the comparator for an aggregate function $f$ is $\omega$-regular for all inequality and equality relations, then its aggregate function is also $\omega$-regular under base $\beta$.
\end{conj}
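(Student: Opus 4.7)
The plan is to try to construct an aggregate function automaton $\A_f$ over alphabet $\Sigma \times \mathsf{AlphaRep}(\beta)$ from $\omega$-regular comparators for $f$. The key difficulty is bridging two different formats: the comparator reads pairs of weight sequences over $\Sigma$, while the function automaton pairs a weight sequence with a $\beta$-ary real-number representation over $\mathsf{AlphaRep}(\beta)$. I would attempt the proof in three stages.

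First, I would show that the image $V = \{f(A) : A \in \Sigma^\omega\}$ is bounded (which follows from the boundedness of weight sequences) and that its set of $\beta$-ary representations forms an $\omega$-regular language. The natural route is to relate the ordering on $V$ to the comparator for $>$: given the comparators for $>$, $<$, and $=$, the equivalence classes of $f$-values should be recoverable from the quotient of $\Sigma^\omega$ by the equality comparator. Second, assuming $V$ is regularly representable, I would build $\A_f$ by a guess-and-verify construction: on input $(A, \mathsf{rep}(x, \beta))$, the automaton guesses a ``witness'' weight sequence $B_x \in \Sigma^\omega$ with $f(B_x) = x$, runs the equality comparator on $(A, B_x)$ in parallel, and simultaneously verifies that $\mathsf{rep}(x, \beta)$ is the correct $\beta$-ary encoding of $f(B_x)$. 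Third, I would project out the guessed $B_x$ using closure under projection and intersection of $\omega$-regular languages.

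The main obstacle, and the reason this remains only a conjecture, is step two: producing, for each real $x$ in the image of $f$, a witness weight sequence $B_x$ from $\mathsf{rep}(x, \beta)$ in an $\omega$-regular way. This is precisely asking whether the relation $\{(B, \mathsf{rep}(x, \beta)) : f(B) = x\}$ admits an $\omega$-regular uniformization, which is tantamount to already having the desired aggregate function automaton. In other words, the naive guess-and-verify strategy is circular. To break the circularity one would need a direct structural result: for instance, a normal form showing that any $\omega$-regular comparator must internally ``stream'' the value of $f$ through some regular computation whose state can be made observable, thereby allowing the function automaton to be extracted by exposing that hidden computation.

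A more modest plan, and likely the way one would proceed in practice, is to go case-by-case using the algebraic structure of $f$. For each aggregate function considered in this paper (limsup, liminf, and discounted-sum with integer discount-factor), one constructs the function automaton directly from the definition of $f$ rather than from its comparator, and only then observes that $\omega$-regularity of both coincides. A generic proof would require either a structural theorem forcing the ``streaming'' property above, or, failing that, an example of an aggregate function whose comparator is $\omega$-regular but whose function automaton provably is not---which would instead \emph{refute} the conjecture. I expect this second possibility is the genuine technical barrier and is what keeps the statement open.
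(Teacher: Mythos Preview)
Your analysis is correct and aligns with the paper's own position: this statement is a \emph{conjecture}, not a theorem, and the paper offers no proof. The paper explicitly says that the converse direction ``is trickier,'' that all known instances are handled ``on a case-by-case basis,'' that there is ``no known algorithmic procedure to derive a function (B\"uchi) automaton from its $\omega$-regular comparator,'' and that no counterexample is known either. Your identification of the circularity in the guess-and-verify approach, your fallback to case-by-case constructions, and your framing of the two possible resolutions (a structural streaming theorem versus a refuting counterexample) match the paper's discussion essentially point for point.
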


\subsection{Quantitative inclusion}%
\label{Sec:Inclusion}
{The aggregate function or comparator of a quantitative inclusion} problem refer to the aggregate function or comparator  of the associated aggregate function.
This section presents a generic algorithm (Algorithm~\ref{Alg:DSInclusion})  to solve quantitative inlcusion between $\omega$-weighted automata $P$ and $Q$ with $\omega$-comparators.
This section focusses on the non-strict quantitative inclusion.
$\DSInclusion$ (Algorithm~\ref{Alg:DSInclusion}) is an algorithm for quantitative inclusion between  weighted $\omega$-automata $P$ and $Q$ with  $\omega$-regular comparator $\A_f$ for relation $\geq$.  $\DSInclusion$ takes $P$,$Q$ and $\A_f$ as input, and returns $\mathsf{True}$ iff $P \subseteq_f Q$.
The results for strict quantitative inclusion are similar.
We use the following motivating example to explain steps of Algorithm~\ref{Alg:DSInclusion}.

\paragraph{Motivating example}%
\label{Sec:Example}
\begin{figure*}[t]
	\centering
	\begin{minipage}{0.48\textwidth}
		\centering
		\begin{tikzpicture}[shorten >=1pt,node distance=2.5cm,on grid,auto]
		\node[state,accepting, initial] (q_1)   {$p_1$};
		\node[state, accepting] (q_2) [right of = q_1] {$p_2$};

		\path[->]
		(q_1)	edge  node  {$a,1$} (q_2)
		(q_2)   edge [loop right] node  {$a,1$} (q_2);

		\end{tikzpicture}
		\caption{Weighted automaton $P$}%
		\label{Fig:WA-P}
	\end{minipage}
	\hfill
	\centering
		\begin{minipage}{0.48\textwidth}
		\centering
		\begin{tikzpicture}[shorten >=1pt,node distance=2.5cm,on grid,auto]

		\node[state,accepting, initial] (q_1)   {$q_1$};
		\node[state, accepting] (q_2) [right of = q_1] {$q_2$};

		\path[->]
		(q_1)	edge  [bend left] node  {$a,0$} (q_2)
		edge  node  [below] {$a,2$} (q_2)
		(q_2)   edge [loop right] node  {$a,1$} (q_2);

		\end{tikzpicture}
		\caption{Weighted automaton $Q$}%
		\label{Fig:WA-Q}
	\end{minipage}
	\hfill
\end{figure*}
Let weighted $\omega$-automata $P$ and $Q$ be as illustrated in Fig.~\ref{Fig:WA-P}-\ref{Fig:WA-Q} with the limsup aggregate function.
The word $w=a^{\omega}$ has one run $\rho^P_1 = p_1p_2^{\omega}$ with weight sequence $wt^P_1=1^{\omega}$ in $P$ and two runs $\rho^Q_1 = q_1q_2^{\omega}$ with weight sequence $wt^Q_1=0,1^{\omega}$ and run $\rho^Q_2 = q_1q_2^{\omega}$ with weight sequence $wt^Q_2=2,1^{\omega}$. Clearly, $wt_P(w) \leq wt_Q(w$). Therefore $P\subseteq_f Q$. From Theorem~\ref{thrm:limsupcomparator} we know that the limsup comparator $\A_{\mathsf{LS}}^\leq$ for $\leq$ is $\omega$-regular.

We use Algorithm~\ref{Alg:DSInclusion} to show that $P \subseteq_f Q$ using its $\omega$-regular comparator for $\leq$. Intuitively, the algorithm must be able to identify that for run $\rho^P_1$ of $w$ in $P$, there exists a run $\rho^Q_2$ in $Q$ s.t. $(wt^P_1, wt^Q_2)$ is accepted by the limsup comparator for $\leq$.

\paragraph{Key ideas}

A run $\rho_P$ in $P$ on word $w\in \Sigma^\omega$ is said to be \emph{dominated} w.r.t $P\subseteq_f Q$ if there exists a run $\rho_Q$ in $Q$ on the same word $w$ such that $wt_P(\rho_P)\leq wt_Q(\rho_Q)$.
$P \subseteq_f Q$ holds if for every run $\rho_P$ in $P $ is dominated w.r.t. $P\subseteq_f Q$.

$\DSInclusion$ constructs B\"uchi automaton $\first$ that consists of exactly the dominated runs of $P$ w.r.t $P\subseteq_f Q$. $\DSInclusion$ returns $\mathsf{True}$ iff $\first$ contains all runs of $P$.
To obtain $\first$, it constructs B\"uchi automaton $\less$ that accepts  word $(\rho_P, \rho_Q)$ iff $\rho_P$ and $\rho_Q$ are runs of the same word in  $P$ and $Q$ respectively, and $wt_P(\rho_P)\leq wt_Q(\rho_Q)$ i.e.\ if $w_P$ and $w_Q$ are weight sequence of $\rho_P$ and $\rho_Q$, respectively, then $(w_P, w_Q)$ is present in the $\omega$-regular comparator $\A_f^\leq$ for aggregate function $f$ with relation $\leq$.
The projection of $\less$ on runs of $P$ results in $\first$.

\begin{algorithm}[t]
\caption{ $\DSInclusion(P,Q, \A_f)$, Is $P \subseteq_f Q$?}%
\label{Alg:DSInclusion}
\begin{algorithmic}[1]
\STATE \textbf{Input: } Weighted automata $P$, $Q$, and $\omega$-regular comparator $\A_f$ (Inequality $\leq$)
\STATE \textbf{Output: } $\mathsf{True}$ if $P \subseteq_f Q$, $\mathsf{False}$ otherwise

\STATE $\hat{P} \leftarrow  \AugmentWtAndLabel(P)$\label{alg-line:AugmentP}
\STATE $\hat{Q} \leftarrow  \AugmentWtAndLabel(Q)$\label{alg-line:AugmentQ}

 \STATE $\hat{P}\times\hat{Q} \leftarrow \MakeProduct(\hat{P}, \hat{Q})$\label{alg-line:Prod}

\STATE $\less \leftarrow  \Intersect(\hat{P}\times\hat{Q}, \A_{\succeq} )$\label{alg-line:Intersect}

\STATE $\first \leftarrow \Project(\less)$\label{alg-line:Project}
\RETURN {$\hat{P} \equiv \first$}\label{alg-line:ensure}

\end{algorithmic}
\end{algorithm}
\paragraph{Algorithm details}
For sake a simplicity, we assume that every word present in $P$ is also present in $Q$ i.e. $P\subseteq Q$ (qualitative inclusion).
$\DSInclusion$ has three steps:
(a). $\SepRuns$ (Lines~\ref{alg-line:AugmentP}-\ref{alg-line:AugmentQ}):  Enables unique identification of runs in $P$ and $Q$ through \emph{labels}.
(b). $\CompareRuns$ (Lines~\ref{alg-line:Prod}-\ref{alg-line:Project}): Compares weight of runs in $P$ with weight of runs in $Q$, and constructs $\first$. (c). $\EnsureRuns$ (Line~\ref{alg-line:ensure}): Ensures if all runs of $P$ are diminished.

\begin{enumerate}
\item $\SepRuns$: $\AugmentWtAndLabel$ transforms weighted $\omega$-automaton $\A$ into B\"uchi automaton $\hat{\A}$ by converting transition $\tau = (s, a, t )$ with weight $\gamma(\tau)$ in $ \A$ to transition $\hat{\tau} =  (s, (a, \gamma(\tau), l), t)$ in $\hat{\A}$, where $l$ is  a unique label assigned to transition $\tau$. The word $\hat{\rho} = (a_0,n_0,l_0)(a_1,n_1,l_1)\dots \in \hat{A}$  iff  there exists a run $\rho \in \A$ on word $a_0a_1\dots$ with weight sequence $n_0n_1\dots$.
Labels ensure bijection between runs in $\A$ and words in $\hat{\A}$.
Words of $\hat{A}$ have a single run in $\hat{A}$.
Hence, transformation of weighted $\omega$-automata  $P$ and $Q$ to B\"uchi automata $\hat{P}$ and $\hat{Q}$ enables disambiguation between runs of $P$ and $Q$ (Line~\ref{alg-line:AugmentP}-\ref{alg-line:AugmentQ}).

The corresponding $\hat{A}$ for weighted $\omega$-automata $P$ and $Q$ from Figure~\ref{Fig:WA-P}-~\ref{Fig:WA-Q} are given in Figure~\ref{Fig:WA-Phat}-~\ref{Fig:WA-Qhat} respectively.

\begin{figure*}[t]
	\centering
	\begin{minipage}{0.30\textwidth}
		\centering
        \scalebox{0.8}{%
            \begin{tikzpicture}[shorten >=1pt,node distance=2cm,on grid,auto]
            \node[state,accepting, initial] (q_1)   {$p_1$};
            \node[state, accepting] (q_2) [right of = q_1] {$p_2$};

            \path[->]
            (q_1)	edge  node  {$(a,1,1)$} (q_2)
            (q_2)   edge [loop above] node  {$(a,1,2)$} (q_2);

            \end{tikzpicture}
        }
		\caption{$\hat{P}$}%
		\label{Fig:WA-Phat}
	\end{minipage}
	\hfill
	\centering
	\begin{minipage}{0.28\textwidth}
		\centering
        \scalebox{0.8}{%
            \begin{tikzpicture}[shorten >=1pt,node distance=2cm,on grid,auto]

            \node[state,accepting, initial] (q_1)   {$q_1$};
            \node[state, accepting] (q_2) [right of = q_1] {$q_2$};

            \path[->]
            (q_1)	edge  [bend left] node  {$(a,0,1)$} (q_2)
            edge   node  [below] {$(a,2,2)$} (q_2)
            (q_2)   edge [loop above] node  {$(a,1,3)$} (q_2);

            \end{tikzpicture}
        }
		\caption{$\hat{Q}$}%
		\label{Fig:WA-Qhat}
	\end{minipage}
	\hfill
	\begin{minipage}{0.4\textwidth}
		\centering
        \scalebox{0.8}{%
            \begin{tikzpicture}[shorten >=1pt,node distance=3.1cm,on grid,auto]
            \node[state,accepting, initial] (s_1)   {$p_1,q_1$};
            \node[state, accepting] (s_2) [right of = s_1] {$p_2,q_2$};

            \path[->]
            (s_1)	edge  [bend left] node  {$(a,1,1,0,1)$} (s_2)
            (s_1)	edge  node  [below] {$(a,1,1,2,2)$} (s_2)
            (s_2)   edge [loop, distance=1.5cm] node  [above] {$(a,1,2,1,3)$} (s_2);

            \end{tikzpicture}
        }
		\caption{$\hat{P}\times\hat{Q}$}%
		\label{Fig:PhatQhat}
	\end{minipage}
\end{figure*}

\item $\CompareRuns$:  The output of this step is the B\"uchi automaton $\first$ that contains the word ${\hat{\rho}}\in\hat{P}$ iff $\rho$ is a dominated run in $P$ w.r.t $P\subseteq_f Q$ (Lines~\ref{alg-line:Prod}-\ref{alg-line:Project}).

$\MakeProduct(\hat{P}, \hat{Q})$ constructs $\hat{P}\times \hat{Q}$ s.t.\ word $(\hat{\rho_P}, \hat{\rho_Q}) \in \hat{P}\times \hat{Q}$ iff $\rho_P$ and $\rho_Q$ are runs of the same word in $P$ and $Q$ respectively (Line~\ref{alg-line:Prod}).
Concretely, for transition $\hat{\tau_\A}=(s_\A,(a, n_\A, l_\A),t_\A)$
in automaton $\A$, where $\A \in \{\hat{P}, \hat{Q}\}$,
 transition $\hat{\tau_P}\times\hat{\tau_Q}=((s_P, s_Q),(a, n_P, l_P, n_Q, l_Q),(t_P, t_Q))$ is in $ \hat{P}\times\hat{Q}$, as shown in Figure~\ref{Fig:PhatQhat}.


$\Intersect$ intersects the weight components of $\hat{P}\times\hat{Q}$ with comparator $\A_f^\leq$ (Line~\ref{alg-line:Intersect}). The resulting  automaton $\less$ accepts word $(\hat{\rho_P}, \hat{\rho_Q})$ iff $f(\rho_P)\leq f(\rho_Q)$, and $\rho_P$ and $\rho_Q$ are runs on the same word in $P$ and $Q$ respectively. The result of $\Intersect$ between $\hat{P}\times\hat{Q}$ with the limsup comparator $\A_{\mathsf{LS}}^\leq$ for relation $\leq$ (Figure~\ref{Fig:SupComparator}) is given in Figure~\ref{Fig:Intersect}.

The projection of $\less$ on the alphabet of $\hat{P}$ returns $\first$ which contains the word $\hat{\rho_P}$ iff $\rho_P$ is a dominated run in $P$ w.r.t $P\subseteq_f Q$ (Line~\ref{alg-line:Project}), as shown in Figure~\ref{Fig:Dim}.

\begin{figure*}[t]
	\centering
	\begin{minipage}[t]{0.3\textwidth}
		\centering
        \scalebox{0.8}{%
            \begin{tikzpicture}[shorten >=1pt,node distance=1.7cm,on grid,auto]
            \node[state,initial] (s_1)   {$s_1$};
            \node[state, accepting] (s_2) [right of = s_1] {$s_2$};
            \node[state] (s_3) [below right of = s_1] {$s_3$};

            \path[->]
            (s_1)	edge   node  {$(1,2)$} (s_2)
            (s_1)	edge  node  [below,sloped] {$(1,0)$} (s_3)
            (s_2)   edge [loop right] node  [above] {$(1,1)$} (s_2)
            (s_3)   edge [loop right] node  [above]  {$(1,1)$} (s_3);

            \end{tikzpicture}
        }
		\caption{Snippet of limsup comparator $\A_{\mathsf{LS}}^\leq$ for relation $\leq$}%
		\label{Fig:SupComparator}
	\end{minipage}
	\hfill
	\centering
		\begin{minipage}[t]{0.34\textwidth}
		\centering
        \scalebox{0.8}{%
            \begin{tikzpicture}[shorten >=1pt,node distance=2.9cm,on grid,auto]
            \node[state,accepting, initial] (s_1)   {$t_1$};
            \node[state, accepting] (s_2) [right of = s_1] {$t_2$};

            \path[->]
            (s_1)	edge  node  [below] {$(a,1,1,2,2)$} (s_2)
            (s_2)   edge [loop, distance=1.5cm] node  [above] {$(a,1,2,1,3)$} (s_2);

            \end{tikzpicture}
        }
		\caption{Snippet of $\Intersect$. $t_1 = (p_1,q_1,s_1)$ and $t_2 = (p_2,q_2,s_2)$.}%
		\label{Fig:Intersect}
	\end{minipage}
	\hfill
	\begin{minipage}[t]{0.28\textwidth}
		\centering
        \scalebox{0.8}{%
            \begin{tikzpicture}[shorten >=1pt,node distance=2.1cm,on grid,auto]
            \node[state,accepting, initial] (q_1)   {$t_1'$};
            \node[state, accepting] (q_2) [right of = q_1] {$t_2'$};

            \path[->]
            (q_1)	edge  node  {$(a,1,1)$} (q_2)
            (q_2)   edge [loop above] node  {$(a,1,2)$} (q_2);

            \end{tikzpicture}
        }
		\caption{Snippet of $\first$}%
		\label{Fig:Dim}
	\end{minipage}
\end{figure*}

\item $\EnsureRuns$: $P\subseteq_f Q$ iff $\hat{P} \equiv \first$ (qualitative equivalence) since $\hat{P}$ consists of all runs of $P$ and $\first$ consists of all dominated runs w.r.t $P\subseteq_f Q$ (Line~\ref{alg-line:ensure}).
\end{enumerate}

\begin{lem}%
	\label{lem:dim}
	B\"uchi automaton $\first$ consists of all dominated runs in $P$ w.r.t $P\subseteq_f Q$.
\end{lem}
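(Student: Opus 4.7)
The plan is to prove both inclusions of the lemma by unwinding the construction of $\first$ step by step, exploiting the bijection between runs of a weighted automaton $\A$ and accepting words of its labeled counterpart $\hat{\A}$. Since $\AugmentWtAndLabel$ attaches a globally unique label to every transition of $\A$, each accepting word $\hat{\rho}\in\hat{\A}$ determines a unique run $\rho$ of $\A$ and vice versa; moreover, $\hat{\rho}$ carries both the original input letters and the weight sequence of $\rho$ in its components. I would state this bijection as a preliminary observation and refer back to it throughout.

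For the forward direction, suppose $\hat{\rho_P}\in\first$. By definition of $\Project$, there exists $\hat{\rho_Q}$ such that the interleaved word $(\hat{\rho_P},\hat{\rho_Q})$ is in $\less$. By the construction of $\Intersect$ (Line~\ref{alg-line:Intersect}), this means both $(\hat{\rho_P},\hat{\rho_Q})\in\hat{P}\times\hat{Q}$ and the projection of this word onto its weight components is accepted by the $\omega$-regular comparator $\A_f^{\leq}$. The first fact, by the definition of $\MakeProduct$, forces $\rho_P$ and $\rho_Q$ to be runs in $P$ and $Q$ respectively on the same input word $w$. The second fact, by the semantics of the comparator for relation ${\leq}$, gives $f(wt_P(\rho_P))\leq f(wt_Q(\rho_Q))$. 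Hence $\rho_P$ is dominated in the sense required.

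For the converse direction, suppose $\rho_P$ is a dominated run of $P$ w.r.t. $P\subseteq_f Q$, witnessed by a run $\rho_Q$ of $Q$ on the same word $w$ with $f(wt_P(\rho_P))\leq f(wt_Q(\rho_Q))$. The bijection above yields accepting words $\hat{\rho_P}\in\hat{P}$ and $\hat{\rho_Q}\in\hat{Q}$; since they share the input letters of $w$, $\MakeProduct$ produces an accepting word $(\hat{\rho_P},\hat{\rho_Q})\in\hat{P}\times\hat{Q}$. Extracting the weight components from this word yields exactly the pair $(wt_P(\rho_P),wt_Q(\rho_Q))$, which is accepted by $\A_f^{\leq}$ by choice of $\rho_Q$. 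Therefore $(\hat{\rho_P},\hat{\rho_Q})\in\less$, and projecting onto the $\hat{P}$-component gives $\hat{\rho_P}\in\first$.

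I expect the proof itself to be essentially a bookkeeping argument; the main care point is not a deep obstacle but the need to make precise how the alphabets of $\hat{P}\times\hat{Q}$, $\A_f^{\leq}$, and $\first$ are aligned so that $\Intersect$ and $\Project$ behave as intended. I would therefore open with a short paragraph fixing notation for the components $(a,n_P,l_P,n_Q,l_Q)$ of letters in $\hat{P}\times\hat{Q}$ and clarifying that $\Intersect$ synchronizes only on the weight components $(n_P,n_Q)$ while $\Project$ discards the $\hat{Q}$-labels and the weight of $\hat{Q}$; once this is set up, both directions above are immediate from the construction.
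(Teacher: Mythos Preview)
Your proposal is correct and follows essentially the same approach as the paper: both arguments unwind the pipeline $\AugmentWtAndLabel \to \MakeProduct \to \Intersect \to \Project$ using the bijection between runs of $\A$ and accepting words of $\hat{\A}$, and verify that membership in $\first$ is equivalent to being a dominated run. If anything, your write-up is slightly more explicit than the paper's, which argues the forward direction in detail and then dispatches the converse with the observation that ``every step of the algorithm has a two-way implication''; your separate treatment of each direction and your remark on aligning the alphabets for $\Intersect$ and $\Project$ are exactly the bookkeeping the paper leaves implicit.
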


\begin{proof}
	Let $\A_f^\leq$ be the comparator for $\omega$-regular aggregate function $f$ and relation $\leq$ s.t. $\A_f$ accepts $(A,B)$ iff $f(A)\leq f(B)$.
	A run $\rho$ over word $w$ with weight sequence $wt$ in $P$ (or $Q$) is represented by the unique word $\hat{\rho} = (w, wt, l)$ in $\hat{P}$ (or $\hat{Q}$) where $l$ is the unique label sequence associated with each run in $P$ (or $Q$). Since every label on each transition is unique, $\hat{P}$ and $\hat{Q}$ are deterministic automata. Now, $\hat{P}\times\hat{Q}$ is constructed by ensuring that two transitions are combined in the product only if their alphabet is the same.
	Therefore if $(w, wt_1, l_1, wt_2, l_2) \in \hat{P}\times\hat{Q}$, then $\hat{\rho} = (w, wt_1, l_1)\in \hat{P}$, $\hat{\sigma} = (w, wt_2, l_2)\in \hat{Q}$. Hence, there exist runs $\rho$ and $\sigma$ with weight sequences $wt_1$ and $wt_2$ in $P$ and $Q$, respectively.
	Next, $\hat{P}\times\hat{Q}$ is intersected over the weight sequences with $\omega$-regular comparator $\A_f^\leq$ for aggregate function $f$ and relation $\leq$.
	Therefore $(w, wt_1, l_1, wt_2, l_2) \in \less$
	iff  $f(wt_1) \leq f(wt_2)$. Therefore runs $\rho$ in $P$ and $\sigma$ in $Q$ are runs on the same word s.t.\ aggregate weight in $P$ is less than or equal to that of $\sigma$ in $Q$. Therefore $\first$ consists of $\hat{\rho}$ only if $\rho$ is a dominated run in $P$ w.r.t $P\subseteq_f Q$.

	Every step of the algorithm has a two-way implication, hence it is also true that every dominated run in $P$ w.r.t $P\subseteq_f Q$ is present in $\first$.
\end{proof}

\begin{lem}%
\label{Lemma:DSInclusionAlg}
Given weighted $\omega$-automata $P$ and $Q$ and their  $\omega$-regular  comparator $\A_f^\leq$ for aggregate function $f$ and relation $\leq$.
$\DSInclusion(P,Q,\A_f)$ returns $\mathsf{True}$ iff $P \subseteq_f Q$.
\end{lem}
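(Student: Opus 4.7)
The plan is to chain two equivalences: (a) the algorithm returns $\mathsf{True}$ iff every run of $P$ is dominated w.r.t.\ $P\subseteq_f Q$; and (b) every run of $P$ is dominated iff $P\subseteq_f Q$ actually holds.

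For (a), I would invoke Lemma~\ref{lem:dim}, which already characterizes $\L(\first)$ as exactly the labeled encodings $\hat\rho_P$ of dominated runs $\rho_P$ of $P$. Because $\first$ is produced from $\hat P\times\hat Q$ by intersection with the comparator and then projection onto the $\hat P$-component (Lines~\ref{alg-line:Prod}--\ref{alg-line:Project}), the inclusion $\L(\first)\subseteq\L(\hat P)$ is automatic: every letter appearing in a word of $\first$ was originally drawn from a transition of $\hat P$, and the projection merely erases the $Q$-side labels/weights. Hence the language-equivalence check on Line~\ref{alg-line:ensure} succeeds precisely when the reverse inclusion $\L(\hat P)\subseteq\L(\first)$ holds. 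Combined with the bijection between runs of $P$ and accepting words of $\hat P$ established by $\AugmentWtAndLabel$ (which assigns every transition a distinct label, rendering $\hat P$ deterministic and giving each run a unique word), this reverse inclusion is just the statement that every run of $P$ is dominated.

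For (b), the forward direction is an easy supremum computation: if every $\rho_P$ on word $w$ has a witness $\rho_Q$ on $w$ in $Q$ with $f(\rho_P)\leq f(\rho_Q)$, then $f(\rho_P)\leq\sup_{\rho_Q} f(\rho_Q)=wt_Q(w)$; taking the sup over $\rho_P$ gives $wt_P(w)\leq wt_Q(w)$, i.e., $P\subseteq_f Q$. The converse requires producing, for each $\rho_P$, an actual run $\rho_Q$ that weakly dominates it, not merely the inequality $f(\rho_P)\leq wt_Q(w)$. This is where one needs the supremum in the definition of $wt_Q(w)$ to be attained — true for the aggregate functions treated in this paper (limsup/liminf values come from a finite integer range, and integer discounted-sum admits an ultimately periodic optimal run in the finite-state $Q$), so the lift is justified.

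The main obstacle is precisely the converse of (b): the passage from $f(\rho_P)\leq\sup_{\rho_Q}f(\rho_Q)$ to the existence of a single dominating $\rho_Q$. Once attainment of the supremum is secured for the aggregate function at hand, the rest is bookkeeping: Lemma~\ref{lem:dim} supplies (a), and the two easy sup arguments supply (b), so the algorithm returns $\mathsf{True}$ iff $P\subseteq_f Q$, completing the proof.
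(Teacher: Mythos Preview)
Your proposal is correct and follows the same two-step structure as the paper's proof: invoke Lemma~\ref{lem:dim} to identify $\first$ with the dominated runs, then use the bijection between runs of $P$ and words of $\hat P$ to reduce the equivalence check to ``every run is dominated,'' which is equated with $P\subseteq_f Q$. The paper's proof simply asserts the equivalence ``$P\subseteq_f Q$ iff every run of $P$ is dominated'' (this is stated in the Key ideas paragraph preceding the algorithm) without justification; you are more careful in unpacking part~(b) and in flagging that the converse direction needs the supremum $wt_Q(w)=\sup_{\rho_Q}f(\rho_Q)$ to be attained by some actual run---a point the paper leaves implicit.
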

\begin{proof}
	$\hat{P}$ consists of all runs of $P$. $\first$ consists of all dominated run in $P$ w.r.t $P\subseteq_f Q$. $P\subseteq_f Q$ iff every run of $P$ is dominated w.r.t $P\subseteq_f Q$. Therefore $P\subseteq_f Q$ is given by whether $\hat{P} \equiv \first$, where $\equiv$ denotes qualitative equivalence.
\end{proof}
Algorithm $\DSInclusion$ is adapted for \emph{strict} quantitative inclusion $P\subset_f Q$  by repeating the same procedure with $\omega$-regular comparator $\A_f^<$ for aggregate function $ f$ and relation $<$. Here, a run $\rho_P$ in $P$ on word $w\in \Sigma^\omega$ is said to be \emph{dominated} w.r.t $P\subset_f Q$ if there exists a run $\rho_Q$ in $Q$ on the same word $w$ such that $wt_P(\rho_P)< wt_Q(\rho_Q)$. Similarly for quantitative equivalence $P \equiv_f Q$.

We give the complexity analysis of quantitative-inclusion with $\omega$-regular comparators.

\begin{thm}%
	\label{thrm:RegularComplexity}

	Let $P$ and $Q$ be weighted $\omega$-automata and $\A_f$ be an $\omega$-regular comparator.
	Quantitative inclusion problem, quantitative strict-inclusion problem, and quantitative equivalence problem for $\omega$-regular aggregate function $f$ is $\cc{PSPACE}$-complete.

\end{thm}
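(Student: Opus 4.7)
The plan is to establish both \cct{PSPACE} membership and \cct{PSPACE} hardness, with the upper bound coming from a tight size analysis of Algorithm~\ref{Alg:DSInclusion} and the lower bound from a reduction from B\"uchi language containment. For the upper bound, I would first verify that every construction step in $\DSInclusion$ incurs only polynomial blowup: the labelled B\"uchi automata $\hat P$ and $\hat Q$ produced by $\AugmentWtAndLabel$ have sizes $O(|P|)$ and $O(|Q|)$ respectively, since $\AugmentWtAndLabel$ merely attaches a unique identifier to each transition; the synchronous product $\hat P \times \hat Q$ has size $O(|P|\cdot|Q|)$; the intersection with the comparator yields $\less$ of size $O(|P|\cdot|Q|\cdot|\A_f|)$; and $\Project$ modifies only the alphabet, so $|\first|$ is unchanged. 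Hence every intermediate B\"uchi automaton is polynomial in $|P|+|Q|+|\A_f|$. By Lemma~\ref{lem:dim}, $\L(\first)\subseteq \L(\hat P)$, so the final equivalence test collapses to the B\"uchi containment $\L(\hat P) \subseteq \L(\first)$ between polynomial-size automata, which is decidable in \cct{PSPACE} via the classical on-the-fly complementation procedure of Sistla--Vardi--Wolper. Strict inclusion and equivalence are handled by rerunning the same algorithm with the comparators $\A_f^{<}$ and $\A_f^{=}$, whose existence is guaranteed by closure of $\omega$-regular comparators under Boolean operations.

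For the lower bound, I would reduce from B\"uchi language containment, known to be \cct{PSPACE}-complete. Given B\"uchi automata $A$ and $B$, I construct weighted $\omega$-automata $P_A$ and $P_B$ whose underlying B\"uchi structures coincide with $A$ and $B$ and whose transitions all carry a fixed constant weight. Under any aggregate function $f$, every accepting run of an accepted word has the same aggregate value $c = f(c_0^\omega)$, so that $wt_{P_A}(w) \leq wt_{P_B}(w)$ holds for every $w$ iff $\L(A) \subseteq \L(B)$. For strict inclusion and equivalence, I would pick two constants $c_0\neq c_1$ so that $f$ separates the constant sequences $c_0^\omega$ and $c_1^\omega$, which is always possible for the aggregate functions considered; the corresponding weighted automata then witness the same reduction.

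The principal obstacle is the final containment test $\L(\hat P) \subseteq \L(\first)$: because $\first$ inherits nondeterminism from both $Q$ and the comparator $\A_f$, an explicit complement of $\first$ would be exponentially large and cannot be materialised within polynomial space. The proof therefore depends crucially on invoking the polynomial-space B\"uchi-containment procedure that performs the complementation implicitly; aside from this, the remainder of the argument is routine bookkeeping, and the matching lower bound follows directly from the constant-weight reduction outlined above.
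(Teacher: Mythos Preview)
Your proposal is correct and follows essentially the same route as the paper: polynomial-size bounds on all intermediate automata in $\DSInclusion$ reduce the problem to a single B\"uchi containment test, which is \cct{PSPACE}; hardness is by assigning constant weights to the transitions of two B\"uchi automata so that quantitative inclusion collapses to qualitative inclusion. The paper's own proof is terser (it does not spell out the individual size bounds or the two-constant variant you use for strict inclusion), but the argument is the same.
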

\begin{proof}
All operations in $\DSInclusion$ until Line~\ref{alg-line:Project} are polytime operations in the  size of weighted $\omega$-automata $P$, $Q$ and comparator $\A_f$. Hence, $\first$ is polynomial in size of $P$, $Q$ and $\A_f$. Line~\ref{alg-line:ensure} solves a $\cc{PSPACE}$-complete problem.
Therefore, the quantitative inclusion for $\omega$-regular aggregate function $f$ is in $\cc{PSPACE}$ in size of the inputs $P$, $Q$, and  $\A_f$.

The $\cc{PSPACE}$-hardness of the quantitative inclusion is established via reduction from the \emph{qualitative} inclusion problem, which is   $\cc{PSPACE}$-complete.
The formal reduction is as follows: Let $P$ and $Q$ be B\"uchi automata (with all states as accepting states). Reduce $P$, $ Q$ to weighted automata $\overline{P}$, $\overline{Q}$ by assigning a weight of 1 to each transition.
Since all runs in $\overline{P}$, $\overline{Q}$ have the same weight sequence, weight of all words in $\overline{P}$ and $\overline{Q}$ is the same for any function $f$.
It is easy to see $P \subseteq Q$ (qualitative inclusion) iff $\overline{P} \subseteq_f \overline{Q}$ (quantitative inclusion).
\end{proof}
Theorem~\ref{thrm:RegularComplexity} extends to weighted $\omega$-automata when weight of words is the \emph{infimum} of weight of runs.
The key idea for $P\subseteq_f Q$ here is to ensure that for every run $\rho_Q$ in $Q$ there exists a run  on the same word in $\rho_P$ in $P$  s.t. $f(\rho_P)\leq f(\rho_Q)$.

\label{Sec:Represent} 
\paragraph{Representation of counterexamples}

When $P \nsubseteq_f Q$,
there exists word(s) $w \in \Sigma^*$ s.t $wt_P(w)>wt_Q(w)$.
Such a word $w$ is said to be a {\em counterexample word}.
Previously, finite-state representations of counterexamples have been useful in verification and synthesis in qualitative systems~\cite{baier2008principles}, and could be useful in quantitative settings as well.
However, we are not aware of procedures for such representations in the quantitative settings.
Here we show that a trivial extension of $\DSInclusion$ yields B\"uchi automata-representations for all counterexamples of the quantitative inclusion problem for $\omega$-regular functions.

\begin{thm}%
	\label{Thm:RegularInclusion-Counterexample}
All counterexamples of the quantitative inclusion problem for an $\omega$-regular aggregate function can be expressed by a B\"uchi automaton.

\end{thm}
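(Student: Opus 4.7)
The plan is to post-process the automata built during $\DSInclusion$ into a B\"uchi automaton $\CounterWord$ whose language is exactly the set of counterexample words for $P \subseteq_f Q$. The key observation is that a word $w$ satisfies $wt_P(w) > wt_Q(w)$ iff it has a run in $P$ that is \emph{undominated}, i.e., whose labelled version lies in $\hat{P}$ but not in $\first$.

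First, I would reuse the intermediate automata $\hat{P}$ (which, by the label trick in Line~\ref{alg-line:AugmentP}, contains a unique labelled word for every run of $P$) and $\first$ from Algorithm~\ref{Alg:DSInclusion}, which by Lemma~\ref{lem:dim} contains exactly the labelled words corresponding to dominated runs of $P$ w.r.t.\ $P \subseteq_f Q$. Form $\CounterRun := \hat{P} \setminus \first$; by closure of B\"uchi automata under complementation and intersection, $\CounterRun$ is a B\"uchi automaton, and its language consists precisely of the labelled runs $\hat{\rho_P}$ with $\rho_P$ undominated. Finally, project $\CounterRun$ onto its $\Sigma$-component, forgetting weight and label coordinates, via the standard nondeterministic B\"uchi projection, to obtain $\CounterWord$.

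It then remains to verify that $\L(\CounterWord) = \{w \in \Sigma^\omega : wt_P(w) > wt_Q(w)\}$. The $\supseteq$ direction is straightforward: if $wt_P(w) > wt_Q(w)$, then by the defining property of supremum there is a run $\rho_P$ of $w$ in $P$ with $f(\rho_P) > wt_Q(w) \geq f(\rho_Q)$ for every $\rho_Q$, so $\rho_P$ is undominated and the projection yields $w$. For the $\subseteq$ direction, an undominated $\rho_P$ produced by the projection satisfies $f(\rho_P) > f(\rho_Q)$ for every $\rho_Q$ on $w$ (or $w \notin \L(Q)$, in which case $wt_Q(w) = -\infty$ and the conclusion is immediate); the task is to upgrade this pointwise strict inequality into $f(\rho_P) > \sup_{\rho_Q} f(\rho_Q) = wt_Q(w)$.

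I expect this last upgrade to be the main obstacle, since in general a supremum over countably many run-values need not be attained, so pointwise strictness does not automatically transfer to the supremum. I would discharge it by appealing to the attainment properties of the aggregate functions considered in this paper: for limsup and liminf the achievable run-value set is a finite subset of $\{0,\dots,\mu\}$, and for integer-discount-factor discounted-sum the set of achievable aggregates along runs of a fixed word in a finite-state automaton is closed, so in both cases the supremum is in fact a maximum. Under attainment, $f(\rho_P)$ strictly exceeds some $f(\rho_Q)$ realising $wt_Q(w)$, giving $wt_P(w) \geq f(\rho_P) > wt_Q(w)$ as required. A routine size analysis shows $\CounterWord$ is polynomial in $|P|, |Q|, |\A_f|$ apart from the exponential blow-up incurred by the one B\"uchi complementation used to build $\CounterRun$.
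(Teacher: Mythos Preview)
Your approach is essentially identical to the paper's: form $\hat{P}\setminus\first$ and project away weights and labels to recover the $\Sigma$-words. The paper's proof is two sentences and does exactly this.

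Where you go beyond the paper is in worrying about the $\subseteq$ direction, namely that an undominated run $\rho_P$ (so $f(\rho_P) > f(\rho_Q)$ for every $\rho_Q$) need not in general give $f(\rho_P) > \sup_{\rho_Q} f(\rho_Q)$ unless the supremum is attained. The paper's proof simply does not address this; it argues only that every counterexample word has an undominated run and then declares the projection to be the desired automaton. Your appeal to attainment for the specific aggregate functions under consideration (finite range for limsup/liminf, closedness for integer-discount DS) is a legitimate way to close this gap, and makes your argument strictly more careful than the paper's. Note, though, that the theorem is stated for arbitrary $\omega$-regular aggregate functions, so a fully general proof would need attainment to follow from $\omega$-regularity of $f$ itself rather than from case analysis; neither you nor the paper establishes this.
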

\begin{proof}
For word $w$ to be a counterexample, it must contain a run in $P$ that is not dominated. Clearly, all non-dominated runs of $P$ w.r.t to the quantitative inclusion are members of $\hat{P}\setminus \first$. The counterexamples words can be obtained from $\hat{P}\setminus \first$ by modifying its alphabet to the alphabet of $P$ by dropping transition weights and their unique labels.
\end{proof}

\subsection{Incomplete-information quantitative games}%
\label{Sec:GraphGames}

Given an incomplete-information quantitative game $\G = (S,s_ {\mathcal{I}},\Obs, \Sigma, \delta, \gamma, f)$, our objective is to determine if player $P_0$ has a winning strategy $ \alpha: \Obs^* \rightarrow \Sigma$ for $\omega$-regular aggregate function $f$. We assume we are given the $\omega$-regular comparator $\A_f$ for function $f$. We provide an informal description of the algorithm to describe the intuition.

Note that a function $A^*\rightarrow B$ can be treated like a  $B$-labeled $A$-tree, and vice-versa.
Hence, we proceed by finding a $\Sigma$-labeled $\Obs$-tree --- the \emph{winning strategy tree}. Every branch of a winning strategy-tree is an {observed play} $o_{\rho}$ of $\G$ for which every actual play $\rho$ is a winning play for $P_0$.

We first consider all \emph{game trees} of $\G$ by interpreting $\G$ as a tree-automaton over $\Sigma$-labeled $S$-trees. Nodes $n \in S^*$ of the game-tree correspond to states in $S$ and labeled by actions in $\Sigma$ taken by player $P_0$. Thus, the \emph{root node} $\varepsilon$ corresponds to $s_\I$, and a node $s_{i_0},\ldots,s_{i_k}$ corresponds to the state $s_{i_k}$ reached via $s_{\I},s_{i_0},\ldots,s_{i_{k-1}}$.
Consider now a node $x$ corresponding to state $s$ and labeled by an action $\sigma$. Then $x$ has children $x s_1,\ldots x s_n$, for every $s_i \in S$. If $s_i\in \delta(s,\sigma)$, then we call $x s_i$ a \emph{valid} child, otherwise we call it an \emph{invalid} child. Branches that contain invalid children correspond to invalid plays.

A game-tree $\tau$ is a \emph{winning tree} for player $P_0$ if every branch of $\tau$ is either a winning play for $P_0$ or an invalid play of $\G$.
One can check, using an automata, if a play is invalid by the presence of invalid children.
Furthermore, the winning condition for $P_0$ can be expressed by the $\omega$-regular comparator $\A_f$ that accepts $(A,B)$ iff $f(A) > f(B)$.
To use the comparator $ \A_f$, it is determinized to  Parity automaton $D_f$.
Thus, a product of  game $\G$ with $D_f$ is a deterministic Parity tree-automaton accepting precisely winning-trees for player $P_0$.

Winning trees for player $P_0$ are $\Sigma$-labeled $S$-trees. We need to convert them to $\Sigma$-labeled $\Obs$-trees. Recall that every state has a unique observation.
We can simulate these $\Sigma$-labeled $S$-trees on strategy trees using the technique of \emph{thinning} states $S$ to observations $\Obs$~\cite{kupferman2000synthesis}. The resulting alternating Parity tree automaton $\mathcal{M}$ will accept a $\Sigma$-labeled $\Obs$-tree $\tau_o$ iff for all actual game-tree $\tau$ of $\tau_o$, $\tau$ is a winning-tree for $P_0$ with respect to the strategy $\tau_o$.
The problem of existence of winning-strategy for $P_0$ is then reduced to non-emptiness checking of $\mathcal{M}$.

Using the above, we get the following result: Given an incomplete-information quantitative game $\G$ and $\omega$-regular comparator  $\A_f$ for  the aggregate function $f$, the time complexity of determining whether $P_0$ has a winning strategy is exponential in ${|\G| \cdot |D_f|} $, where $|D_f| = |\A_f|^{O(|\A_f|)}$.

Observe that since $D_f$ is obtained by determinization of $A_f$, we obtain that $|D_f|= |\A_f|^{O(|\A_f|)}$. The thinning operation is linear in size of $|\G\times D_f|$, therefore $|\mathcal{M}| = |\G|\cdot| D_f|$. Non-emptiness checking of alternating Parity tree automata is exponential. Therefore, our procedure is doubly exponential in size of the comparator and exponential in size of the game. The question of tighter bounds is open.

\section{Discounted-sum comparator}%
\label{Sec:DiscountedSum}

The discounted-sum of an infinite sequence $A$ with discount-factor $d >1$, denoted by $\DSum{A}{d}$, is defined as $\Sigma_{i=0}^{\infty} A[i] / d^{i}$, and the discounted-sum of a finite sequence $A$ is $\Sigma_{i=0}^{|A|-1} A[i] / d^{i}$.
The discounted-sum comparator (DS-comparator, in short) for discount-factor $d$ and relation R, denoted by $ \A_{\DSsucceqFord{d}}^R$, accepts a pair $ (A, B) $ of (infinite length) weight sequences  iff $ \DSum{A}{d}$ $R$ $\DSum{B}{d} $.
We investigate properties of the DS-comparator,
and show that the DS-comparator is $\omega$-regular iff the discount-factor $d>1$ is an integer.
We also show that the discounted-sum aggregate function is $\omega$-regular iff the discount-factor is an integer.
Finally, we show the repercussions of the above results on quantitative inclusion with discounted-sum aggregate function (DS-inclusion, in short).
Section~\ref{Sec:RationalDF} and Section~\ref{Sec:IntegerDF} deal with the non-integer rational discount-factors and integer discount-factors, respectively.

\subsection{Non-integer, rational discount-factor}%
\label{Sec:RationalDF}

We prove that for non-integer discount factors, the discounted-sum comparator is not $\omega$-regular.
For a weighted $\omega$-automaton $\A$ and a real number $r \in \Re$, the \emph{cut-point language} of $\A$ w.r.t.  $r $ is defined as  $L^{\geq r} = \{w \in L(\A) | wt_\A(w) \geq r \}$~\cite{chatterjee2009expressiveness}.
When the discount factor is a rational value $1<d<2$, it is known that not all deterministic weighted $\omega$-automaton with discounted-sum aggregate function (DS-automaton, in short) have an $\omega$-regular cut-point language for an $r \in \Re$~\cite{chatterjee2009expressiveness}.
In this section, we extended this result to  all non-integer, rational discount-factors $d>1$.
Finally, we use this to prove that  discounted-sum is not an $\omega$-regular aggregate function when its discount-factor is a non-integer rational number.

\paragraph{\textbf{Ambiguous Words}}

Let $d>2$ be a non-integer, rational discount-factor. We consider finite weight-sequences over the alphabet $\{0,1,\dots, \ceil{d}-1\}$. We say a weight-sequence $w$ is \emph{ambiguous} if $1 - \frac{\ceil{d}-1}{d^{|w|-1}\cdot (d-1)} \leq \DSum{w}{d} < 1$. Intuitively, a weight-sequence is ambiguous if it could be extended to an infinite word with length less than $1$ and greater than $1$.

We will establish that there exists an infinite word such that its discounted-sum is equal to 1 but all of its finite prefixes are ambiguous. We prove by induction on length of prefixes. Let $w = 0$. Since $d>2$, $\frac{\ceil{d}-1}{d-1}>1$.  So, $w$ is ambiguous. Now, we prove that if $w$ is ambiguous, then at least one of $w\cdot 0, \dots, w\cdot (\ceil{d}-1)$ is ambiguous.

We prove this in cases:
Suppose $\DSum{w}{d} < 1-\frac{\ceil{d}-1}{d^{|w|}}$, then we show that $w\cdot (\ceil{d}-1)$ is ambiguous. First, since $w$ is ambiguous, we know that $\DSum{w}{d} \geq 1 - \frac{\ceil{d}-1}{d^{|w|-1}\cdot (d-1)}$. It is easy to show that $\DSum{w\cdot (\ceil{d}-1)}{d}  = \DSum{w}{d} + \frac{\ceil{d}-1}{d^{|w|}} \geq 1 - \frac{\ceil{d}-1}{d^{|w|-1}\cdot (d-1)} +  \frac{\ceil{d}-1}{d^{|w|}}  = 1-\frac{\ceil{d}-1}{d^{|w|}\cdot(d-1)}$. Next, since $\DSum{w}{d} < 1-\frac{\ceil{d}-1}{d^{|w|}}$ we get that $\DSum{w\cdot(\ceil{d}-1)}{d} = \DSum{w}{d} + \frac{\ceil{d}-1}{d^{|w|}} \leq 1$. Thus, $w\cdot(\ceil{d}-1)$ is ambiguous.

In the next case, suppose $1-\frac{\ceil{d}-1}{d^{|w|}} \leq \DSum{w}{d} < 1-\frac{\ceil{d}-2}{d^{|w|}}$. We will show that $w\cdot (\ceil{d}-2)$ is ambiguous. First, as show earlier, it is easy to see that $\DSum{w\cdot (\ceil{d}-2)}{d} < 1$ since $\DSum{w}{d} < 1-\frac{\ceil{d}-2}{d^{|w|}}$ has been assumed. Next, since $\DSum{w}{d} > 1-\frac{\ceil{d}-1}{d^{|w|}} $, we get that $\DSum{w\cdot (\ceil{d}-2)}{d} \geq 1-\frac{\ceil{d}-1}{d^{|w|}} + \frac{\ceil{d}-2}{d^{|w|}} = 1 - \frac{1}{d^{|w|}}$. Since $d>2$,  we know that $\frac{\ceil{d}-1}{d-1} > 1$, so $\DSum{w\cdot (\ceil{d}-2)}{d} \geq 1 - \frac{\ceil{d}-1}{d^{|w|}\cdot (d-1)}$.
Thus, $w\cdot(\ceil{d}-2)$ is ambiguous.

Similarly, when $1-\frac{i+1}{d^{|w|}} \leq \DSum{w\cdot i}{d} < 1 - \frac{i}{d^{|w|}}$, we can show that $w\cdot i$ is ambiguous for $i \in \{ 1, \ceil{d}-2\}$.

In the final case, $\DSum{w}{d} \geq 1 -\frac{1}{d^{|w|}}$. Since $\DSum{w\cdot 0}{d} = \DSum{w}{d}$, clearly, $\DSum{w\cdot 0}{d} < 1$ since $w$ is ambiguous. Further, $\DSum{w}{d} \geq 1 -\frac{1}{d^{|w|}}$, we get that $\DSum{w\cdot 0}{d} \geq 1 -\frac{\ceil{d}-1}{d^{|w|}\cdot (d-1)}$ as $\frac{\ceil{d} -1}{d-1} > 1$. Thus, $w\cdot 0$ is ambiguous.

\begin{thm}%
\label{lem:cutpoint}
For all non-integer, rational discount-factor $d>1$, there exists a deterministic discounted-sum automata $\A$ and rational value $r \in \Re$ for which its cut-point language is not $\omega$-regular.
\end{thm}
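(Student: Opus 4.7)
The plan is to reduce the theorem to a clean language-theoretic statement by using a single-state deterministic weighted automaton. Fix $d = p/q > 1$ with $\gcd(p,q)=1$ and $q \geq 2$, choose a finite weight alphabet $\Sigma \subseteq \N$ large enough to cover the target threshold, and let $\A$ be the one-state DS-automaton whose self-loop on letter $a \in \Sigma$ has weight $a$. Then $wt_\A(w) = \DSum{w}{d}$ for every $w \in \Sigma^\omega$, and the cut-point language is $L^{\geq r} = \{w \in \Sigma^\omega : \DSum{w}{d} \geq r\}$. The task reduces to choosing a rational threshold $r$ so that $L^{\geq r}$ fails to be $\omega$-regular.

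I would then adapt the $1 < d < 2$ argument of~\cite{chatterjee2009expressiveness} via a right-congruence observation. If $L^{\geq r}$ were $\omega$-regular, only finitely many distinct ``residual continuation sets'' $\{w \in \Sigma^\omega : uw \in L^{\geq r}\}$ could arise as $u$ ranges over finite prefixes. Since the continuation set after $u$ is determined by the residual threshold $t_u := (r - \DSum{u}{d}) \cdot d^{|u|}$, $\omega$-regularity would force infinitely many residual thresholds to collapse into finitely many equivalence classes. The arithmetic heart of the extension is that $\DSum{u}{d}$ with $|u|=n$ has the form $m/p^{n-1}$ for an integer $m$ depending on $u$, so that $t_u = (r \cdot p^n - m \cdot q \cdot p^{n-1})/q^n$; because $\gcd(p,q)=1$ and $q \geq 2$, the denominators $q^n$ grow unboundedly and the numerators do not normalize to a bounded window. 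Choosing $r$ as an interior point of $\{\DSum{w}{d} : w \in \Sigma^\omega\}$ whose base-$d$ expansion is not eventually periodic then yields an infinite antichain of prefixes with pairwise distinct $t_u$.

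The main obstacle is promoting ``infinitely many distinct residual thresholds'' to ``infinitely many genuinely inequivalent continuation sets.'' It is not enough that the residuals differ as rationals; one must, for each pair $u, u'$ with $t_u \neq t_{u'}$, produce a single $\omega$-word $w$ belonging to exactly one of the two continuation sets. The original proof of~\cite{chatterjee2009expressiveness} achieves this by pairing prefixes whose thresholds straddle the discounted value of a carefully chosen ``bracketing'' suffix, and the generalization to all non-integer rational $d$ amounts to verifying that such bracketing suffixes still exist in the enlarged alphabet $\Sigma$. This is precisely the point where non-integrality is essential: when $d$ is an integer ($q=1$), the residuals at every length lie in a common lattice closed under shifting, so all $t_u$ normalize into a bounded set and no distinguishing suffix exists; when $d$ is a non-integer rational, the lattices at different lengths are arithmetically incompatible, producing the required antichain.
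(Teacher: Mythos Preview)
Your high-level plan---build the one-state DS automaton so that $wt_\A(w)=\DSum{w}{d}$ and then argue that $L^{\geq r}$ has infinitely many right-residuals---is the same skeleton the paper uses. But as written, the proposal does not close, and you yourself flag the gap: you never actually produce, for two prefixes with distinct residual thresholds $t_u\neq t_{u'}$, a suffix $w$ that separates them. Saying ``denominators $q^n$ grow'' only tells you the rationals $t_u$ are pairwise distinct; it does not by itself yield inequivalent continuation sets, because the set of achievable tails $\{\DSum{w}{d}:w\in\Sigma^\omega\}$ need not be dense in the right place. A second concrete problem is your choice of $r$: the theorem demands a \emph{rational} $r$, but ``an interior point whose base-$d$ expansion is not eventually periodic'' is both vague (beta-expansions for non-integer $d$ are delicate) and in tension with rationality. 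So the plan has two holes: a workable choice of $r$, and the separation step.

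The paper fills both holes in a way you do not anticipate. It fixes the concrete rational threshold $r=\lceil d\rceil-1$ over the alphabet $\{0,\dots,\lceil d\rceil-1\}$, and instead of working with arbitrary prefixes it builds a single infinite word $w^{\geq}$ with $\DSum{w^{\geq}}{d}=r$ all of whose finite prefixes are ``ambiguous'' (extendable both into and out of $L^{\geq r}$); this uses that $\frac{\lceil d\rceil-1}{d-1}>1$ exactly when $d$ is a non-integer. If $L^{\geq r}$ were $\omega$-regular, two such prefixes of lengths $n<m$ would have the same continuation language, which forces an equation of the form
\[
d^{m-1}\bigl(\DSum{w^{\geq}[0,m-1]}{d}-r\bigr)+d^{n-1}\bigl(\DSum{w^{\geq}[0,n-1]}{d}-r\bigr)=0,
\]
an integer-coefficient polynomial in $d$ of degree $m-1$ with leading coefficient $w^{\geq}[0]-r=-1$. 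The rational root theorem then says $q$ must divide $-1$, contradicting $q\geq 2$. This algebraic punchline is exactly what replaces your unresolved ``bracketing suffix'' argument, and it is the step your proposal is missing.
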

\begin{proof}

Since the proof for $1<d<2$ has been presented in~\cite{chatterjee2009expressiveness}, we skip that case.

The proof presented here extends the earlier result on $1<d<2$ from~\cite{chatterjee2009expressiveness} to all non-integer, rational discount factors $d>2$.

Let $d>2$ be a non-integer, rational discount-factor. Define deterministic discounted-sum automata $\A$ over the alphabet $\{0,1,\dots, \ceil{d}-1\}$ such that the weight of transitions on alphabet $n \in \{0,1,\dots, \ceil{d}-1\}$ is $n$. Therefore, weight of word $w \in \A$ is $\DSum{w}{d}$.

Consider its cut-point language $L^{\geq 1}$.
Let us assume that the language $L^{\geq 1}$ is $\omega$-regular and represented by B\"uchi automaton $\mathcal{B}$. For $n<m$, let the $n$- and $m$-length prefixes of $w^{\geq}$, denoted $w^{\geq}[0,n-1]$ and $w^{\geq}[0,m-1]$, respectively, be such that they reach the same states in $\mathcal{B}$. Then there exists an infinite length word $w_s$ such that $\DSum{w^{\geq}[0,n-1]\cdot w_s}{d} = \DSum{w^{\geq}[0,m-1]\cdot w_s}{d} = 1$. Now, $\DSum{w^{\geq}[0,n-1]\cdot w_s}{d} = \DSum{w^{\geq}[0,n-1]}{d} + \frac{1}{d^n}\cdot\DSum{w_s}{d}$ and $\DSum{w^{\geq}[0,m-1]\cdot w_s}{d} = \DSum{w^{\geq}[0,m-1]}{d} + \frac{1}{d^m}\cdot\DSum{w_s}{d}$. Eliminating $\DSum{w_s}{d}$ from the equations and simplification, we get:
\[
d^{m-1}\cdot(\DSum{w^{\geq}[0,m-1]}{d} - 1) + d^{n-1}\cdot(\DSum{w^{\geq}[0,n-1]}{d} - 1)  = 0
\]
The above is a polynomial over $d$ with degree $m-1$ and integer co-efficients. Specifically, $d = \frac{p}{q} > 2$ such that integers $p,q>1$, and $p$ and $q$ are mutually prime. Since $d =\frac{p}{q}$ is a root of the above equation, $q$ must divide co-efficient of the highest degree term, in this case it is $m-1$. The co-efficient of the highest degree term in the polynomial above is $(w^{\geq}[0] - (\ceil{d}-1))$.  Recall from construction of the infinite-length word with ambiguous prefixes $w^{\geq}$ from above, $w^{\geq}[0] = 0$. So the co-efficient of the highest degree term is $-1$, which is not divisible by integer $q>1$.  Contradiction.
\end{proof}

Finally, we use Theorem~\ref{lem:cutpoint} to prove the discounted-sum comparator is not $\omega$-regular when the discount-factor $d>1$ is non-integer, rational number.
\begin{thm}%
\label{Thrm:DSNotRegular}
	DS-comparator  for non-integer, rational discount-factors $d>1$ for all inequalities and equality are not $\omega$-regular.
\end{thm}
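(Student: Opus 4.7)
My plan is to reduce the cut-point language result of Theorem~\ref{lem:cutpoint} to the $\omega$-regularity of the DS-comparator. First, by the closure of B\"uchi automata under Boolean operations, it suffices to show that the DS-comparator for the single relation $\geq$ is not $\omega$-regular: if it were, then the comparators for $<$, $\leq$, $>$, $=$, and $\neq$ could be obtained by complementation, intersection, and union, so showing non-$\omega$-regularity for any one relation immediately yields it for all six.

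Next, I would argue by contradiction: suppose that for some non-integer rational discount-factor $d>1$, the DS-comparator $\A^{\geq}_{\DSsucceqFord{d}}$ is $\omega$-regular. Let $\A$ and $r \in \Q$ be the deterministic DS-automaton and rational value produced by Theorem~\ref{lem:cutpoint}, so that the cut-point language $L^{\geq r}(\A) = \{w \in \L(\A) \mid wt_\A(w) \geq r\}$ is \emph{not} $\omega$-regular. The idea is to build $L^{\geq r}(\A)$ as an $\omega$-regular language using $\A^{\geq}_{\DSsucceqFord{d}}$, thus deriving a contradiction.

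The construction proceeds in four $\omega$-regular steps. (a) Produce a tiny deterministic B\"uchi transducer $\mathcal{B}$ that deterministically generates an eventually periodic, bounded, natural-number sequence $B$ with $\DSum{B}{d} = r$; such a $B$ exists because $r$ and $d$ are both rational, so after a suitable uniform linear transformation (justified in \textsection~\ref{Sec:Comparator}) we may take, for instance, a constant sequence $c^\omega$ with $\DSum{c^\omega}{d} = cd/(d-1) = r$, or more generally an ultimately periodic expansion. (b) Treat $\A$ as a deterministic transducer that, on reading the input word $w$, outputs the weight sequence of its unique run; form the synchronous product of $\A$ and $\mathcal{B}$, whose output alphabet is pairs $(wt_\A(w)[i], B[i])$. (c) Intersect this product (over the weight-pair component) with the assumed $\omega$-regular comparator $\A^{\geq}_{\DSsucceqFord{d}}$. (d) Project onto the input alphabet of $\A$. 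The resulting B\"uchi automaton accepts exactly the words $w \in \L(\A)$ with $\DSum{wt_\A(w)}{d} \geq \DSum{B}{d} = r$, i.e.\ the cut-point language $L^{\geq r}(\A)$. This contradicts Theorem~\ref{lem:cutpoint}, so $\A^{\geq}_{\DSsucceqFord{d}}$ cannot be $\omega$-regular, and by the opening observation neither can the DS-comparator for any other relation.

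The main obstacle is the construction of the generator $\mathcal{B}$ in step~(a): one must exhibit an ultimately periodic, bounded, natural-number sequence whose discounted sum with respect to a non-integer rational $d$ realizes the specific cut-point value $r$ of Theorem~\ref{lem:cutpoint}. All remaining steps are routine $\omega$-regular closure operations and require no additional insight beyond the standard product and projection constructions for B\"uchi automata.
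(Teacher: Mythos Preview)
Your approach is essentially the same as the paper's: assume the $\geq$-comparator is $\omega$-regular, pair each input word with a fixed reference sequence of discounted sum $r$, intersect with the comparator, project, and obtain the cut-point language of Theorem~\ref{lem:cutpoint} as an $\omega$-regular language---contradiction.

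The only noteworthy difference is that you flag step~(a) as the main obstacle, whereas the paper dissolves it by exploiting the explicit data produced in Theorem~\ref{lem:cutpoint}.  There the automaton $\A$ has alphabet $\{0,\dots,\lceil d\rceil-1\}$ with the weight of a transition on letter $n$ equal to $n$, so the weight sequence of the run on $w$ is $w$ itself; and the cut-point value is $r=\lceil d\rceil-1$, a natural number bounded by the alphabet.  Hence the reference sequence can simply be $(\lceil d\rceil-1)\cdot 0^\omega$, whose discounted sum is exactly $\lceil d\rceil-1$, and the pairing automaton $\mathcal{P}=\{(w,(\lceil d\rceil-1)\cdot 0^\omega)\mid w\in\L(\A)\}$ is trivially B\"uchi.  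Your generic suggestion of a constant sequence $c^\omega$ with $c\cdot d/(d-1)=r$ need not yield a natural $c$, and the appeal to a ``uniform linear transformation'' is unnecessary once you use the concrete $r$ already provided; otherwise the argument is identical.
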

\begin{proof}
	If the comparator for an aggregate function for any one inequality is not $\omega$-regular, then the comparator for all inequalities and equality relation will also not be $\omega$-regular. Therefore, it is sufficient to prove that the discounted-sum comparator with non-integer, rational value for relation $\geq$ is not $\omega$-regular.

	Let $d>1$ be a non-integer, rational discount-fact. Let $ \A$ be the discounted-sum automaton as described in proof of Lemma~\ref{lem:cutpoint}. Consider its cut-point language $L^{\geq 1}$. From Lemma~\ref{lem:cutpoint} and~\cite{chatterjee2009expressiveness}, we know that $L^{\geq 1}$ is not an $\omega$-regular language.

	Suppose there exists an $\omega$-regular DS-comparator $\A_d^\leq$ for non-integer rational discount factor $d>1$ for relation $\geq$.
	We define the B\"uchi automaton $\mathcal{P}$ s.t. $\L(\mathcal{P}) = \{(w,v) | w \in \L(\A), v = \ceil{d}-1 \cdot 0^{\omega} \}$. Note that $\DSum{\ceil{d}-1\cdot 0^{\omega}}{d} = \ceil{d}-1$.  Then the cut-point language $L^{\geq (\ceil{d}-1)}$ of deterministic discounted-sum automata $\A$ can be constructed by taking the intersection of $\mathcal{P}$ with $\A_d^\geq$. Since all actions are closed under $\omega$-regular operations, $L^{\geq 1}$ can be represented by a B\"uchi automaton. Contradiction to Theorem~\ref{lem:cutpoint}.
\end{proof}

\begin{thm}%
	\label{Cor:RationalAggregate}
	Let $d>1$ be a non-integer, rational discount-factor. The discounted-sum aggregate function with discount-factor $d$ is not $\omega$-regular.
\end{thm}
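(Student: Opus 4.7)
The plan is to prove this by contrapositive, invoking Theorem~\ref{thrm:functionthencomparator} together with Theorem~\ref{Thrm:DSNotRegular}. Suppose, for contradiction, that for the given non-integer rational discount-factor $d>1$ the discounted-sum aggregate function $\DSum{\cdot}{d}: \{0,1,\dots,\mu\}^\omega \to \Re$ were $\omega$-regular under some integer base $\beta \geq 2$, where $\mu$ is the fixed upper bound on weight sequences.

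Then Theorem~\ref{thrm:functionthencomparator} applies directly: the hypothesis of that theorem is exactly that the aggregate function is $\omega$-regular under some integer base, and its conclusion is that the associated comparator is $\omega$-regular for \emph{all} inequality and equality relations. Specialising to, say, the relation $\geq$, we would obtain an $\omega$-regular DS-comparator $\A_{\DSsucceqFord{d}}^{\geq}$ for the non-integer rational discount-factor $d$.

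This, however, directly contradicts Theorem~\ref{Thrm:DSNotRegular}, which establishes that for every non-integer rational $d>1$ the DS-comparator is not $\omega$-regular under \emph{any} of the inequality/equality relations. The contradiction refutes the assumption, so $\DSum{\cdot}{d}$ cannot be $\omega$-regular, completing the proof.

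The argument is essentially a one-line contrapositive, so there is no real obstacle: the hard work has already been done in Theorem~\ref{lem:cutpoint} (the cut-point language construction extended from $1<d<2$ to all non-integer rational $d>1$) and its automata-theoretic consequence Theorem~\ref{Thrm:DSNotRegular}. The only subtlety to check when writing up is that Theorem~\ref{thrm:functionthencomparator} is stated uniformly in the base $\beta$, so the impossibility we derive is independent of the choice of $\beta$; since our hypothetical $\omega$-regular representation would be under some fixed integer $\beta \geq 2$, and the conclusion of Theorem~\ref{thrm:functionthencomparator} yields an $\omega$-regular comparator regardless of $\beta$, the contradiction with Theorem~\ref{Thrm:DSNotRegular} goes through without needing to mention $\beta$ in the final statement.
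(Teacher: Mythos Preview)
Your proof is correct and matches the paper's own argument, which simply states ``Immediate from Lemma~\ref{lem:cutpoint} and Theorem~\ref{thrm:functionthencomparator}.'' You cite Theorem~\ref{Thrm:DSNotRegular} rather than Lemma~\ref{lem:cutpoint}, which is arguably the cleaner reference for the contrapositive of Theorem~\ref{thrm:functionthencomparator}, but the underlying reasoning is identical.
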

\begin{proof}
	Immediate from Lemma~\ref{lem:cutpoint} and Theorem~\ref{thrm:functionthencomparator}.
\end{proof}
Since the DS-comparator for all non-integer, rational discount-factor $d>1$ is not $\omega$-regular, the $\omega$-regular-based algorithm for quantitative inclusion described in Algorithm~\ref{Alg:DSInclusion} does not apply to DS-inclusion. In fact, the decidability of DS-inclusion with non-integer, rational discount-factors is still open. Finally, we have shown is follow-up work that comparators for approximations of discounted-sum with non-integer discount factors $1<d<2$ can be made $\omega$-regular~\cite{bansal2022synthesis}.

\subsection{Integer discount-factor}%
\label{Sec:IntegerDF}
In this section, we provide an explicit construction of an $\omega$-regular comparator for discounted-sum with integer discount-factors. We use this construction to prove that discounted-sum aggregate function with integer discount-factor is $\omega$-regular.
Finally, we use the $\omega$-regular DS-comparator in Algorithm~\ref{Alg:DSInclusion} to establish that $\mathsf{PSPACE}$-completeness of DS-inclusion with integer discount-factors.

\paragraph{Discounted-sum comparator}
Let integer $\mu>0$ be the upper-bound on sequences.
The core intuition is that bounded sequences can be converted to their value in base $d$ via a finite-state transducer. Lexicographic comparison of the converted sequences renders the desired DS-comparator. Conversion of sequences to base $d$ requires a certain amount of \emph{look-ahead} by the transducer.
Here we describe a method that directly incorporates the look-ahead with lexicographic comparison to obtain the DS-comparator for integer discount-factor $d>1$. Here we construct the discounted-sum comparator for relation $<$.

We explain the construction in detail now.
For weight  sequence $A$ and integer discount-factor $d>1$, $\DSum{A}{d}$ can be interpreted as a value in base $d$ i.e. $\DSum{A}{d} = A[0] + \frac{A[1]}{d} + \frac{A[2]}{d^2} + \dots =  (A[0].A[1]A[2]\dots)_d$~\cite{chaudhuri2013regular}. 
Unlike comparison of numbers in base $d$, the lexicographically larger sequence may not be larger in value since (i) The elements of  weight sequences may be larger in value than base $d$, and (ii) Every value has multiple infinite-sequence representations.

To overcome these challenges, we resort to arithmetic techniques in base $d$.
Note that $ \DSum{B}{d} > \DSum{A}{d} $ iff there exists a sequence $C$ such that $\DSum{B}{d} =  \DSum{A}{d} + \DSum{C}{d} $, and $\DSum{C}{d} > 0$.
Therefore, to compare the discounted-sum of $A$ and $B$, we
 obtain a sequence $ C $.
 Arithmetic in base $d$ also results in sequence $X$ of carry elements. Then:

\begin{lem}%
 \label{Lemma:discount-invariant1}
Let $A, B, C, X$ be weight sequences, $ d > 1 $ be a positive integer such that the following equations hold:
 \begin{enumerate}
 \item\label{eq:initial}  $ A[0] + C[0] + X[0] = B[0]$
 \item\label{eq:invariant} For $ i\geq 1 $, $A[i] + C[i] + X[i] = B[i] + d \cdot X[i-1]$
 \end{enumerate}
 Then $ \DSum{B}{d} = \DSum{A}{d} + \DSum{C}{d}$.
 \end{lem}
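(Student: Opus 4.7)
The plan is to prove this by direct manipulation of the discounted-sum series. Write $\DSum{A}{d}+\DSum{C}{d} = \sum_{i=0}^{\infty} (A[i]+C[i])/d^{i}$ and substitute the given recurrences in the single combined form
\[
A[i]+C[i] \;=\; B[i] - X[i] + d\cdot X[i-1]\cdot \mathbf{1}_{i\geq 1} - X[0]\cdot\mathbf{1}_{i=0}\cdot 0,
\]
i.e.\ use Equation~\ref{eq:initial} for the $i=0$ term and Equation~\ref{eq:invariant} for the $i\geq 1$ terms. After substitution, the sum splits into three pieces: the target $\DSum{B}{d}$, a contribution of the form $\sum_{i\geq 1} d\cdot X[i-1]/d^{i} = \sum_{j\geq 0} X[j]/d^{j}$ coming from the carry, and a subtracted contribution $\sum_{i\geq 0} X[i]/d^{i}$.

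The heart of the argument is then a telescoping observation: after reindexing, these two $X$-series cancel term-by-term, leaving only $\DSum{B}{d}$. Concretely, I would first prove the identity at the level of partial sums up to index $N$, showing that
\[
\sum_{i=0}^{N} \frac{A[i]+C[i]}{d^{i}} \;=\; \sum_{i=0}^{N} \frac{B[i]}{d^{i}} \;-\; \frac{X[N]}{d^{N}},
\]
which is just a clean rewrite of Equations~\ref{eq:initial}--\ref{eq:invariant} and drops the boundary carry correctly.

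The only real technical step is justifying that $X[N]/d^{N} \to 0$ as $N \to \infty$, so that taking the limit of the partial-sum identity gives $\DSum{A}{d}+\DSum{C}{d} = \DSum{B}{d}$. This follows from the standing boundedness assumption on weight sequences declared in \textsection~\ref{Sec:Comparator} (all weight sequences are bounded by a fixed $\mu$), together with $d>1$ making $1/d^{N}$ geometrically small. Thus the main (and essentially only) obstacle is making the boundedness of $X$ explicit; I would state it up front and then the telescoping argument closes the proof in one line.
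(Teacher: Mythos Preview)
Your approach is essentially the same as the paper's: both substitute the recurrences into $\sum_{i\geq 0}(A[i]+C[i])/d^{i}$ and observe that the $X$-contributions telescope, leaving $\DSum{B}{d}$. The paper carries this out as a one-line formal series manipulation (cancelling $\sum_{i\geq 0} X[i]/d^{i}$ against the reindexed carry term), whereas you do the same computation on partial sums and track the boundary term $-X[N]/d^{N}$ explicitly before passing to the limit; your version is in fact slightly more careful, since the paper's cancellation of the two infinite $X$-series is only valid under exactly the boundedness hypothesis you single out, which the paper leaves implicit.
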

 \begin{proof}
  $ \DSum{A}{d} + \DSum{C}{d} = \Sigma_{i=0}^{\infty} A[i] \frac1{d^i} + \Sigma_{i=0}^{\infty} C[i] \frac1{d^i} = \Sigma_{i=0}^{\infty} (A[i] + C[i]) \frac1{d^i} = (B[0] - X[0]) + \Sigma_{i=1}^{\infty} (B[i] + d\cdot X[i-1] - X[i]) \frac1{d^i} =  (B[0] - X[0]) + \Sigma_{i=1}^{\infty} (B[i] + d\cdot X[i-1] - X[i]) \frac1{d^i} = \Sigma_{i=0}^{\infty} B[i] \cdot \frac1{d^i} - \Sigma_{i=0}^{\infty} X[i] +  \Sigma_{i=0}^{\infty} X[i]  =  \Sigma_{i=0}^{\infty} B[i] \cdot \frac1{d^i} = \DSum{B}{d}$
  \end{proof}
Hence to determine $\DSum{B}{d}-\DSum{A}{d}$, systematically guess sequences $ C $ and $ X $ using the equations, element-by-element beginning with the 0-th index and moving rightwards.
There are two crucial observations here: (i)
Computation of $i$-th element of $C$ and $X$ only depends on $i$-th and $(i-1)$-th elements of $A$ and $B$. Therefore guessing $C[i]$ and $X[i]$ requires \emph{finite memory} only.
(ii) Intuitively, $C$
refers to a representation of value $\DSum{B}{d} - \DSum{A}{d}$ in base $d$ and $X$ is the carry-sequence. If we can prove that $X$ and $C$ are also bounded-sequences and can be constructed from a finite-set of integers, we would be able to further proceed to construct a B\"uchi automaton for the desired comparator.

We proceed by providing an inductive construction of sequences $C$ and $ X$ that satisfy the properties in Lemma~\ref{Lemma:discount-invariant1} (Lemma~\ref{Lemma:XInvariant}), and show that these sequences are bounded when $A$ and $B$ are bounded. In particular, when $A $ and $B$ are bounded integer-sequences, then sequences $C$ and  $X$ constructed here are also bounded-integer sequences. Therefore, they can be constructed from a finite-set of integers. Proofs for sequence $C$ are in Lemma~\ref{Lemma:SimplifyResidual}-Lemma~\ref{Lemma:BoundOnC}, and proof for sequence $X$ is in Lemma~\ref{Lemma:BoundOnX}.

We begin with  introducing some notation.
Let $\DSumDiff{i} = \Sigma^i_{j=0} (B[j] - A[j])\cdot \frac{1}{d^j} $ for all index $i\geq 0$. Also, let $ \DSumDiff{\cdot} = \Sigma^{\infty}_{j=0} (B[j] - A[j])\cdot \frac{1}{d^j} = \DSum{B}{d} - \DSum{A}{d} $. Define $ \MaxC = \max\cdot\frac{d}{d-1} $.
 We define the residual function $ \Res: \mathbb{N} \cup \{0\} \mapsto \mathbb{R} $ as follows:
 \[
 \Res(i) =
 \begin{cases}
 \DSumDiff{\cdot} - \floor{\DSumDiff{\cdot}} & \text{if } i = 0 \\
 \Res(i-1) - \floor{\Res(i-1)\cdot d^i}\cdot \frac{1}{d^i } & \text{otherwise}
 \end{cases}\]
 Then we define $C[i]$ as follows:
 \[
 C[i] =
 \begin{cases}
 \floor{\DSumDiff{\cdot}} &\text{if } i = 0 \\
 \floor{\Res(i-1)\cdot d^i } &\text{otherwise}
 \end{cases}\]
 Intuitively, $ C[i] $ is computed by \emph {stripping off} the value of the $ i $-th digit in a representation of $ \DSumDiff{\cdot} $ in base $ d $.
 $ C[i] $ denotes the numerical value of the $ i $-th position of the difference between $ B $ and $ A $. The residual function denotes the numerical value of the difference remaining after assigning the value of $ C[i] $ until that $ i $.

 We define function $ \CSum(i): \N\cup\{0\} \rightarrow \mathbb{Z}$ s.t. $\CSum(i) = \Sigma_{j=0}^{i} C[j]\cdot \frac{1}{d^j} $.
 Then, we define $  X[i] $ as follows:
 \[
 X[i] = (\DSumDiff{i} - \CSum(i)) \cdot d^i
 \]
Therefore, we have defined sequences $C$ and $X$ as above. We now prove the desired properties  one-by-one.

 First, we establish  that sequences $C$, $X$ as defined here satisfy Equations~\ref{eq:initial}-\ref{eq:invariant} from Lemma~\ref{Lemma:discount-invariant1}. Therefore, ensuring that $C$ is indeed the difference between sequences $B$ and $A$, and $X$ is their carry-sequence.
  \begin{lem}%
  \label{Lemma:XInvariant}
  Let $A$ and $B$ be bounded integer sequences and $C$ and $X$ be defined as above. Then,
  \begin{enumerate}
  \item $ B[0] = A[0] + C[0] + X[0] $
  \item For $ i \geq 1 $, $ B[i] + d\cdot X[i-1] = A[i] + C[i] + X[i] $
  \end{enumerate}
  \end{lem}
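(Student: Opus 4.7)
The plan is to reduce both claims to a direct algebraic identity relating $X[i]$ to $X[i-1]$, using only the definitions of $X$, $\CSum$, and $\DSumDiff{\cdot}$. I would not even need any properties of $\Res(\cdot)$ or $\lfloor \cdot \rfloor$; the claim is a purely telescoping computation, so floors and residuals serve only to fix particular values of $C[i]$ that will turn out to be irrelevant to this particular lemma.

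For the base case $i=0$, unfolding the definition gives $X[0] = (\DSumDiff{0} - \CSum(0))\cdot d^0 = (B[0] - A[0]) - C[0]$, since $\DSumDiff{0} = B[0] - A[0]$ and $\CSum(0) = C[0]$. Rearranging yields $B[0] = A[0] + C[0] + X[0]$ directly.

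For the inductive step $i \ge 1$, I would expand $X[i]$ using the recursive expressions $\DSumDiff{i} = \DSumDiff{i-1} + (B[i] - A[i])\cdot d^{-i}$ and $\CSum(i) = \CSum(i-1) + C[i]\cdot d^{-i}$. This gives
\[
X[i] = (\DSumDiff{i-1} - \CSum(i-1))\cdot d^i + (B[i] - A[i] - C[i]).
\]
Since $(\DSumDiff{i-1} - \CSum(i-1))\cdot d^i = d \cdot X[i-1]$ by the definition of $X[i-1]$, this simplifies to $X[i] = d\cdot X[i-1] + B[i] - A[i] - C[i]$, which is exactly the desired identity $B[i] + d\cdot X[i-1] = A[i] + C[i] + X[i]$.

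I do not expect any real obstacle here: the proof is a one-line manipulation once one notices the telescoping between $\DSumDiff{i}$ and $\CSum(i)$. The hypothesis that $A$ and $B$ are bounded integer sequences plays no role in this particular lemma; it is needed only for later claims, such as establishing integrality and boundedness of $C$ and $X$ (Lemmas on bounds of $C$ and $X$), which require the floor definitions to be exploited.
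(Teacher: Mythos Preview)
Your proposal is correct and follows essentially the same approach as the paper: both unfold the definition of $X[i]$ and use the one-step recursions $\DSumDiff{i} = \DSumDiff{i-1} + (B[i]-A[i])\,d^{-i}$ and $\CSum(i) = \CSum(i-1) + C[i]\,d^{-i}$ to rewrite $X[i]$ in terms of $X[i-1]$. The paper phrases this as an induction (and even spells out the $i=1$ case separately), but as you rightly note the inductive hypothesis is never actually invoked---only the definition of $X[i-1]$ is used---so your more direct presentation is a faithful and slightly cleaner version of the same argument.
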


  \begin{proof}
  We prove this by induction on $ i $ using definition of function $ X $.

  When $ i = 0 $, then $ X[0] = \DSumDiff{0} - \CSum(0) \implies X[0] = B[0] - A[0] - C[0] \implies B[0] = A[0] + C[0] + X[0] $.


  When $ i = 1 $, then $ X[1] =  (\DSumDiff{1} - \CSum(1))\cdot d = (B[0] + B[1]\cdot \frac1d ) - (A[0] + A[1]\cdot\frac1d) - (C[0] + C[1]\cdot\frac1d)\cdot d \implies X[1] = B[0]\cdot d + B[1] - (A[0]\cdot d + A[1]) - (C[0]\cdot d + C[1])$. From the above we obtain $X[1] = d \cdot X[0] + B[1] - A[1] - C[1] \implies B[1] + d \cdot X[0] = A[1] + C[1]+ X[1] $.


  Suppose the invariant holds true for all $ i\leq n $, we show that it is true for $ n+1 $. $X[n+1] = (\DSumDiff{n+1} - \CSum(n+1))\cdot d^{n+1} \implies X[n+1] = (\DSumDiff{n} - \CSum(n))\cdot d^{n+1} + (B[n+1] - A[n+1] - C[n+1]) \implies X[n+1] = X[n]\cdot d + B[n+1] - A[n+1] - C[n+1] \implies B[n+1] + X[n] \cdot d = A[n+1] + C[n+1] + X[n+1]$.
  \end{proof}

  Next, we establish the sequence $C$ is a bounded integer sequence, therefore it can be represented by a finite-set of integers. First of all, by definition of $C[i]$ it is clear that $C[i]$ is an integer for all $i\geq 0$.  We are left with proving boundedness of $C$. Lemma~\ref{Lemma:SimplifyResidual}-Lemma~\ref{Lemma:BoundOnC} establish boundedness of $C[i]$.

  \begin{lem}%
  \label{Lemma:SimplifyResidual}
  For all $ i \geq 0 $, $ \Res(i) = \DSumDiff{\cdot} - \CSum(i) $.
  \end{lem}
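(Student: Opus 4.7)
The plan is to prove this by straightforward induction on $i$, unwinding the recursive definitions of $\Res$ and $\CSum$.

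For the base case $i=0$, I would observe that by definition $\CSum(0) = C[0]\cdot \frac{1}{d^0} = C[0] = \floor{\DSumDiff{\cdot}}$, and similarly $\Res(0) = \DSumDiff{\cdot} - \floor{\DSumDiff{\cdot}}$. Substituting the former into the latter immediately yields $\Res(0) = \DSumDiff{\cdot} - \CSum(0)$, as required.

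For the inductive step, assume $\Res(i-1) = \DSumDiff{\cdot} - \CSum(i-1)$. Unfolding the definition of $\Res(i)$ gives $\Res(i) = \Res(i-1) - \floor{\Res(i-1)\cdot d^i}\cdot \frac{1}{d^i}$. The key observation is that the factor $\floor{\Res(i-1)\cdot d^i}$ is exactly $C[i]$ by definition. Substituting both this identity and the inductive hypothesis yields
\[
\Res(i) \;=\; \bigl(\DSumDiff{\cdot} - \CSum(i-1)\bigr) - C[i]\cdot \tfrac{1}{d^i} \;=\; \DSumDiff{\cdot} - \CSum(i),
\]
where the last equality uses $\CSum(i) = \CSum(i-1) + C[i]\cdot\frac{1}{d^i}$, which follows directly from the definition of $\CSum$ as a partial sum.

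There is no real obstacle here: the statement is essentially a telescoping identity that tracks how much of $\DSumDiff{\cdot}$ has been accounted for after stripping off the first $i+1$ base-$d$ digits. The only bookkeeping is to make sure the index shifts in the recursion of $\Res$ line up with the summation defining $\CSum$, which is handled cleanly by the induction above. This lemma will then be the workhorse for the subsequent claim that $C[i]$ is bounded (hence drawn from a finite alphabet), since $\Res(i)$ can be rewritten as a tail of the series $\DSumDiff{\cdot}$, which is controlled by the bound $\MaxC = \mu\cdot\frac{d}{d-1}$.
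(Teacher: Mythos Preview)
Your proof is correct and follows essentially the same route as the paper: induction on $i$, with the base case unfolding $\Res(0)$ and $C[0]$, and the inductive step recognizing $\floor{\Res(i-1)\cdot d^i}=C[i]$ and then absorbing the new term into $\CSum$. Your closing remark slightly misattributes the downstream use (the lemma is invoked primarily in bounding $X[i]$, not $C[i]$), but this does not affect the proof itself.
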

  \begin{proof}
  Proof by simple induction on the definitions of functions $ \Res $ and $ C $.

  \begin{enumerate}
  	\item When  $i = 0$,  $\Res(0) = \DSumDiff{\cdot} - \floor{\DSumDiff{\cdot}}$. By definition of $C[0]$, $\Res(0) = \DSumDiff{\cdot} - C[0] \iff \Res(0) = \DSumDiff{\cdot} - \CSum(0)  $.

  	\item Suppose the induction hypothesis is true for all $i<n$. We prove it is true when $i = n$. When $i = n$, $\Res(n) = \Res(n-1) - \floor{\Res(n-1)\cdot d^{n} } \cdot \frac{1}{d^n}$.  By definition of $C[n]$ and I.H, we get
  	$\Res(n) = (\DSumDiff{\cdot} - \CSum(n-1)) - C[n] \cdot \frac{1}{d^n}$. Therefore
  		$\Res(n) = \DSumDiff{\cdot} - \CSum(n)$.
        \qedhere
  \end{enumerate}
  \end{proof}

  \begin{lem}%
  \label{Lemma:BoundOnRes}
If $ \DSumDiff{\cdot}\geq 0 $, then for all $ i\geq  0 $, $ 0 \leq \Res(i) < \frac{1}{d^i} $.
  \end{lem}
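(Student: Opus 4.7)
The plan is to prove the two inequalities by straightforward induction on $i$, relying on a single observation: multiplying the recursion $\Res(i) = \Res(i-1) - \floor{\Res(i-1)\cdot d^i}\cdot \frac{1}{d^i}$ through by $d^i$ produces
\[
\Res(i)\cdot d^i \;=\; \Res(i-1)\cdot d^i - \floor{\Res(i-1)\cdot d^i},
\]
which is precisely the fractional part of $\Res(i-1)\cdot d^i$. Since the fractional part of any non-negative real lies in $[0,1)$, we get $0 \leq \Res(i)\cdot d^i < 1$, and dividing by $d^i>0$ yields the desired bound $0\leq \Res(i) < \frac{1}{d^i}$. Thus the entire inductive step reduces to maintaining non-negativity of $\Res(i-1)$ so that the floor operation produces an integer part rather than rounding downward past zero.

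For the base case $i=0$, the hypothesis $\DSumDiff{\cdot}\geq 0$ guarantees that $\Res(0) = \DSumDiff{\cdot} - \floor{\DSumDiff{\cdot}}$ is a genuine fractional part, hence in $[0,1) = [0,1/d^0)$. For the inductive step, assume $0 \leq \Res(i-1) < 1/d^{i-1}$; in particular $\Res(i-1)\cdot d^i \geq 0$, so the identity above places $\Res(i)\cdot d^i$ in $[0,1)$, which after division by $d^i$ gives both halves of the desired conclusion and in particular preserves non-negativity for the next step.

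There is essentially no obstacle: the definition of $\Res$ has been engineered precisely so that $\Res(i)$ tracks the residual value left after stripping off the first $i$ base-$d$ digits of $\DSumDiff{\cdot}$, and the bound $1/d^i$ is simply the place value of the next digit. It is worth remarking that the sharper upper bound in the inductive hypothesis is not actually used in the step — only non-negativity is invoked — so the induction could equivalently be phrased as propagating $\Res(i)\geq 0$ and deriving the upper bound directly from the fractional-part identity each time. This lemma will then feed into the subsequent argument that $C[i] = \floor{\Res(i-1)\cdot d^i}$ is a bounded integer sequence, since the bound $\Res(i-1) < 1/d^{i-1}$ established here will control the magnitude of $\Res(i-1)\cdot d^i$.
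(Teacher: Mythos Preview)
Your proof is correct and follows essentially the same inductive argument as the paper: both hinge on recognizing that $\Res(i)\cdot d^i$ is the fractional part of $\Res(i-1)\cdot d^i$, with non-negativity of $\Res(i-1)$ ensuring this fractional part lies in $[0,1)$. Your remark that only non-negativity (not the full upper bound) is needed in the inductive step is a nice clarification the paper does not make explicit.
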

  \begin{proof}
   Since $ \DSumDiff{\cdot}\geq 0 $, $ \Res(0) = \DSumDiff{\cdot} - \floor{\DSumDiff{\cdot}} \geq 0 $ and $ \Res(0) = \DSumDiff{\cdot} - \floor{\DSumDiff{\cdot}} < 1 $. Specifically, $ 0 \leq \Res(0) < 1 $.

   Suppose for all $ i \leq k $, $ 0\leq  \Res(i) < \frac{1}{d^i} $. We show this is true even for $ k+1 $.

   Since $ \Res(k)\geq 0 $,  $ \Res(k)\cdot d^{k+1}  \geq 0 $. Let $ \Res(k)\cdot d^{k+1}  = x+f $, for integral $ x\geq 0 $, and fractional $ 0 \leq f < 1 $. Then, from definition of $\Res$, we get $\Res(k+1) = \frac{x+f}{d^{k+1}} - \frac{x}{d^{k+1}} \implies \Res(k+1) < \frac{1}{d^{k+1}} $.

   Also, $ \Res(k+1) \geq 0 $ since $ a - \floor{a } \geq 0 $ for all positive values of $ a $ (Lemma~\ref{Lemma:SimplifyResidual}).
  \end{proof}

  \begin{lem}%
  \label{Lemma:BoundOnC}
Let $\MaxC = \mu\cdot \frac{d}{d-1}$.  When $ \DSumDiff{\cdot} \geq 0 $, for $ i = 0 $, $0 \leq C(0) \leq \MaxC $, and for $ i \geq 1 $, $0 \leq  C(i) < d $.
  \end{lem}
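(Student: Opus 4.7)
The plan is to split the bound into the two cases spelled out in the statement and handle each with a direct computation; both pieces reduce to applying previously established facts about $\DSumDiff{\cdot}$ and $\Res$.

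For the base case $i=0$, I would first establish the range of $\DSumDiff{\cdot}$. Since $A$ and $B$ are integer sequences bounded in absolute value by $\mu$, the geometric-series estimate gives $0\leq \DSum{A}{d},\DSum{B}{d}\leq \mu\sum_{j=0}^{\infty}\frac{1}{d^j}=\mu\cdot\frac{d}{d-1}=\MaxC$. Combined with the hypothesis $\DSumDiff{\cdot}\geq 0$, this yields $0\leq \DSumDiff{\cdot}\leq \MaxC$. Since $C[0]=\floor{\DSumDiff{\cdot}}$ and $\MaxC$ is a positive real, we immediately get $0\leq C[0]\leq\MaxC$.

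For the inductive case $i\geq 1$, the key input is Lemma~\ref{Lemma:BoundOnRes}, which tells us $0\leq \Res(i-1)<\frac{1}{d^{i-1}}$. Multiplying through by $d^i$ gives $0\leq \Res(i-1)\cdot d^i<d$. Since $C[i]=\floor{\Res(i-1)\cdot d^i}$ by definition, taking the floor preserves these bounds (noting that the floor of a value in $[0,d)$ for integer $d\geq 2$ lies in $\{0,1,\dots,d-1\}$), yielding $0\leq C[i]<d$.

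There is essentially no obstacle here: the $i=0$ case is a one-line geometric-series calculation using the uniform bound $\mu$ on sequences, and the $i\geq 1$ case is a direct rescaling of the inequality already proven for $\Res$. The only thing worth double-checking is the edge behaviour of the floor at $\Res(i-1)\cdot d^i=d$, which Lemma~\ref{Lemma:BoundOnRes} rules out by its strict inequality, so the bound $C[i]<d$ is tight and correct.
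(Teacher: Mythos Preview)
Your proposal is correct and follows essentially the same approach as the paper: for $i=0$ you bound $\DSumDiff{\cdot}$ via the geometric series on the bounded non-negative sequences and take the floor, and for $i\geq 1$ you invoke Lemma~\ref{Lemma:BoundOnRes}, multiply by $d^i$, and take the floor---exactly as the paper does. The only cosmetic difference is that the paper phrases the $i=0$ bound by exhibiting the extremal sequences $B=\mu^\omega$, $A=0^\omega$ rather than writing out the geometric sum.
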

  \begin{proof}
  Since both $ A $ and $ B $ are non-negative bounded weight sequences, maximum value of $ \DSumDiff{\cdot} $ is when $ B = \{\max\}_{i} $ and $ A = \{0\}_i $. In this case $ \DSumDiff{\cdot} = \MaxC $. Therefore, $ 0 \leq C[0] \leq \MaxC $.

  From Lemma~\ref{Lemma:BoundOnRes}, we know that for all $ i$, $ 0 \leq \Res(i) < \frac{1}{d^i} $. Alternately, when $ i \geq 1 $, $ 0 \leq \Res(i-1) < \frac{1}{d^{i-1}} \implies 0 \leq \Res(i-1) \cdot d^i < \frac{1}{d^{i-1}} \cdot d^i \implies 0 \leq \Res(i-1)\cdot d^i < d \implies 0 \leq \floor{\Res(i-1)\cdot d^i} < d \implies 0 \leq C[i] < d$.
  \end{proof}
Therefore, we have established  that sequence $C$ is non-negative integer-valued and is bounded by $\MaxC = \mu\cdot \frac{d}{d-1}$.

 Finally,  we prove that sequence $X$ is also a bounded-integer sequence, thereby proving that it is bounded, and can be represented with a finite-set of integers. Note that for all $i\geq 0$, by expanding out the definition of $X[i]$ we get that $X[i]$ is an integer for all $i\geq 0$.
 We are left with proving boundedness of $X$:

 \begin{lem}%
 \label{Lemma:BoundOnX}
 Let $ \MaxX = 1 + \frac{\max}{d-1} $. When $ \DSumDiff{\cdot} \geq 0 $, then for all $ i\geq 0 $, $ |X(i)| \leq \MaxX $.
 \end{lem}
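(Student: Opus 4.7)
The plan is to exploit the relation between $\CSum(i)$ and $\Res(i)$ established in Lemma~\ref{Lemma:SimplifyResidual}, namely $\CSum(i) = \DSumDiff{\cdot} - \Res(i)$, in order to rewrite $X[i]$ as a sum of two quantities each of which can be bounded independently. First I would substitute this identity into the definition $X[i] = (\DSumDiff{i} - \CSum(i))\cdot d^i$ to obtain
\[
X[i] \;=\; \bigl(\DSumDiff{i} - \DSumDiff{\cdot} + \Res(i)\bigr)\cdot d^i \;=\; \Res(i)\cdot d^i \;-\; \sum_{j=i+1}^{\infty} (B[j]-A[j])\cdot d^{\,i-j}.
\]

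Next I would bound the two summands separately. For the first summand, Lemma~\ref{Lemma:BoundOnRes} gives $0 \le \Res(i) < 1/d^i$, hence $0 \le \Res(i)\cdot d^i < 1$. For the second summand, since $A,B$ take values in $\{0,1,\dots,\mu\}$ we have $|B[j]-A[j]| \le \mu$, so the tail is dominated by a geometric series:
\[
\Bigl|\sum_{j=i+1}^{\infty}(B[j]-A[j])\cdot d^{\,i-j}\Bigr| \;\le\; \mu\sum_{k=1}^{\infty} d^{-k} \;=\; \frac{\mu}{d-1}.
\]
Combining the two estimates, $X[i]$ lies in the interval $\bigl(-\tfrac{\mu}{d-1},\; 1+\tfrac{\mu}{d-1}\bigr)$, so $|X[i]| < 1 + \tfrac{\mu}{d-1} = \MaxX$, as required.

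The only step that requires care is justifying the rewriting in the first display: one must notice that $\DSumDiff{i}$ is a finite prefix sum and $\DSumDiff{\cdot}$ is the full infinite sum, so their difference is exactly the tail $\sum_{j>i}(B[j]-A[j])/d^j$; multiplying by $d^i$ then yields the geometric bound above. I do not expect any real obstacle: once the identity from Lemma~\ref{Lemma:SimplifyResidual} is in hand, the bound falls out by splitting $X[i]$ into a ``residual'' contribution (less than $1$) and a ``tail'' contribution (at most $\mu/(d-1)$ in absolute value), and this decomposition is precisely what $\MaxX = 1 + \mu/(d-1)$ is designed to accommodate.
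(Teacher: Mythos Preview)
Your proof is correct and follows essentially the same route as the paper: substitute $\CSum(i)=\DSumDiff{\cdot}-\Res(i)$ from Lemma~\ref{Lemma:SimplifyResidual} into the definition of $X[i]$, split into the residual term bounded by Lemma~\ref{Lemma:BoundOnRes} and the tail term bounded by the geometric series $\mu/(d-1)$, and combine. The only cosmetic difference is that you track the sign of each piece to obtain the slightly sharper asymmetric interval (and hence a strict inequality), whereas the paper applies the triangle inequality directly to get $|X(i)|\le \MaxX$.
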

 \begin{proof}
 From definition of $ X $, we know that $  X(i) = (\DSumDiff{i} - \CSum(i)) \cdot d^i  \implies  X(i) \cdot \frac{1}{d^i} =  \DSumDiff{i} - \CSum(i) $. From Lemma~\ref{Lemma:SimplifyResidual} we get $ X(i) \cdot \frac{1}{d^i} =  \DSumDiff{i} - (\DSumDiff{\cdot} - \Res(i)) \implies X(i) \cdot \frac{1}{d^i} =  \Res(i) - (\DSumDiff{\cdot} - \DSumDiff{i}) \implies  X(i) \cdot \frac{1}{d^i} =  \Res(i) -
  (\Sigma_{j=i+1}^{\infty}(B[j]-A[j])\cdot \frac{1}{d^j}) \implies  |X(i) \cdot \frac{1}{d^i}| \leq |\Res(i)| + |(\Sigma_{j=i+1}^{\infty}(B[j]-A[j])\cdot \frac{1}{d^j}) | \implies  |X(i) \cdot \frac{1}{d^i}| \leq |\Res(i)| + \frac{1}{d^{i+1}}\cdot|(\Sigma_{j=0}^{\infty}(B[j+i+1]-A[j+i+1])\cdot \frac{1}{d^j}) | \implies  |X(i) \cdot \frac{1}{d^i}| \leq |\Res(i)| + \frac{1}{d^{i+1}}\cdot|\MaxC|$. From Lemma~\ref{Lemma:BoundOnRes}, this implies  $|X(i) \cdot \frac{1}{d^i}| \leq \frac{1}{d^i} + \frac{1}{d^{i+1}}\cdot|\MaxC| \implies  |X(i)| \leq 1 + \frac{1}{d}\cdot|\MaxC| \implies  |X(i)| \leq 1 + \frac{\max}{d-1} \implies |X(i)| \leq \MaxX$
 \end{proof}

 We summarize our results from Lemma~\ref{Lemma:XInvariant}-Lemma~\ref{Lemma:BoundOnX} as follows:
 \begin{cor}%
 	\label{lem:FiniteAlphabetXandC}
 	Let $d>1$ be an integer discount-factor. Let $A$ and $B$ be non-negative
 	integer sequences bounded by $\mu$, and $\DSum{A}{d} <
 	\DSum{B}{d}$. Then there exists bounded integer-valued sequences $X$ and $C$ that satisfy the conditions in Lemma~\ref{Lemma:discount-invariant1}. Furthermore, $C$ and $X$ are bounded as follows:
 	\begin{enumerate}
 		\item\label{Item1:BoundOnC}  $0 \leq C[0] \leq \mu\cdot \frac{d}{d-1}$ and for all $i\geq 1$, $0\leq C[i] < d$,
 		\item For all $i \geq 0$, $0 \leq |X[i]| \leq 1 + \frac{\mu}{d-1}$
 	\end{enumerate}
 \end{cor}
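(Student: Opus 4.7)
The plan is to treat this as a straightforward assembly of the preceding lemmas rather than a fresh argument. I would begin by constructing $C$ and $X$ explicitly using the definitions already given just before Lemma~\ref{Lemma:XInvariant}: namely, $C[0] = \lfloor \DSumDiff{\cdot} \rfloor$ and $C[i] = \lfloor \Res(i-1)\cdot d^i\rfloor$ for $i\geq 1$, with $\Res$ defined recursively from $\DSumDiff{\cdot}$, and $X[i] = (\DSumDiff{i}-\CSum(i))\cdot d^i$. Since $\DSum{A}{d} < \DSum{B}{d}$, we have $\DSumDiff{\cdot} > 0$, so the hypothesis $\DSumDiff{\cdot}\geq 0$ needed by Lemma~\ref{Lemma:BoundOnRes}, Lemma~\ref{Lemma:BoundOnC} and Lemma~\ref{Lemma:BoundOnX} is satisfied.

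Next, I would invoke Lemma~\ref{Lemma:XInvariant} verbatim to conclude that the sequences $C$ and $X$ satisfy the two equations $B[0] = A[0]+C[0]+X[0]$ and $B[i]+d\cdot X[i-1] = A[i]+C[i]+X[i]$ for $i\geq 1$, which are precisely the conditions of Lemma~\ref{Lemma:discount-invariant1}. The bound $0\leq C[0]\leq \mu\cdot\frac{d}{d-1}$ and $0\leq C[i]<d$ for $i\geq 1$ is immediate from Lemma~\ref{Lemma:BoundOnC}, and the bound $|X[i]|\leq 1+\frac{\mu}{d-1}$ for all $i\geq 0$ is immediate from Lemma~\ref{Lemma:BoundOnX}.

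The only remaining point to justify is that $C$ and $X$ are integer-valued, since the preceding lemmas produce them as real numbers. For $C$, integrality is built into the definition because each entry is a floor. For $X$, one can observe that $X[0] = B[0]-A[0]-C[0]$ and inductively $X[i] = B[i] + d\cdot X[i-1] - A[i] - C[i]$ by the equations of Lemma~\ref{Lemma:XInvariant}; since $A[i]$, $B[i]$, $C[i]$ and $d$ are all integers, an immediate induction shows $X[i]\in\mathbb{Z}$ for every $i$.

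I do not anticipate any substantive obstacle here, since this corollary is purely a repackaging of the chain Lemma~\ref{Lemma:XInvariant}, Lemma~\ref{Lemma:SimplifyResidual}, Lemma~\ref{Lemma:BoundOnRes}, Lemma~\ref{Lemma:BoundOnC} and Lemma~\ref{Lemma:BoundOnX} under the single hypothesis $\DSumDiff{\cdot}\geq 0$. The mild subtlety worth stating explicitly is the integrality of $X$, which is not made explicit in Lemma~\ref{Lemma:BoundOnX} but follows immediately from the recurrence once one knows $C$ is integer-valued.
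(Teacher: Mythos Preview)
Your proposal is correct and matches the paper's approach exactly: the paper presents this corollary with no separate proof, simply stating ``We summarize our results from Lemma~\ref{Lemma:XInvariant}--Lemma~\ref{Lemma:BoundOnX} as follows,'' and handles integrality in the surrounding prose just as you do (floor for $C$; for $X$ the paper expands the definition $X[i]=\sum_{j=0}^i (B[j]-A[j]-C[j])\,d^{\,i-j}$ rather than using the recurrence, but this is an inessential variation of your inductive argument).
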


\noindent
Intuitively, we construct a B\"uchi automaton $\A_{\DSsucceqFord{d}}^< $ with states of the form $(x,c)$ where $x$ and $c$ range over all possible values of $X$ and $C$, respectively, and a special initial state $s$. Transitions over alphabet $(a,b)$ replicate the equations in Lemma~\ref{Lemma:discount-invariant1}.
i.e.\ transitions from the start state $(s,(a,b), (x,c))$ satisfy  $a+c+x = b$ to replicate Equation~\ref{eq:initial} (Lemma~\ref{Lemma:discount-invariant1}) at the 0-th index, and all other transitions $((x_1, c_1), (a,b), (x_2, c_2))$ satisfy $a+ c_2 + x_2 = b + d\cdot x_1$ to replicate Equation~\ref{eq:invariant} (Lemma~\ref{Lemma:discount-invariant1}) at indexes $i>0$.
Full construction is as follows:

\renewcommand{\max}{\mu}
\renewcommand{\MaxC}{\mu_C}
\renewcommand{\MaxX}{\mu_X}

\paragraph{Construction}
Let $ \MaxC = \max \cdot \frac{d}{d-1}  $ and $ \MaxX = 1 + \frac{\max}{d-1}$.
 $ \A_{\DSsucceqFord{d}}^< = (\Statess, \Sigma, \delta_d, \StartState, \Final) $
\begin{itemize}

\item $ \Statess =  \{s\} \cup \AcceptingStates \cup S_{\bot} $ where \\
$\Final = \{(x,c) ||x| \leq \MaxX, 0 \leq c \leq \MaxC \} $, and \\
$S_{\bot} = \{(x, \bot) | | x| \leq \MaxX\}$ where $\bot$ is a special character, and $ c \in \N$, $x \in \mathbb{Z}$.


\item State $s$ is the initial state, and  $\Final$ are accepting states

\item $ \Sigma = \{(a,b) : 0 \leq a, b \leq \max \} $ where $ a $ and $ b $ are integers.

\item $\delta_d \subseteq \Statess \times \Sigma \times \Statess$ is defined as follows:
	\begin{enumerate}
	\item Transitions from start state $ s $:
	\begin{enumerate}[label = \roman*]
	\item $ (s ,(a,b), (x,c)) $ for all $ (x,c) \in \AcceptingStates $ s.t. $ a + x + c = b $ and $ c \neq 0 $

	\item $ (s ,(a,b), (x, \bot)) $ for all $ (x, \bot) \in S_{\bot} $ s.t. $ a + x  = b $

	\end{enumerate}

	\item Transitions within $ S_{\bot} $: $ ((x, \bot) ,(a,b), (x', \bot) )$ for all $(x, \bot)$, $(x', \bot) \in S_{\bot} $, if  $ a + x' = b + d\cdot x $

	\item Transitions within $ \Final $: $ ((x,c) ,(a,b), (x',c') )$ for all $ (x,c)$, $(x',c') \in \Final $ where $ c' < d $, if $ a + x' + c' = b + d\cdot x $

	\item Transition between $ S_{\bot} $ and $ \Final $: $ ((x, \bot),(a,b), (x',c')) $ for all $ (x,\bot) \in S_{\bot} $, $ (x',c') \in \Final $ where $ 0 < c' < d $, if  $ a + x' + c' = b + d\cdot x $

	\end{enumerate}

\end{itemize}

\begin{thm}%
	\label{thm:Construction}
	Let $d>1$ be an integer discount-factor, and $\mu>1$ be an integer upper-bound.
	B\"uchi automaton $ \A_{\DSsucceqFord{d}}^<$ accepts pair of bounded sequences $(A,B)$ iff $\DSum{A}{d}\leq \DSum{B}{d}$.  The B\"uchi automaton has $\mathcal{O}(\frac{\mu^2}{d})$-many states.
\end{thm}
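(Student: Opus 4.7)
The plan is to show that accepting runs of $\A_{\DSsucceqFord{d}}^<$ on $(A,B)$ correspond bijectively to choices of auxiliary sequences $X,C$ satisfying the arithmetic invariant of Lemma~\ref{Lemma:discount-invariant1}, and then to invoke Corollary~\ref{lem:FiniteAlphabetXandC} to guarantee such sequences exist (and lie inside the finite state space) precisely when $\DSum{A}{d}<\DSum{B}{d}$. By design, the transition constraints of the construction are a line-by-line rewriting of Equations~\ref{eq:initial}-\ref{eq:invariant}: edges out of $s$ encode $a+x+c=b$, internal edges encode $a+x'+c'=b+d\cdot x$, and a state carrying $\bot$ as its second component simply tracks prefixes of the run where $C$ has been identically zero so that the acceptance condition, rather than the arithmetic, certifies strict inequality.

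For soundness, I take an accepting run and read off $X[i], C[i]$ from the state entered after reading $(A[i],B[i])$, using $C[i]=0$ for $\bot$-states. The transition conditions immediately yield both cases of Lemma~\ref{Lemma:discount-invariant1}, so $\DSum{B}{d}=\DSum{A}{d}+\DSum{C}{d}$. Every $C[i]$ is nonnegative by inspection of $\Final$ and $S_\bot$, so it remains only to argue $\DSum{C}{d}>0$; this follows because acceptance forces a visit to $\Final$, and the only ways to enter $\Final$ (directly from $s$, or from $S_\bot$) require the entering $c$ to be strictly positive.

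For completeness, I assume $\DSum{A}{d}<\DSum{B}{d}$, invoke Corollary~\ref{lem:FiniteAlphabetXandC} to obtain integer sequences $X,C$ with $|X[i]|\leq\MaxX$, $0\leq C[0]\leq\MaxC$, and $0\leq C[i]<d$ for $i\geq 1$, and then build the candidate run by sending the $i$-th post-transition state to $(X[i],C[i])\in\Final$ once the first strictly positive $C[j]$ has been observed, and to $(X[i],\bot)\in S_\bot$ before then. Since $\DSum{C}{d}>0$ and every $C[i]\geq 0$, some index with $C[i]>0$ exists, so the run enters the closed set $\Final$ and stays there, yielding acceptance. The main obstacle is the boundary case-analysis: verifying that each arithmetic constraint matches exactly one of the four transition rules, separately for the start-state edges, the interior of $S_\bot$, the $S_\bot\to\Final$ crossing, and the interior of $\Final$; each check is an immediate rewriting of Equation~\ref{eq:invariant} but must be carried out for all four regimes. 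Finally, the state count is $1+|\Final|+|S_\bot|=O(\MaxX\cdot\MaxC)=O(\mu^2 d/(d-1)^2)=O(\mu^2/d)$.
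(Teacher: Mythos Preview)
Your proof is correct and follows essentially the same approach as the paper: both directions hinge on reading off (respectively, supplying) the auxiliary sequences $X,C$ from the state labels, invoking Lemma~\ref{Lemma:discount-invariant1} for the arithmetic identity, Corollary~\ref{lem:FiniteAlphabetXandC} for existence and boundedness, and observing that the entry conditions into $\Final$ force some $C[j]>0$ so that $\DSum{C}{d}>0$. Your write-up is in fact slightly more complete than the paper's, since you explicitly justify the $\mathcal{O}(\mu^2/d)$ state count via $|\Final|+|S_\bot|=O(\MaxX\cdot\MaxC)$, which the paper asserts but does not spell out.
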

\begin{proof}
Corollary~\ref{lem:FiniteAlphabetXandC} proves that if $\DSum{A}{d}< \DSum{B}{d}$ then sequences $X$ and $C$ satisfying the integer sequence criteria and bounded-criteria will exist. Let these sequences be $X = X[0]X[1]\dots$ and $C = [0]C[1]\dots$.
Since $\DSum{C}{d} > 0$, there exists an index $i\geq 0$ where $C[i]>0$. Let the first position where $C[i]>0$ be index $j$.
By construction of $\A_{\DSsucceqFord{d}}^<$, the state sequence given by $s,(X[0],\bot)\dots, (X[j-1],\bot),(X[j], C[j]),(X[j+1], C[j+1])\dots $, where for all $i\geq j$, $C[i]\neq \bot$, forms a run of word $(A,B)$ in the B\"uchi automaton. Furthermore, this run is accepting since state $(x,c)$ where $c \neq \bot$ are accepting states. Therefore, ($A,B) $ is an accepting word in $\A_{\DSsucceqFord{d}}^<$.

 To prove the other direction, suppose the pair of sequences $(A, B)$ has an accepting run with state sequence $s$, $(x_0,\bot),\dots (x_{j-1}, \bot), (x_j,c_j), (x_{j+1}, c_{j+1})\dots $, where for all $i\geq j$, $c_j \neq \bot$. Construct sequences X and C as follows:
 For all $i\geq 0$, $X[i] = x_i$. For all $i<j$, $C[i] = 0$ and for all $i\geq j$ $C[i]=c_i$.
 Then the transitions of  $\A_{\DSsucceqFord{d}}^<$ guarantees  Equations~\ref{eq:initial}-~\ref{eq:invariant}
 from   Lemma~\ref{Lemma:discount-invariant1} to hold for sequences $A$,$B$ and $C$,$X$. Therefore, it must be the case that  $\DSum{B}{d} = \DSum{A}{d} + \DSum{C}{d}$.
 Furthermore, since the first transition to accepting states $(x,c)$ where $c \neq \bot$ is possible only if $c>0$, $\DSum{C}{d}>0$. Therefore, $ \DSum{A}{d}<\DSum{B}{d}$.
 Therefore,  $\A_{\DSsucceqFord{d}}^< $ accepts $(A,B)$ if $ \DSum{A}{d}<\DSum{B}{d}$.
\end{proof}

\begin{cor}%
	\label{thm:discountedSumComparator}
	DS-comparator  for integer discount-factors $d>1$ for all inequalities and equality are  $\omega$-regular.
\end{cor}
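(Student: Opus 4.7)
The plan is to reduce the corollary to the case already handled by Theorem~\ref{thm:Construction}, which constructs an $\omega$-regular comparator $\A_{\DSsucceqFord{d}}^{<}$ for integer discount-factor $d>1$ and strict inequality $<$, and then to invoke closure properties of B\"uchi automata to transfer $\omega$-regularity to every other relation in $\{>, \leq, \geq, =, \neq\}$. This strategy is justified by the general observation, already highlighted earlier in \textsection~\ref{Sec:Comparator}, that whenever the comparator for a single inequality is $\omega$-regular, comparators for all the other inequality and equality relations are $\omega$-regular as well.

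Concretely, I would first take the automaton $\A_{\DSsucceqFord{d}}^{<}$ from Theorem~\ref{thm:Construction} and swap the two input tracks; this gives a B\"uchi automaton accepting $(A,B)$ iff $(B,A) \in \L(\A_{\DSsucceqFord{d}}^{<})$, i.e.\ iff $\DSum{B}{d} < \DSum{A}{d}$, which is the comparator for $>$. Next, since $\omega$-regular languages are closed under complementation, the complement of $\A_{\DSsucceqFord{d}}^{<}$ yields an $\omega$-regular comparator for $\geq$, and the complement of the comparator for $>$ yields one for $\leq$. Intersecting the $\omega$-regular comparators for $\geq$ and $\leq$ produces an $\omega$-regular comparator for $=$, and its complement (equivalently, the union of the $<$ and $>$ comparators) provides an $\omega$-regular comparator for $\neq$.

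Each of these constructions preserves the B\"uchi acceptance condition because of the standard effective closure of $\omega$-regular languages under complement, union, and intersection, and swapping tracks is just a relabeling of the input alphabet $\Sigma\times\Sigma$. I do not foresee a substantive obstacle: the only point that needs brief verification is that the ``track swap'' operation indeed produces a B\"uchi automaton over the correct alphabet, which follows immediately by applying the alphabet bijection $(a,b)\mapsto (b,a)$ to the transition relation of $\A_{\DSsucceqFord{d}}^{<}$. Combining these five constructions with Theorem~\ref{thm:Construction} then gives the corollary.
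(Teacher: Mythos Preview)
Your proposal is correct and follows the same approach as the paper: the paper's proof is a one-line appeal to Theorem~\ref{thm:Construction} together with the closure properties of B\"uchi automata, and you have simply spelled out explicitly which closure operations (track-swap, complementation, intersection, union) realize each remaining relation.
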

\begin{proof}
	Immediate from Theorem~\ref{thm:Construction}, and closure properties of B\"uchi automaton.
\end{proof}
Constructions of DS-comparator with integer discount-factor $d>1$ for non-strict inequality $\leq$ and equality $=$ follow similarly and also have $\mathcal{O}(\frac{\mu^2}{d})$-many states.

\paragraph{Discounted-sum aggregate function}
We use the $\omega$-regular comparator for DS-aggregate function for integer discount-factor to prove that discounted-sum with integer discount-factors is an $\omega$-regular aggregate function.
\begin{thm}%
	\label{thm:DSRegular}
	Let $d>1$ be an integer discount-factor. The discounted-sum aggregate function with discount-factor $d$ is $\omega$-regular under base $d$.
\end{thm}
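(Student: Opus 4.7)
The plan is to reduce the $\omega$-regularity of the discounted-sum aggregate function to the already-established $\omega$-regularity of its equality comparator $\A_{\DSsucceqFord{d}}^=$ (Corollary~\ref{thm:discountedSumComparator}). Specifically, I would construct an aggregate function automaton $\A$ over $\{0,\dots,\mu\} \times \mathsf{AlphaRep}(d)$ that accepts $(A,\mathsf{rep}(x,d))$ iff $\DSum{A}{d} = x$, by composing the equality comparator with a finite-state ``decoder'' that relates a base-$d$ representation of $x$ to a canonical witness weight sequence whose discounted sum is $x$.

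First, I would observe that any $A \in \{0,\dots,\mu\}^{\omega}$ satisfies $0 \leq \DSum{A}{d} \leq \MaxC = \mu \cdot d/(d-1)$, so the sign of $\mathsf{rep}(\DSum{A}{d},d)$ is always $+$ and the integer part $\mathsf{Int}(\DSum{A}{d},d) = z_0 z_1 \ldots z_k \, 0^\omega$ has at most $k \leq \lceil\log_d \MaxC\rceil$ potentially nonzero digits $z_i \in \{0,\dots,d-1\}$. I would then build a B\"uchi automaton $\mathcal{R}$ over $\mathsf{AlphaRep}(d) \times \{0,\dots,\lceil \MaxC \rceil\}$ that accepts $(\mathsf{rep}(x,d),B)$ iff the sign of $x$ is $+$ and $B$ is the canonical witness given by $B[0] = \sum_i z_i d^i$ together with $B[i+1] = f_i$ (the $i$-th fractional digit of $x$). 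The automaton $\mathcal{R}$ is finite-state because $B[0]$ ranges over the bounded set $\{0,\dots,\lceil\MaxC\rceil\}$, and verifying $B[0] = \sum_i z_i d^i$ amounts to nondeterministically guessing $B[0]$ at the outset and subtracting $z_i d^i$ from the guess as integer digits arrive, requiring only $0$'s for the integer part beyond a finite point; by construction $\DSum{B}{d} = x$ whenever $\mathcal{R}$ accepts. Next, I would take the product of $\mathcal{R}$ with $\A_{\DSsucceqFord{d}}^=$ synchronizing on the $B$-coordinate, yielding a B\"uchi automaton over $\{0,\dots,\mu\} \times \mathsf{AlphaRep}(d) \times \{0,\dots,\lceil\MaxC\rceil\}$ that accepts $(A,\mathsf{rep}(x,d),B)$ iff $\DSum{A}{d} = \DSum{B}{d} = x$; projecting out the $B$-coordinate (which preserves $\omega$-regularity) produces the desired $\A$.

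Finally, I would verify the two conditions of Definition~\ref{def:aggregateregular}. For existence, given $A$, let $x = \DSum{A}{d}$ and let $B$ be the canonical witness associated with $\mathsf{rep}(x,d)$; then $(\mathsf{rep}(x,d),B) \in \L(\mathcal{R})$ and $(A,B) \in \L(\A_{\DSsucceqFord{d}}^=)$, so the product accepts and $(A,\mathsf{rep}(x,d)) \in \L(\A)$. For uniqueness, if $(A,\mathsf{rep}(x,d)),(A,\mathsf{rep}(y,d)) \in \L(\A)$, then $\DSum{A}{d} = x = y$, and $\mathsf{rep}$ is well-defined as a function of $x$ by the paper's exclusion of $\ldots(d-1)^\omega$ tails in $\mathsf{Frac}$. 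The main obstacle is correctly aligning the decoder $\mathcal{R}$ with the interleaved base-$d$ encoding in $\mathsf{AlphaRep}(d)$: since $B[0]$ aggregates information from integer digits observed over multiple steps, $\mathcal{R}$ must commit to $B[0]$ before seeing all integer digits, which is feasible only because $\MaxC$ uniformly bounds the integer part and the commitment can be discharged by nondeterministic guessing with subsequent deterministic verification.
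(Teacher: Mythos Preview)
Your proposal is correct and follows essentially the same approach as the paper: both arguments start from the $\omega$-regular equality comparator $\A_{\DSsucceqFord{d}}^{=}$ and compose it with a finite-state device that converts between a base-$d$ weight sequence $B$ and the $\mathsf{rep}(\cdot,d)$ encoding, then project out the auxiliary $B$-track. The paper phrases the conversion as a transducer $\mathcal{T}$ applied after first restricting the second component of the comparator to $\{0,\dots,d-1\}^\omega$, whereas you build a relational decoder $\mathcal{R}$ that nondeterministically commits to $B[0]\in\{0,\dots,\lceil\MaxC\rceil\}$ and verifies it against the incoming integer digits; your variant is slightly more careful in that it accommodates integer parts exceeding $d-1$ (which can arise when $\mu\geq d$), at the cost of needing the equality comparator instantiated with the larger bound $\max(\mu,\lceil\MaxC\rceil)$.
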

 \begin{proof}
 	We define the discounted-sum aggregate function automaton (DS-function automaton, in short): For integer $\mu>0$, let $\Sigma = \{0,1,\dots \mu\}$ be the input alphabet of DS-function, and  $d>1$ be its integer base. B\"uchi automaton $\A^\mu_d$ over alphabet $\Sigma\times\mathsf{AlphaRep}(d)$ is a DS-function automaton of type $\Sigma^\omega \rightarrow \Re$ if for all $A \in \Sigma^\omega$, $ (A, \mathsf{rep}(\DSum{A}{d}), d) \in \A^\mu_d$. Here we prove that such a $\A_d^\mu$ exists.

	Let $\mu>0$ be the integer upper-bound. Let $\A^=_d$ be the DS-comparator for integer discount-factor $d>1$ for relation $=$. Intersect  $\A^=_d $ with the B\"uchi automata consisting of all infinite words from alphabet $\{0,1\dots \mu\}\times \{0,\dots,d-1\}$. The resulting automaton $\mathcal{B}$ accepts $(A,B)$ for $A \in \{0,\dots,\mu\}^\omega$ and $B\in  \{0,\dots,d-1\}^\omega$ iff $\DSum{A}{d} = \DSum{B}{d}$.  Next, we want to ensure that for all $(A,B)$ accepted by $\mathcal{B}$, $B$ is not of the form $\{0,\dots, d-1\}^*\cdot (d-1)^\omega$. I.e., either $\DSum{B}{d}$ is an irrational number or $\DSum{B}{d}$ is a rational number in which case we prevent the finite-representation which ends in an infinite series of $(d-1)$. This can be represented by the B\"uchi automata $\mathcal{B} \setminus \mathcal{C}$, where $\mathcal{C}$ accepts words of the form $(A,B)$ s.t. $A \in \{0,\dots, \mu\}^\omega$ and $B \in \{0, \dots, d-1\}^* \cdot (d-1)^\omega$. A non-deterministic B\"uchi automaton for $\mathcal{C}$ can be constructed by hand.  The automaton for $\mathcal{B}\setminus\mathcal{C}$ will be a Parity automaton.

	Since all elements of $B$ are bounded by $d-1$, $\DSum{B}{d}$ can be represented as an $\omega$-word as follows: Let $B = B[0],B[1]\dots$, then  its  $\omega$-word representation in base $d$  is given by $ +\cdot (\mathsf{Int}(\DSum{B}{d}, d), \mathsf{Frac}(\DSum{B}{d}, d))$ where $\mathsf{Int}(\DSum{B}{d}, d) = B[0] \cdot 0^\omega$ and $ \mathsf{Frac}(\DSum{B}{d}, d) = B[1],B[2]\dots$. This transformation of integer sequence $B$ into its $\omega$-regular word form in base $d$ can be achieved with a simple transducer $\mathcal{T}$.

	Therefore, application of transducer $\mathcal{T}$ to Parity automaton $\mathcal{B}\setminus \mathcal{C}$ will result in a Parity automaton over the alphabet $\Sigma\times\mathsf{AlphaRep}(d)$ such that for all $A \in \Sigma^\omega$ the automaton accepts $(A, \mathsf{rep}(\DSum{A}{d},d))$. This is exactly the DS-function automaton over input alphabet $\Sigma$ and integer base $d>1$. Therefore, the discounted-sum aggregate function with integer discount-factors in $\omega$-regular.
 \end{proof}
 Recall, this proof works only for the discounted-sum aggregate function with integer discount-factor. In general, there is no known procedure to derive a function automaton from an $\omega$-regular comparator (Conjecture~\ref{Conjecture:comparatortofunction}).

 \paragraph{DS-inclusion}
 For discounted-sum with integer discount-factor it is in \textsf{EXPTIME}~\cite{boker2014exact,chatterjee2010quantitative} which does not match with its existing  \textsf{PSPACE} lower bound. In this section, we use the $\omega$-regular DS-comparator for integer to close the gap, and establish \cct{PSPACE-completeness} of DS-inclusion under a unary representation of numbers.
 \begin{cor}%
 	\label{Cor:DSInSEq}
 	Let integer $\mu>1$ be the maximum weight on transitions in DS-automata $P$ and $Q$, and $d>1$ be an integer discount-factor. Let $\mu$ and $d$ be represented in unary form. Then DS-inclusion, DS-strict-inclusion, and DS-equivalence between $P$ and $Q$ are $\cc{PSPACE}$-complete.

 \end{cor}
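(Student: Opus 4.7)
The plan is to derive this corollary as a direct consequence of the generic complexity result for quantitative inclusion with $\omega$-regular comparators (Theorem~\ref{thrm:RegularComplexity}) combined with the explicit size bound on the DS-comparator from Theorem~\ref{thm:Construction}. The only thing that needs verification is that, under the stated unary encoding of $\mu$ and $d$, the comparator remains polynomial-sized in the input, so that its contribution to the \cct{PSPACE} analysis of Algorithm~\ref{Alg:DSInclusion} does not blow up.

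\textbf{Upper bound.} First I would instantiate the construction of \textsection\ref{Sec:IntegerDF}: given DS-automata $P$ and $Q$ with maximum transition weight $\mu$ and integer discount-factor $d>1$, build the $\omega$-regular DS-comparator $\A_{\DSsucceqFord{d}}^{\leq}$ for non-strict inclusion (and $\A_{\DSsucceqFord{d}}^{<}$ for strict inclusion, $\A_{\DSsucceqFord{d}}^{=}$ for equivalence, each obtained via closure properties of B\"uchi automata from Corollary~\ref{thm:discountedSumComparator}). By Theorem~\ref{thm:Construction} the comparator has $\mathcal{O}(\mu^2/d)$ states, which is polynomial in $\mu$ and $d$ when both are encoded in unary. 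I would then feed $P$, $Q$, and this comparator into $\DSInclusion$ (Algorithm~\ref{Alg:DSInclusion}) and appeal to Theorem~\ref{thrm:RegularComplexity}: the algorithm runs in \cct{PSPACE} in the combined size $|P| + |Q| + |\A_{\DSsucceqFord{d}}^{\leq}|$, and hence in \cct{PSPACE} in the size of $P$ and $Q$ under unary encoding. The arguments for strict inclusion and equivalence are identical, using the corresponding $\omega$-regular comparators.

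\textbf{Lower bound.} For \cct{PSPACE}-hardness I would reuse the reduction from qualitative language inclusion sketched in the proof of Theorem~\ref{thrm:RegularComplexity}: given B\"uchi automata $P$ and $Q$ (all states accepting) on which inclusion is \cct{PSPACE}-complete, assign weight $1$ to every transition to produce weighted automata $\overline{P}$, $\overline{Q}$. Then every run of every word has the same constant weight sequence $1^\omega$, so for any fixed integer $d > 1$ the discounted-sum of every run is $d/(d-1)$; in particular $wt_{\overline{P}}(w)$ and $wt_{\overline{Q}}(w)$ agree exactly on $\L(\overline{P})$ and $\L(\overline{Q})$ respectively and are $-\infty$ otherwise. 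Thus $P\subseteq Q$ iff $\overline{P}\subseteq_{DS(d)} \overline{Q}$, and analogously for strict inclusion (using that strict inclusion on such constant-weight automata encodes a trivial qualitative condition) and for equivalence. Since $\mu = 1$ here, unary or binary encoding is immaterial.

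\textbf{Expected obstacle.} I do not anticipate any deep difficulty; the bookkeeping subtlety is making sure that all three variants (inequality, strict inequality, equality) are handled uniformly, which requires noting that Corollary~\ref{thm:discountedSumComparator} gives $\omega$-regular comparators for all six relations with the same $\mathcal{O}(\mu^2/d)$ size bound, and that Algorithm~\ref{Alg:DSInclusion}, as remarked after Lemma~\ref{Lemma:DSInclusionAlg}, adapts verbatim to strict inclusion and equivalence by swapping in the appropriate comparator. The necessity of the unary assumption should also be explicitly flagged: with binary encoding the bound $\mathcal{O}(\mu^2/d)$ is exponential in the input length and only yields \cct{EXPSPACE}, matching the existing \cct{EXPTIME} bound of~\cite{boker2014exact,chatterjee2010quantitative} rather than improving it.
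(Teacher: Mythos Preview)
Your proposal is correct and follows essentially the same route as the paper: invoke the $\mathcal{O}(\mu^2/d)$ state bound from Theorem~\ref{thm:Construction} to see that the DS-comparator is polynomial under unary encoding, then apply the generic \cct{PSPACE}-completeness of Theorem~\ref{thrm:RegularComplexity}. Your write-up is in fact considerably more detailed than the paper's two-line proof, which simply cites those two theorems; your explicit discussion of the lower bound and of the unary-versus-binary distinction adds clarity but no new ideas beyond what the paper already packages into Theorem~\ref{thrm:RegularComplexity}.
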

 \begin{proof}
  Since size of DS-comparator is polynomial w.r.t.\ to upper bound $\mu$, when represented in unary, (Theorem~\ref{thm:Construction}),  DS-inclusion is $\cc{PSPACE}$ in size of input weighted $\omega$-automata and $\mu$ (Theorem~\ref{thrm:RegularComplexity}).
  \end{proof}

  Not only does this result improve upon the previously known upper bound of $\cc{EXPTIME}$ but it also closes the gap between upper and lower bounds for DS-inclusion. Note, however, if the numbers are represented in binary, then the comparator-based algorithm will incur prohibitively large overhead since the size of the comparator will be exponential in $\mu$.

 We observe the algorithmic benefits of comparator-based solutions.
 The earlier known \cct{EXPTIME} upper bound in complexity is based on an exponential determinization construction (subset construction) combined with arithmetical reasoning~\cite{boker2014exact,chatterjee2010quantitative}. We observe that the determinization construction can be performed on-the-fly in $\cc{PSPACE}$. To perform, however, the arithmetical reasoning on-the-fly in \cct{PSPACE} would require essentially using the same bit-level ($(x,c)$-state) techniques that we have used to construct DS-comparator. This point is corroborated in empirical evaluations where comparator-based approach comprehensively outperforms the determinization-based approach~\cite{bansal2018automata}. The performance of comparator-based approach has further been improved using additional language-theoretic properties of DS comparators, namely their safety and co-safety properties~\cite{bansal2019safety}.

\section{Limit-average comparator}%
\label{Sec:LA}
The limit-average of an infinite sequence $M$ is the point of convergence of the average of prefixes of $M$.
Let $\Sum{M[0,n-1]}$ denote the sum of the $n$-length prefix of sequence $M$.
The \emph{limit-average infimum}, denoted by $\LAInf{M}$, is defined as $\liminf{n}{\infty}{\frac{1}{n}\cdot \Sum{M[0,n-1]}}$. Similarly, the \emph{limit-average supremum}, denoted by $\LASup{M}$, is defined as $\limsup{n}{\infty}{\frac{1}{n}\cdot \Sum{M[0,n-1]}}$.
The limit-average of sequence $M$, denoted by $\LA{M}$, is defined \emph{only if} the limit-average infimum and limit-average supremum coincide, and then $\LA{M} = \LAInf{M}$ ($=\LASup{M}$).  Note that while limit-average infimum and supremum exist for all bounded sequences, the limit-average may not.
To work around this limitation of limit-average, most applications simply use limit-average infimum or limit-average supremum of sequences~\cite{brim2011faster,chatterjee2010quantitative,chatterjee2005mean,zwick1996complexity}.
However, the usage of limit-average infimum or limit-average supremum in lieu of limit-average for purpose of comparison can be misleading.
For example, consider sequence $A$ s.t. $\LASup{A} = 2$ and $\LAInf{A} = 0$, and sequence $B$ s.t. $\LA{B} = 1$. Clearly,  limit-average of $A$ does not exist. So while it is true that $ \LAInf{A} < \LAInf{B}$, indicating that at infinitely many indices the average of prefixes of $A$ is lower, this renders an incomplete picture since at infinitley many indices, the average of prefixes of $B$ is greater as $\LASup{A} = 2 $.

Such inaccuracies in limit-average comparison may occur when the limit-average of at least one sequence does not exist. However, it is not easy to distinguish sequences for which limit-average exists from those for which it doesn't.

We define \emph{prefix-average comparison} as a relaxation of limit-average comparison.
Prefix-average comparison coincides with limit-average comparison when limit-average exists for both sequences.
Otherwise, it determines whether eventually the average of prefixes of one sequence are greater than those of the other.
This comparison does not require the limit-average to exist to return intuitive results.
Further, we show that the \emph{prefix-average comparator} is $\omega$-context-free.

\subsection{Limit-average language and comparison}%
\label{Sec:LALanguage}

Let $\Sigma = \{0,1,\dots, \mu\}$ be a finite alphabet with $\mu>0$. The \emph{limit-average language} $\L_{LA}$ contains the sequence (word) $A \in \Sigma^{\omega}$ iff its limit-average exists. We begin with the intuition to why limit-average language is neither $\omega$-regular nor $\omega$-context free. The formal argument is available in Theorem~\ref{Lemma:LARegularNotExist}.

Suppose $\L_{LA}$ were $\omega$-regular, then $\L_{LA} = \bigcup_{i=0}^n U_i\cdot V_i^{\omega}$, where $U_i, V_i\subseteq \Sigma^*$ are regular languages over \emph{finite} words.
The limit-average of sequences is determined by its behavior in the limit, so limit-average of sequences in $V_i^{\omega}$  exists.
Additionally, the average of all (finite) words in $V_i$ must be the same.
If this were not the case, then two words in $V_i$ with unequal averages $l_1$ and $l_2$, can generate a word $w \in V_i^{\omega}$ s.t~the average of its prefixes oscillates between $l_1$ and $l_2$. This cannot occur, since limit-average of $w$ exists.
Let the average of sequences in $V_i$ be $a_i$, then limit-average of  sequences in $V_i^{\omega}$ and  $U_i \cdot V_i^{\omega}$ is also $a_i$.
This is contradictory since
there are sequences with limit-average different from   $a_i$.
Similarly, since every $\omega$-CFL is  represented by $\bigcup_{i=1}^n U_i \cdot V_i^{\omega}$ for CFLs $U_i, V_i$ over finite words~\cite{cohen1977theory}, a similar argument proves that $\L_{LA}$ is not $\omega$-context-free.

\begin{thm}%
\label{Lemma:LARegularNotExist}
$\L_{LA}$ is neither an $\omega$-regular nor an $\omega$-context-free language.
\end{thm}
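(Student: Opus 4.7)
The strategy is to leverage the classical structural decomposition theorem: every $\omega$-regular language $L$ can be written as $L = \bigcup_{i=1}^n U_i \cdot V_i^\omega$ for regular languages of finite words $U_i, V_i$, and every $\omega$-context-free language admits the same form with $U_i, V_i$ context-free~\cite{cohen1977theory}. Supposing for contradiction that $\L_{LA}$ has such a decomposition, I will derive two incompatible consequences: (a) the set of limit-averages achieved by words in $\L_{LA}$ must be finite, yet (b) infinitely many distinct limit-average values are realized by words in $\L_{LA}$.

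For step (a) I would first show that for each $i$, all finite words in $V_i$ share a common average $a_i$. Otherwise, pick $u, v \in V_i$ with averages $a \neq b$ and construct the $\omega$-word $u^{k_1} v^{m_1} u^{k_2} v^{m_2} \cdots \in V_i^\omega$, choosing the block multiplicities $k_j, m_j$ inductively to grow so fast that each new $u$-block drowns out the bounded-by-$\mu$ preceding prefix, forcing the prefix-average at the end of that block to lie within $\epsilon_j$ of $a$; and symmetrically each $v$-block forces the prefix-average to within $\epsilon_j$ of $b$. Since $\epsilon_j \to 0$, the resulting word has $\LAInf{\cdot} \leq \min(a,b) < \max(a,b) \leq \LASup{\cdot}$, so its limit-average does not exist, yet the word lies in $V_i^\omega \subseteq \L_{LA}$, a contradiction. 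Once all of $V_i$ has average $a_i$, the fact that limit-average is invariant under prepending a finite prefix gives that every word in $U_i \cdot V_i^\omega$ has limit-average $a_i$, so $\{\LA{w} : w \in \L_{LA}\}\subseteq \{a_1,\dots,a_n\}$, a finite set.

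For step (b), I would exhibit infinitely many distinct limit-averages realized in $\L_{LA}$. For any rational $a^\star = p/q$ with $0 \leq p \leq q\mu$, pick any $q$-letter block over $\{0,\dots,\mu\}$ with arithmetic mean $a^\star$, and make it periodic; the resulting $\omega$-word has limit-average exactly $a^\star$ and lies in $\L_{LA}$. Since there are infinitely many such rationals but only finitely many $a_i$, picking any $a^\star \notin \{a_1,\dots,a_n\}$ contradicts step (a). The same argument applies verbatim to the $\omega$-context-free case, because both the oscillating construction and the periodic construction only use concatenation and Kleene iteration, operations under which CFLs are closed.

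The main obstacle is making the oscillation construction in step (a) rigorous: I must guarantee that the prefix-averages really possess two distinct accumulation points rather than equilibrating to some weighted mean of $a$ and $b$. The explicit bound follows from computing that a prefix of the form $\pi \cdot u^{k}$, where $\pi$ is bounded and each letter is at most $\mu$, has average $(\mathrm{Sum}(\pi) + k|u|a)/(|\pi|+k|u|)$, which tends to $a$ as $k \to \infty$; choosing $k_j, m_j$ to satisfy explicit thresholds depending on the current prefix length and $\mu$ forces the desired oscillation. Everything else is routine bookkeeping.
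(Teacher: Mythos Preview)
Your proposal is correct and follows essentially the same route as the paper: use the finite union decomposition $\bigcup_i U_i V_i^\omega$, argue via an oscillating concatenation that each $V_i$ must consist of words with a common average $a_i$, deduce that only finitely many limit-average values are realized, and contradict this with an explicit periodic word of a fresh rational average; the $\omega$-context-free case is identical. One small slip to fix: you write ``the word lies in $V_i^\omega \subseteq \L_{LA}$'', but only $U_i V_i^\omega \subseteq \L_{LA}$ is given, so prepend some $u \in U_i$ to your oscillating word (this does not affect existence of the limit-average).
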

\begin{proof}

We first prove that {$\L_{LA}$ is not $\omega$-regular}.

	Let us assume that the language $\L_{LA}$ is $\omega$-regular. Then there exists a finite number $n$ s.t. $\L_{LA} = \bigcup_{i=0}^n U_i \cdot V_i^{\omega}$, where $U_i$ and $V_i \in \Sigma^{*}$ are regular languages over finite words.

	 For all $i \in \{0,1,\dots n\}$, the limit-average of any word in $U_i\cdot V_i^{\omega}$ is given by the suffix of the word in $V_i^{\omega}$. Since $U_i\cdot V_i^{\omega} \subseteq \L_{LA}$, limit-average exists for all words in $U_i \cdot V_i^{\omega}$. Therefore, limit-average of all words in $V_i^{\omega}$ must exist. As discussed above, we conclude that the average of all words in $V_i$ must be the same. Furthermore, we know that the limit-average of all words in $V_i^{\omega}$ must be the same, say $\LA{w} = a_i$ for all $w \in V_i^{\omega}$.

	 Then the limit-average of all words in $\L_{LA}$ is one of $a_0, a_1\dots a_n$. Let $a = \frac{p}{q}$ s.t $p<q$, and $a \neq a_i$ for $i \in \{0,1,\dots,\mu\}$. Consider the word $w = (1^{p}0^{q-p})^{\omega}$. It is easy to see that $\LA{w} = a$. However, this word is not present in $\L_{LA}$ since the limit-average of all words in $\L_{LA}$ is equal to $a_0$ or $a_1$ \dots or $a_n$.

	 Therefore, our assumption that $\L_{LA}$ is an $\omega$-regular language has been contradicted.

Next we prove that $\L_{LA}$ is not an $\omega$-CFL\@.

Every $\omega$-context-free language can be written in the form of $\bigcup_{i=0}^n U_i \cdot V_i^{\omega}$ where $U_i$ and $V_i $ are context-free languages over finite words.
	The rest of this proof is similar to the proof for non-$\omega$-regularity of $\L_{LA}$.
\end{proof}

In the next section, we will define \emph{prefix-average comparison} as a relaxation of limit-average comparison. To show how prefix-average comparison relates to limit-average comparison, we will require the following two lemmas:
Quantifiers  $\exists^{\infty}i$ and $\exists^{f}i$ denote the existence of \emph{infinitely} many and \emph{only finitely} many indices $i$, respectively.

\begin{lem}%
\label{Lemma:LAExistsProp}
Let $A$, $B$ be sequences s.t their limit-average exists.
If $\LA{A} >  \LA{B}$ then $\ExistFin{i}{B}{A}$.

\end{lem}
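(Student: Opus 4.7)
The plan is to proceed by a direct $\epsilon$-argument on the averages, essentially repeating the reasoning already carried out in the proof of Lemma~\ref{Lemma:LAExistFull}. Set $a = \LA{A}$ and $b = \LA{B}$, and let $k = a - b > 0$. Because the limit-averages of $A$ and $B$ both exist, $\frac{1}{n}\Sum{A[0,n-1]}$ and $\frac{1}{n}\Sum{B[0,n-1]}$ converge to $a$ and $b$, respectively.

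Next I would invoke convergence with $\epsilon = k/4$. This yields an index $N$ such that for every $n > N$, both averages sit within $k/4$ of their limits. A triangle inequality then forces
\[
\frac{1}{n}\Sum{A[0,n-1]} - \frac{1}{n}\Sum{B[0,n-1]} \;>\; \frac{k}{2}
\]
for every $n > N$. Multiplying through by the positive quantity $n$, I conclude that $\Sum{A[0,n-1]} - \Sum{B[0,n-1]} > \frac{nk}{2} > 0$ for all $n > N$.

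From this single estimate both conclusions of the lemma drop out immediately. Since $\Sum{A[0,n-1]} > \Sum{B[0,n-1]}$ holds for \emph{every} $n > N$, in particular the statement $\ExistInf{i}{A}{B}$ is witnessed by the tail of indices past $N$. Conversely, any index $i$ at which $\Sum{B[0,i-1]} \geq \Sum{A[0,i-1]}$ must satisfy $i \leq N$, so there are at most $N$ such indices, giving $\ExistFin{i}{B}{A}$.

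There is no real obstacle: the only subtlety is the choice of $\epsilon$ as a strict fraction of $k$ (e.g.~$k/4$ rather than $k/2$) so that the triangle inequality leaves a definite gap of $k/2$ between the two averages; otherwise, the argument is a routine limit manipulation and is in fact a restatement of the second half of the proof of Lemma~\ref{Lemma:LAExistFull}, which could alternatively be cited verbatim.
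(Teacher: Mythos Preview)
Your direct $\epsilon=k/4$ argument is correct and is exactly the reasoning the paper uses; as you yourself observe, the computation you carry out is already present verbatim in the paper's printed proof of Lemma~\ref{Lemma:LAExistFull}. In fact the paper has evidently swapped the two proofs: the text appearing beneath Lemma~\ref{Lemma:LAExistsProp} is a contrapositive argument deriving $\LA{A}\geq\LA{B}$ from the prefix-sum hypothesis (i.e.\ it proves Lemma~\ref{Lemma:LAExistFull}), while the direct argument you give---which the paper places under Lemma~\ref{Lemma:LAExistFull}---is the intended proof of the present lemma.
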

\begin{proof}

	Let the limit-average of sequences $A$, $B$ be $L_a$, $L_b$ respectively. Since the limit average of both $A$ and $B$ exists, for every $\epsilon >0$, there exists $N_{\epsilon}$ s.t.\ for all $n> N_{\epsilon}$, $|\Av{A[1,n]} - L_a| < \epsilon$ and $|\Av{B[1,n]} - L_b| < \epsilon$.

    Let $ L_a - L_b = k>0$. Let us take $\epsilon = \frac{k}{4}$. Then, for all $n > N_{\frac{k}{4}}$, we get that  	$\Av{A[0,n-1]} - \Av{B[0,n-1]} > \frac{k}{2}$, since $L_a - L_b = k$ and $|\Av{A[1,n]} - L_a| < \frac{k}{4}$ and $|\Av{B[1,n]} - L_b| < \frac{k}{4}$. Thus, for all $n > N_{\frac{k}{4}}$, we get that $\Sum{A[0,n-1]} > \Sum{B[0,n-1]}$.
    In particular, we get that $\ExistFin{i}{B}{A}$.
\end{proof}
The implication does not hold the other way since for sequences $A$ and $B$ with equal limit-average it is possible that $\exists^{\infty} i, \Sum{A[0,n-1]} > \Sum{B[0,n-1]}$ and $\exists^{\infty} i, \Sum{B[0,n-1]} > \Sum{A[0,n-1]}$.

\begin{lem}%
	\label{Lemma:LAExistFull}
Let $A$ and $B$ be sequences s.t.~their limit average exists.
If $\ExistFin{i}{B}{A}$, then $\LA{A} \geq \LA{B}$.

\end{lem}
\begin{proof}

We prove by contradiction.

Suppose, $\LA{A} < \LA{B}$. Then, from Lemma~\ref{Lemma:LAExistsProp}, we know that \[\ExistFin{i}{A}{B}.\]
But, $\ExistFin{i}{A}{B}$ and $\ExistFin{i}{B}{A}$ cannot hold together since the sequences $A$ and $B$ are of infinite length. Hence, contradiction.
\end{proof}

\subsection{Prefix-average comparison and comparator}%
\label{Sec:LAClassificationAll}

The previous section relates limit-average comparison with the sums of equal length prefixes of the sequences (Lemma~\ref{Lemma:LAExistsProp}-\ref{Lemma:LAExistFull}). The comparison criterion is based on the number of times sum of prefix of one sequence is greater than the other, which does not rely on the existence of limit-average.
Unfortunately, this comparison criterion is not necessary and sufficient for limit-average comparison due to the one-way implication of Lemma~\ref{Lemma:LAExistsProp}.
Instead, we use this criteria to define \emph{prefix-average comparison}. In this section, we define prefix-average  comparison and explain how it relaxes limit-average comparison. Lastly, we construct the prefix-average comparator, and prove that it is not $\omega$-regular but is $\omega$-context-free.

\begin{defi}[Prefix-average comparison]%
\label{Def:PrefixAverage}
Let $A$ and $B$ be number sequences. We say $\PLA{A} \succeq \PLA{B}$ if
$\ExistFin{i}{B}{A}$.
\end{defi}
Note, the definition implies that
$\ExistInf{i}{A}{B}$.

Intuitively, prefix-average comparison states that $\PLA{A}\succeq \PLA{B}$ if eventually the sum of prefixes of $A$ are always greater than those of $B$. We use $\succeq$ since the average of prefixes may be equal when the difference between the sum converges to 0.
 Definition~\ref{Def:PrefixAverage} and  Lemma~\ref{Lemma:LAExistsProp}-\ref{Lemma:LAExistFull} relate  limit-average comparison  and prefix-average comparison:

\begin{cor}
When limit-average of $A$ and $B$ exists, then%
\label{Coro:PLAtoLA}

\begin{itemize}
	\item $\PLA{A}\succeq\PLA{B} \implies \LA{A}\geq \LA{B}$.
	\item $\LA{A}>\LA{B} \implies \PLA{A}\succeq \PLA{B}$.
\end{itemize}
\end{cor}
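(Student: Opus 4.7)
The plan is to observe that both implications follow essentially by unpacking Definition~\ref{Def:PrefixAverage} and invoking the two lemmas of \textsection~\ref{Sec:LALanguage} that have already been proved. No additional machinery is needed; the main (and only) task is to line up the quantified statements correctly.

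For the second bullet, the argument is immediate. Assume limit-averages of $A$ and $B$ exist and $\LA{A} > \LA{B}$. Then Lemma~\ref{Lemma:LAExistsProp} yields directly that $\ExistFin{i}{B}{A}$ and $\ExistInf{i}{A}{B}$, which is exactly the condition defining $\PLA{A} \geq \PLA{B}$ in Definition~\ref{Def:PrefixAverage}.

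For the first bullet, suppose $\PLA{A} \geq \PLA{B}$, so by Definition~\ref{Def:PrefixAverage} we have $\ExistInf{i}{A}{B}$, i.e., infinitely many $i$ with $\Sum{A[0,i-1]} > \Sum{B[0,i-1]}$. In particular, infinitely many $i$ satisfy $\Sum{A[0,i-1]} \geq \Sum{B[0,i-1]}$. Since the limit-averages of $A$ and $B$ are assumed to exist, Lemma~\ref{Lemma:LAExistFull} applies and gives $\LA{A} \geq \LA{B}$, as required.

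The only subtlety worth flagging is the mild mismatch between the strict inequality appearing in the definition of prefix-average comparison ($\Sum{A[0,i-1]} > \Sum{B[0,i-1]}$ infinitely often) and the non-strict inequality in the hypothesis of Lemma~\ref{Lemma:LAExistFull} ($\Sum{A[0,i-1]} \geq \Sum{B[0,i-1]}$ infinitely often); but this is a one-line weakening. Since nothing else in the statement requires new work, I expect the corollary's proof to be a short paragraph pointing to Lemmas~\ref{Lemma:LAExistFull} and~\ref{Lemma:LAExistsProp}, with Definition~\ref{Def:PrefixAverage} providing the bridge in both directions.
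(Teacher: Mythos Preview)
Your proposal is correct and matches the paper's approach: both bullets are obtained from Definition~\ref{Def:PrefixAverage} together with Lemmas~\ref{Lemma:LAExistFull} and~\ref{Lemma:LAExistsProp}. The only cosmetic difference is that for the second bullet the paper re-runs the $\epsilon = k/4$ argument inline rather than citing Lemma~\ref{Lemma:LAExistsProp} directly as you do, and for the first it simply says the item ``falls directly from definitions'' where you explicitly invoke Lemma~\ref{Lemma:LAExistFull} after the strict-to-non-strict weakening.
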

\begin{proof}
	The first item falls directly from Definition~\ref{Def:PrefixAverage} and  Lemma~\ref{Lemma:LAExistFull}.
	The second item falls directly from
	 Definition~\ref{Def:PrefixAverage} and  Lemma~\ref{Lemma:LAExistsProp}.
\end{proof}

Therefore, on sequences for which  limit-average exists, the first bullet says that prefix-average comparison returns the same result as limit-average comparison. In addition, when limit-average may not exist, the prefix-average comparison can return intuitive results. For example,
suppose limit-average of $A$ and $B$ do not exist, but $\LAInf{A} > \LASup{B}$, then $\PLA{A} \succeq \PLA{B}$.
This way, prefix-average comparison relaxes limit-average comparison.

The rest of this section describes \emph{prefix-average comparator}, denoted by $\A_{\PASuc}^\succeq$, an automaton that accepts the pair $(A,B)$  of sequences iff $\PLA{A}\succeq \PLA{B}$.
\begin{lem}%
	\label{Lemma:PumpingLemmaRegular}
	\textbf{(Pumping Lemma for $\omega$-regular language~\cite{alur2009omega})}
	Let $L$ be an $\omega$-regular language. There exists $p \in \mathbb{N}$
	such that, for each $w = u_1w_1u_2w_2 \dots \in L$ such that $|w_i| \geq p$ for all $i$, there are
	sequences of finite words $(x_i)_{i\in \N}$, $(y_i)_{i\in \N}$, $(z_i)_{i\in \N}$ s.t., for all $i$, $w_i = x_{i}y_{i}z_i$, $|x_{i}y_i| \leq p$ and $|y_i| > 0$ and for every sequence of pumping factors $(j_i)_{i\in \N} \in \N$, the pumped word $u_1 x_1 y_1^{j_1}z_1 u_2 x_2 y_2^{j_2}z_2 \dots \in L$.
\end{lem}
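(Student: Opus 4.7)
The approach is to translate $L$ into a nondeterministic B\"uchi automaton and apply the pigeonhole principle separately inside each block $w_i$, then show that the obvious run on the pumped word still satisfies the B\"uchi condition.

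First, since $L$ is $\omega$-regular, I would fix a B\"uchi automaton $\A = (\Statess, \Sigma, \delta, \StartState, \Final)$ with $n = |\Statess|$ states and $\L(\A) = L$, and set the pumping constant $p = n$. Given any $w = u_1 w_1 u_2 w_2 \dots \in L$ with $|w_i| \geq p$ for all $i$, fix an accepting run $\rho$ on $w$, and let $q_i$ be the state of $\rho$ just before reading $w_i$. The sub-run of $\rho$ on $w_i$ traverses states $q_i = t_{i,0}, t_{i,1}, \dots, t_{i,|w_i|}$. Applying pigeonhole to the prefix $t_{i,0}, \dots, t_{i,p}$ (which has $p+1 > n$ entries) yields indices $0 \leq a_i < b_i \leq p$ with $t_{i,a_i} = t_{i,b_i}$. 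Setting $x_i = w_i[1..a_i]$, $y_i = w_i[a_i+1..b_i]$, and $z_i = w_i[b_i+1..|w_i|]$ immediately gives $|x_i y_i| \leq p$ and $|y_i| > 0$.

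Next, for any pumping sequence $(j_i)_{i \in \N}$, the plan is to build a run $\rho'$ of $\A$ on the pumped word $u_1 x_1 y_1^{j_1} z_1 u_2 x_2 y_2^{j_2} z_2 \dots$ by copying $\rho$ on every $u_i$, $x_i$, and $z_i$ segment, and by inserting $j_i$ consecutive traversals of the cycle $t_{i,a_i} \to \dots \to t_{i,b_i} = t_{i,a_i}$ in place of the single traversal of $y_i$. Boundary states align by construction, so $\rho'$ is a valid run of $\A$.

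The main obstacle is showing that $\rho'$ satisfies the B\"uchi condition. If infinitely many accepting visits of $\rho$ occur outside the $y_i$ segments --- that is, in some $u_i$, $x_i$, or $z_i$ --- those visits are preserved in $\rho'$ regardless of $(j_i)$, and acceptance follows. Otherwise all but finitely many accepting visits lie strictly inside some $y_i$ segment; in that case I would refine the choice of $(a_i, b_i)$ so that, for infinitely many $i$, the cycle itself traverses a state in $\Final$. This refinement is possible because whenever the run on $w_i$ visits an accepting state within its first $n+1$ positions, one can apply pigeonhole among $t_{i,0}, \dots, t_{i,n}$ so that the chosen repetition straddles the accepting visit. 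Pumping any such cycle then preserves infinitely many accepting visits in $\rho'$, completing the argument.
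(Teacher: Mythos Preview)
The paper does not prove this lemma at all: it is quoted verbatim as a known result from~\cite{alur2009omega} and then applied in the proof of Theorem~\ref{Lemma:LANotRegular}. So there is no ``paper's own proof'' to compare against; your write-up would stand on its own.

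Your approach is the standard one and the decomposition $w_i = x_iy_iz_i$ via pigeonhole on the first $p+1$ states is exactly right. The one place that is not yet airtight is your treatment of the B\"uchi condition in the ``otherwise'' case. You arrange that, for infinitely many~$i$, the cycle on $y_i$ \emph{straddles} an accepting visit, and conclude that ``pumping any such cycle then preserves infinitely many accepting visits.'' That inference is fine when $j_i \geq 1$, but the lemma quantifies over \emph{every} sequence $(j_i)_{i\in\N}$, and if $0\in\N$ one must allow $j_i = 0$; then the cycle (together with the straddled accepting visit) is deleted entirely. As written, nothing guarantees that $\rho'$ still sees $\Final$ infinitely often in that case. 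The clean fix is to sharpen the refinement: whenever the sub-run $t_{i,0},\dots,t_{i,p}$ contains an accepting state, choose the repetition so that the repeated state $t_{i,a_i}=t_{i,b_i}$ is itself accepting if some accepting state occurs twice, and otherwise place the (unique) accepting occurrence in the $x_i$ or $z_i$ portion rather than strictly inside $y_i$. Either way the accepting visit survives deletion of $y_i$. If instead you are taking $\N=\{1,2,\dots\}$ here, say so explicitly, since that is all the paper's application actually needs.
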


\begin{thm}%
\label{Lemma:LANotRegular}
The prefix-average comparator is not $\omega$-regular.
\end{thm}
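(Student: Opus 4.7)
The plan is to derive a contradiction from the Pumping Lemma for $\omega$-regular languages (Lemma~\ref{Lemma:PumpingLemmaRegular}). Suppose toward contradiction that $\L(\A_{\PASuc}^\geq)$ is $\omega$-regular and let $p$ be the associated pumping constant. I will exhibit a single word $w$ in the language together with a decomposition $w = u_1 w_1 u_2 w_2 \cdots$ satisfying $|w_i|\geq p$, such that no matter which factorization $w_i = x_i y_i z_i$ the lemma supplies (with $|y_i|\geq 1$ and $|x_i y_i|\leq p$), some choice of pumping exponents $(j_i)_{i\in \N}$ yields a pumped word outside $\L(\A_{\PASuc}^\geq)$.

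Over the alphabet $\{0,1\}\times\{0,1\}$, define ``blocks'' $B_k = (1,0)^{k}(0,1)^{k-1}$ and take the witness to be $w = B_1 B_2 B_3 \cdots$. Writing $A, B$ for the two component sequences and $s(n) := \Sum{A[0,n-1]} - \Sum{B[0,n-1]}$ for the running surplus, one verifies that each $B_k$ contributes exactly $+1$ to the surplus and that inside $B_k$ the surplus stays in the interval $[k-1,\,2k-1]$. Hence $s(n)\geq 1$ for every $n\geq 1$, so both $\ExistFin{n}{B}{A}$ and $\ExistInf{n}{A}{B}$ hold, i.e., $(A,B)\in \L(\A_{\PASuc}^\geq)$ by Definition~\ref{Def:PrefixAverage}.

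Given $p$, choose $k_i = p+i$ and let $w_i$ be the $(0,1)^{k_i-1}$ tail of the block $B_{k_i}$, absorbing everything else into the $u_i$'s. Then $|w_i| = p+i-1 \geq p$, as required. The crucial observation is that $w_i$ is a word over the single letter $(0,1)$, so any admissible pumping factorization of $w_i$ must have $y_i = (0,1)^{\ell_i}$ with $1\leq \ell_i\leq p$. For each $i$, pick $m_i$ so that $\ell_i(m_i - 1) \geq k_i$: pumping then inflates the $(0,1)$-tail of $B_{k_i}$ to length at least $2k_i - 1$, which drives the surplus down from its peak value $\leq 2k_i - 1$ (attained right after the initial $(1,0)^{k_i}$) all the way to a value $\leq 0$ by the end of the inflated tail. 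Doing this for every $i$ produces a pumped word for which $s(n) \leq 0$ at infinitely many $n$, violating $\ExistFin{n}{B}{A}$ and thus contradicting membership in $\L(\A_{\PASuc}^\geq)$.

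The delicate point I anticipate is the bookkeeping when pumping occurs in many blocks at once: losses from earlier pumped blocks shift the baseline surplus downward for later blocks. Fortunately this only strengthens the argument, since the peak surplus entering the tail of $B_{k_i}$ is $2k_i - 1$ in the no-loss case and strictly smaller under accumulated earlier losses, so the local condition $\ell_i(m_i - 1) \geq k_i$ suffices block by block, independently of what happens elsewhere. The whole proof therefore reduces to the clean local estimate above combined with the pumping lemma.
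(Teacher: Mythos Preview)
Your proof is correct and, like the paper's, proceeds via the pumping lemma (Lemma~\ref{Lemma:PumpingLemmaRegular}) by pumping blocks of the letter $(0,1)$ so as to force the running surplus $s(n)=\Sum{A[0,n-1]}-\Sum{B[0,n-1]}$ to be $\leq 0$ infinitely often, violating $\ExistFin{i}{B}{A}$. The difference lies only in the choice of witness word. The paper takes the \emph{periodic} word $w=((0,1)^p(1,0)^{2p})^\omega$: here $\LA{A}=\tfrac{2}{3}>\tfrac{1}{3}=\LA{B}$, so membership in $\L(\A_{\PASuc}^\geq)$ is immediate from Corollary~\ref{Coro:PLAtoLA}, and after pumping each $(0,1)^p$ factor to length exceeding $4p$ every period contributes a net surplus loss of more than $2p$, so $s(n)\to-\infty$ with no inter-block bookkeeping required. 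Your non-periodic growing-block witness $B_1B_2B_3\cdots$ also works, but it obliges you to track the surplus by hand both to verify membership and to propagate the accumulated losses across pumped blocks (the ``delicate point'' you address in the final paragraph). The paper's periodic choice buys a shorter argument at no cost; conversely, your construction shows the result does not hinge on the existence of limit-averages for the witness.
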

\begin{proof}[Proof Sketch]
We use Lemma~\ref{Lemma:PumpingLemmaRegular} to prove that $\A_{\PASuc}^\succeq$ is not $\omega$-regular.
Suppose $\A_{\PASuc}^\succeq$ were $\omega$-regular.
For $p >0 \in \N$, let $w = (A,B) = ((0,1)^p(1,0)^{2p})^{\omega}$.
The segment $(0,1)^*$ can be pumped s.t the resulting word is no longer in $\A_{\PASuc}^\succeq$.

Concretely, $A = (0^p 1^{2p})^{\omega}$, $B = (1^p0^{2p})^{\omega}$, $\LA{A} = \frac{2}{3}$, $\LA{B} = \frac{1}{3}$.
So, $w = (A,B) \in \A_{\PASuc}^\geq$.
Select as factor $w_i$ (from Lemma~\ref{Lemma:PumpingLemmaRegular}) the sequence $(0,1)^p$.
Pump each $y_i$ enough times so that the resulting word is $\hat{w} = (\hat{A}, \hat{B}) = ((0,1)^{m_i}(1,0)^{2p})^{\omega}$  where $m_i>4p$.
It is easy to show that $\hat{w} = (\hat{A}, \hat{B})\notin \A_{\PASuc}^\succeq$.
\end{proof}
We discuss key ideas and sketch the construction of the prefix average comparator. 
The term \emph{prefix-sum difference at $i$} indicates $\Sum{A[0,i-1]} - \Sum{B[0,i-1]}$, i.e.\ the difference between sum of $i$-length prefix of $A$ and $B$.

\paragraph{Key ideas}
For sequences $A$ and $B$ to satisfy $\PLA{A}\succeq\PLA{ B}$, $\ExistFin{i}{B}{A}$.
This implies there exists an index $N$ s.t.\ for all indices $i>N$, $\Sum{A[0,i-1]} - \Sum{B[0,i-1]}>0$.
While reading a word, the prefix-sum difference is maintained by states and the stack of $\omega$-PDA:\@ states maintain whether it is negative or positive,
while number of tokens in the stack equals its absolute value.
The automaton non-deterministically guesses the aforementioned index $N$,
beyond which the automaton ensures that prefix-sum difference remains positive.

\paragraph{Construction sketch}

The push-down comparator $\A_{\PASuc}^\succeq$ consists of three states: (i) State $s_P$ and (ii) State $s_N$ that indicate that the prefix-sum difference is greater than zero or not respectively, (iii)  accepting state $s_F$.
An execution of $(A,B)$ begins in state $s_N$ with an empty stack. On reading letter $(a,b)$, the stack pops or pushes $|(a-b)|$ tokens from the stack depending on the current state of the execution. From state $s_P$, the stack pushes tokens if $(a-b)>0$, and pops otherwise. The opposite occurs in state $s_N$.
State transition between $s_N$ and $s_P$ occurs only if the stack action is to pop but the stack consists of $k < |a-b|$ tokens. In this case, stack is emptied, state transition is performed and $|a-b|-k$ tokens are pushed into the stack.
For an execution of $(A,B)$ to be an accepting run, the automaton non-deterministically transitions into state $s_F$.
State $s_F$ acts similar to state $s_P$ except that execution is terminated if there aren't enough tokens to pop out of the stack.
$\A_{\PASuc}^\succeq$ accepts by accepting state.

To see why the construction is correct, it is sufficient to prove that at each index $i$, the number of tokens in the stack is equal to $|\Sum{A[0,i-1]} - \Sum{B[0,i-1]}|$. Furthermore, in state $s_N$, $\Sum{A[0,i-1]} - \Sum{B[0,i-1]}\leq 0$, and in state $s_P$ and $s_F$, $\Sum{A[0,i-1]} - \Sum{B[0,i-1]}>0$.
Next, the index at which the automaton transitions to the accepting state $s_F$ coincides with index $N$. The execution is accepted if it has an infinite execution in state $s_F$, which allows transitions only if $\Sum{A[0,i-1]} - \Sum{B[0,i-1]} > 0$.

\paragraph{Construction}
We provide a sketch of the construction of the B\"uchi push-down autoamaton $\A_{\PASuc}^\geq$, and then prove that it corresponds to the prefix average comparator.

Let $\mu$ be the bound on sequences. Then  $\Sigma = \{0,1,\dots, n\}$ is the alphabet of sequences.
Let $ \A_{\PASuc}^\succeq = (\Statess, \Sigma\times\Sigma, \Gamma, \delta, s_0, Z_0)$ where:

\begin{itemize}
	\item $\Statess = \{s_N, s_P, s_F\}$ is the set of states of the automaton.

	\item $\Sigma\times\Sigma$ is the alphabet  of the language.

	\item $\Gamma = \{Z_0, \alpha\}$ is the  alphabet.

	\item $s_0 = s_N$ is the start state of the  automata.

	\item $Z_0$ is the start symbol of the stack.

	\item $s_F$ is the accepting state of the automaton. Automaton $\A_{\PASuc}^\geq$ accepts words by final state.

	\item Here we give  a sketch of the behavior of the transition function $\delta$.

	\begin{itemize}
		\item When $\A_{\PASuc}^\succeq$ is in configuration $(s_P, \tau)$ for $\tau \in \Gamma$, push $a$ number of $\alpha$-s into the stack.

		Next, pop $b$ number of $\alpha$-s. If after popping $k$ $\alpha$-s where $k<b$, the PDA's configuration becomes $(s_P, Z_0)$, then first move to state $(s_N, Z_0)$ and then resume with pushing $ b-k$ $\alpha$-s into the stack.

		\item When $\A_{\PASuc}^\succeq$ is in configuration $(s_N, \tau)$ for $\tau \in \Gamma$, push $b$ number of $\alpha$-s into the stack

		Next, pop $a$ number of $\alpha$-s. If after popping $k$ $\alpha$-s where $k<a$, the PDA's configuration becomes $(s_N, Z_0)$, then first move to state $(s_P, Z_0)$ and then resume with pushing $ a-k$ $\alpha$-s into the stack.

		\item When $\A_{\PASuc}^\succeq$ is in configuration $(s_P, \tau)$ for $\tau \neq Z_0$, first move to configuration $(s_F, \tau)$ and then push $a$ number of $\alpha$-s and pop $b$ number of $\alpha$-s. Note that there are no provisions for popping $\alpha$ if the stack hits $Z_0$ along this transition.

		\item When $\A_{\PASuc}^\succeq$ is in configuration $(s_F, \tau)$ for $\tau \neq Z_0$, push $a$ $\alpha$-s then pop $b$ $\alpha$-s.

		Note that there are no provisions for popping $\alpha$ if the stack hits $Z_0$ along this transition.

	\end{itemize}
\end{itemize}

\begin{lem}%
	\label{Lemma:PAComparator}
	Pushdown automaton $\A_{\PASuc}^\succeq$ accepts a pair of sequences $(A,B)$ iff $\PLA{A} \succeq \PLA{B}$.
\end{lem}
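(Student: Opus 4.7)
The plan is to establish a precise correspondence between the state/stack configuration of $\A_{\PASuc}^\geq$ and the prefix-sum difference $D(i) := \Sum{A[0,i-1]} - \Sum{B[0,i-1]}$, then translate Büchi acceptance into the prefix-average condition of Definition~\ref{Def:PrefixAverage}.

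First I would prove a stack-state invariant by induction on $i$: after reading the first $i$ letters of $(A,B)$, any run of $\A_{\PASuc}^\geq$ reaches a configuration whose stack contains exactly $|D(i)|$ copies of $\alpha$ on top of $Z_0$, and whose state $s$ satisfies: $s = s_N$ iff $D(i) \leq 0$, while $s \in \{s_P, s_F\}$ iff $D(i) > 0$. The inductive step is a case analysis on the four clauses of $\delta$: in $s_P$ and $s_F$ one pushes $A[i]$ and pops $B[i]$ tokens, in $s_N$ one does the opposite, so the net change in token count is $\pm(A[i]-B[i])$ in the expected direction; the only subtle cases are those where the pops would go below $Z_0$ and the construction instead empties the stack, flips the state between $s_N$ and $s_P$, and pushes the remainder. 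A short calculation shows this flip exactly models the crossing of $D$ through $0$ and leaves $|D(i+1)|$ tokens in the stack with the correct state.

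For the ``if'' direction, assume $\PLA{A}\geq\PLA{B}$. By Definition~\ref{Def:PrefixAverage}, $\ExistFin{i}{B}{A}$, so there is an index $N$ such that $D(i) > 0$ for every $i \geq N$. Choose the run that remains in $\{s_N,s_P\}$ up to step $N$ and takes the nondeterministic transition from $s_P$ into $s_F$ at step $N$; this transition is enabled because the invariant places us in $s_P$ with stack top $\alpha \neq Z_0$ just before step $N$. Once in $s_F$, the invariant combined with $D(i)>0$ for $i\geq N$ ensures the pops in $s_F$ never need to consume $Z_0$, so the run is infinite and remains in $s_F$ forever, visiting it infinitely often.

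For the converse, suppose $(A,B)$ has an accepting run $\rho$. Since $s_F$ has no outgoing transition to any other state and accepting runs visit $s_F$ infinitely often, $\rho$ must enter $s_F$ at some step $N$ and stay there. For $\rho$ to be an infinite run (rather than aborting), every subsequent step must have enough tokens on the stack to perform the required pops; by the invariant, this forces $D(i) > 0$ for all $i \geq N$. Hence $\Sum{B[0,i-1]} \geq \Sum{A[0,i-1]}$ holds only for the finitely many $i < N$, while $\Sum{A[0,i-1]} > \Sum{B[0,i-1]}$ holds for all $i \geq N$, giving both clauses of $\PLA{A}\geq\PLA{B}$.

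The main obstacle is the invariant proof at the sign-flip transitions: one must check that when the pop phase underflows to $Z_0$, the construction's ``empty, flip state, push remainder'' behaviour produces exactly $|D(i+1)|$ tokens and the correct polarity, across all six combinations of current state in $\{s_N,s_P,s_F\}$, stack being at $Z_0$ or not, and the sign of $A[i]-B[i]$. Once this bookkeeping is verified, both implications reduce to reading off $D(i) > 0$ from the run's behaviour in $s_F$.
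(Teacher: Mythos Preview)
Your proposal is correct and follows essentially the same approach as the paper: the stack-state invariant you plan to prove by induction is exactly the one the paper announces in its ``Key ideas'' paragraph, and both directions of the equivalence are then read off from the behaviour of runs in $s_F$ just as you describe. Your plan is in fact more detailed than the paper's own proof, which is explicitly labelled a ``Proof sketch'' and leaves the invariant verification and the sign-flip bookkeeping implicit.
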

\begin{proof}[Proof sketch]
	To prove this statement, it is sufficient to demonstrate that $\A_{\PASuc}^\succeq$ accepts a pair of sequences $(A,B)$ iff there are only finitely many indexes where $\Sum{B[1,i]} > \Sum{A[1,i]}$. On $\A_{\PASuc}^\succeq$ this corresponds to the condition that there being only finitely many times when the PDA is in state $N$ during the run of $(A,B)$. This is ensured by the pushdown automaton since the word can be accepted only in state $F$ and there is no outgoing edge from $F$. Therefore, every word that is accepted by $\A_{\PASuc}^\succeq$ satisfies the condition $\ExistFin{i}{B}{A}$.

	Conversely, for every word $(A,B)$ that satisfies $\ExistFin{i}{B}{A}$ there is a point, call it index $k$, such that for all indexes $m>k$, $\Sum{B[1,m]} \ngeq \Sum{A[1,m]}$. If a run of $(A,B)$ switches to $F$ at this $m$, then it will be accepted by the  automaton. Since $\A_{\PASuc}^\succeq$ allows for non-deterministic move to $(F, \tau)$ from $(P, \tau)$, the run of $(A,B)$ will always be able to move to $F$ after index $m$. Hence, every $(A,B)$ satisfying $\ExistFin{i}{B}{A}$ will be accepted by $\A_{\PASuc}^\succeq$.
\end{proof}

\begin{thm}%
	\label{Thrm:PAComparator}
	The prefix-average comparator is an $\omega$-CFL\@.
\end{thm}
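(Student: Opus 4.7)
The plan is to observe that Theorem~\ref{Thrm:PAComparator} follows almost immediately from the machinery already established, and the work required is just to verify that the automaton $\A_{\PASuc}^\geq$ constructed above genuinely fits within the class of B\"uchi pushdown automata as defined in \textsection~\ref{Sec:Prelims}. The statement to prove is essentially the conjunction of two facts: (i) $\A_{\PASuc}^\geq$ is a B\"uchi PDA, and (ii) $\L(\A_{\PASuc}^\geq)$ is exactly the prefix-average comparator relation for $\geq$. Fact (ii) is Lemma~\ref{Lemma:PAComparator}. So the proof reduces to checking (i) and then invoking the definition of $\omega$-CFL.

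First I would check (i) carefully. The construction gives a finite state set $\Statess = \{s_N, s_P, s_F\}$, a finite stack alphabet $\Gamma = \{Z_0, \alpha\}$, a finite input alphabet $\Sigma\times\Sigma$, an initial state $s_N$, and an initial stack symbol $Z_0$, which matches the tuple in the definition of B\"uchi PDA. The only point that needs spelling out is that the informal transition description (``push $a$ tokens, then pop $b$ tokens, with state-change if the stack hits $Z_0$'') is really a finite transition relation over single symbol pushes/pops plus $\epsilon$-transitions: each letter $(a,b)\in\Sigma\times\Sigma$ induces a bounded number of intermediate PDA moves, so the composite behavior can be expanded into a legal transition relation $\delta\subseteq \Statess\times\Gamma\times((\Sigma\times\Sigma)\cup\{\epsilon\})\times\Statess\times\Gamma$. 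Second, I would set the accepting set to $\Final=\{s_F\}$. Because the construction has no outgoing transitions from $s_F$ other than self-loops, a run reaches $s_F$ iff $s_F$ appears infinitely often, so acceptance by final state coincides with the B\"uchi condition $\mathit{inf}(\rho)\cap\Final\neq\emptyset$.

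Having confirmed (i), I would conclude by combining it with Lemma~\ref{Lemma:PAComparator}: $\A_{\PASuc}^\geq$ is a B\"uchi PDA and $\L(\A_{\PASuc}^\geq) = \{(A,B)\mid \PLA{A}\geq\PLA{B}\}$, so the prefix-average comparator for relation $\geq$ is accepted by a B\"uchi PDA, which by definition makes it an $\omega$-context-free language.

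The only potentially delicate step is the bookkeeping in (i), namely certifying that the informal ``push then pop'' description really expands to a legitimate finite transition relation and that the B\"uchi condition agrees with acceptance by final state in this particular construction; both are routine once the self-loop structure of $s_F$ is noted. Everything substantive, especially the correctness of the stack invariant that the number of $\alpha$-tokens equals $|\Sum{A[0,i-1]}-\Sum{B[0,i-1]}|$ and the identification of the non-deterministic switch to $s_F$ with the threshold index $N$ in Definition~\ref{Def:PrefixAverage}, has already been discharged in Lemma~\ref{Lemma:PAComparator}.
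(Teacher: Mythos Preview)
Your proposal is correct and matches the paper's approach: the paper gives no explicit proof of Theorem~\ref{Thrm:PAComparator}, treating it as an immediate consequence of the construction of $\A_{\PASuc}^\geq$ together with Lemma~\ref{Lemma:PAComparator}. Your additional bookkeeping (expanding the informal push/pop description into a finite transition relation, and reconciling final-state acceptance with the B\"uchi condition via the self-loop structure at $s_F$) fills in details the paper leaves implicit, but the underlying argument is the same.
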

While $\omega$-CFL can be easily expressed, they do not possess closure properties, and several problems on $\omega$-CFL are easily undecidable. Hence, the application of $\omega$-context-free comparator will require further investigation.
For example, it is unclear whether $\omega$-context free comparators can solve quantitative inclusion since  complementation of $\omega$-CFL is undecidable. Problems like membership in $\omega$-CFG are decidable. We will have to investigate applications where reductions using $\omega$-context-free comparators require decidable operations such as  membership.


\section{Concluding remarks}%
\label{Sec:Conclusion}

In this paper, we identified a novel mode for comparison in quantitative systems: the online comparison of aggregate values of sequences of quantitative weights. This notion is embodied by comparators automata that read two infinite sequences of weights synchronously and relate their aggregate values.
We showed that all $\omega$-regular aggregate functions have $\omega$-regular comparators. However, the converse direction is still open: Are functions with $\omega$-regular comparators also $\omega$-regular?
We showed that $\omega$-regular comparators not only yield generic algorithms for problems including quantitative inclusion and winning strategies in incomplete-information quantitative games, they also result in algorithmic advances~\cite{TACAS21}. We establish that when the weights are represented in unary, the discounted-sum inclusion problem is $\cc{PSAPCE}$-complete for integer discount-factor, hence closing a complexity gap.
We showed that discounted-sum aggregate function and their comparators are $\omega$-regular iff the discount-factor $d>1$ is an integer.
We showed that prefix-average comparator are $\omega$-context-free.

We believe comparators, especially $\omega$-regular comparators, can be of significant utility in verification and synthesis of quantitative systems~\cite{bansal2020automata}, as demonstrated by the existence of finite-representation of counterexamples of the quantitative inclusion problem. Another potential application is computing equilibria in quantitative games. Applications of the prefix-average comparator, in general $\omega$-context-free comparators, is open to further investigation. Another direction to pursue is to study aggregate functions in more detail, and attempt to solve the conjecture relating $\omega$-regular aggregate functions and $\omega$-regular comparators.

\section*{Acknowledgment}
  \noindent We thank the anonymous reviewers for their comments. We  thank K. Chatterjee, L. Doyen, T. Henzinger, G. A. Perez and J. F. Raskin for corrections to earlier drafts, and their contributions to this paper.
  We thank P. Ganty and R. Majumdar for preliminary discussions on the limit-average comparator.
  This work was partially supported by NSF Grant No. 1704883, ``Formal Analysis and Synthesis of Multiagent Systems with Incentives''.


\bibliographystyle{alphaurl}
\bibliography{refs.bib,myRef.bib}

\end{document}